\documentclass[12pt]{article}
\usepackage{amsmath, amsthm, amsfonts, amssymb}
\usepackage{graphicx,psfrag,epsf}
\usepackage{enumerate}
\usepackage{natbib}
\usepackage{bbm}
\usepackage{hyperref}
\usepackage{cleveref}
\usepackage{bm}
\usepackage{algorithm}
\usepackage{algpseudocode}
\usepackage{booktabs}
\usepackage{multicol}
\usepackage{multirow}
\usepackage{xcolor}
\definecolor{codexdarkgreen}{RGB}{0,100,0}
\colorlet{green}{codexdarkgreen}
\usepackage{mathtools}
\usepackage{url} 

\newcommand{\blind}{1}

\theoremstyle{plain}
\newtheorem{theorem}{Theorem}[]
\newtheorem{corollary}{Corollary}[theorem]
\newtheorem{assumption}{Assumption}[]
\newtheorem{proposition}{Proposition}%
\theoremstyle{definition}

\theoremstyle{remark}

\def\R{\mathbb{R}}

\def\bX{\bm{X}}

\def\bV{\bm{V}}
\def\bU{\bm{U}}
\def\bZ{\bm{Z}}

\def\tr{\textnormal{tr}}

\renewcommand{\P}{\mathrm{P}}
\newcommand{\E}{\mathbb{E}}

\newcommand{\ind}{\perp\!\!\!\perp}

\newcommand{\htmu}{\hat{\tilde\mu}}
\newcommand{\httau}{\hat{\tilde\tau}}

\newcommand{\cM}{\mathcal{M}}
\newcommand{\cD}{\mathcal{D}}
\newcommand{\cG}{\mathcal{G}}
\newcommand{\cH}{\mathcal{H}}

\newcommand{\Rad}{\mathfrak{R}}

\begin{document}

\def\spacingset#1{\renewcommand{\baselinestretch}%
{#1}\small\normalsize} \spacingset{1}


\if1\blind
{
  \title{\bf Improving RCT-Based CATE Estimation Under Covariate Mismatch via Double Calibration}
  \author{Samhita Pal,\thanks{
    samhita.pal@vumc.org}\hspace{.2cm}\\
    Department of Biostatistics, Vanderbilt University Medical Center\\
    Jared D. Huling \\
    Division of Biostatistics and Health Data Science, University of Minnesota\\
    and\\
    Amir Asiaee\\
    Department of Biostatistics, Vanderbilt University Medical Center\\}
  \maketitle
} \fi

\if0\blind
{
  \bigskip
  \bigskip
  \bigskip
  \begin{center}
    {\Large\bf Improving RCT-Based CATE Estimation Under Covariate Mismatch via Double Calibration}
\end{center}
  \medskip
} \fi

\bigskip
\begin{abstract}
We develop estimators that improve precision of heterogeneous treatment effect estimates that allow borrowing information from observational studies when the available covariates in each data source do not perfectly match. Standard data-borrowing methods often assume perfectly matched covariates. We propose MR-OSCAR, an RCT-calibrated, two-stage estimation approach that first predicts the trial-missing variables using the observational data via imputation and then calibrates observational outcome predictions to the randomized trial, preserving the causal contrast, unlike the results for generalization, where imputation does not improve performance. Our theory gives finite-sample guarantees with a transparent error decomposition including an imputation error that shrinks as the observational mapping becomes more predictable. Simulations show that imputation almost always outperforms naively using only the shared covariates and clarifies when borrowing helps (strong predictability of the missing block, moderate trial size) and when it does not (poor predictability or dominant trial-only moderators). We motivate the approach with the Greenlight Plus trial on early childhood obesity and outline a forthcoming EHR analysis at Vanderbilt, highlighting the use of our method in common scenarios where data do not perfectly align.
\end{abstract}

\noindent%
{\it Keywords:}  causal inference, data integration, imputation, conditional average treatment effect, randomized trials
\vfill

\newpage
\spacingset{1.45} 

{
\section{Introduction}
\label{sec:intro}

Heterogeneous treatment effects (HTEs) are central to precision prevention and care. 
Randomized controlled trials (RCTs) remain the gold standard for causal inference, 
yet most trials are powered for average effects rather than the fine-grained 
heterogeneity that guides individualized decisions 
\citep{wang2007statistics, kent2018personalized}. 
Large observational studies (OS) contain rich covariates and massive sample sizes, 
but are vulnerable to confounding and design biases. 
This asymmetry motivates borrowing information from observational data to 
improve {within-trial} precision for conditional average treatment effects (CATEs), 
provided the borrowing is conducted such that observational confounding cannot distort 
the randomized contrast. A growing methodological literature has made this idea concrete by showing how to combine trials with large observational cohorts while preserving, or even sharpening, the randomized contrast \cite{cheng2021adaptive, oberst2022understanding, raman2023optimizing}. Recent causal fusion frameworks use weighting, outcome modeling, or double-robust estimators to integrate an RCT with an external cohort under formal transportability assumptions, and theory and simulations show that, when those assumptions hold, borrowing can substantially reduce variance for HTE estimands relative to trial-only analyses \cite{asiaee2023leveraging, wu2022integrative}.

In this paper, we aim to improve the precision of {RCT-based} CATE 
estimation by borrowing outcome-prediction structure from a large observational 
database while ensuring that the causal identification remains anchored in the 
randomized contrast. Several existing approaches follow this principle. 
Representation-learning methods 
\citep{johansson2016learning, yao2018representation, hatt2022combining} use large OS 
datasets to learn prognostic scores or balanced embeddings that stabilize effect 
estimation in small RCTs. Most directly related, \cite{asiaee2023leveraging} propose 
R-OSCAR, a two-stage estimator that learns outcome models from OS data, calibrates 
them to the RCT via a discrepancy function estimated on the trial, and uses the 
calibrated predictions as variance-reducing pseudo-outcomes. Because the final CATE 
is identified solely through the randomized contrast, OS confounding does not bias the 
estimate. In parallel, \cite{karlsson2025robust} introduce the QR-learner, a 
model-agnostic learner that leverages external data for CATE estimation in the trial 
population while guaranteeing that using external data cannot worsen the risk 
relative to a trial-only learner. For a broader overview of methods that integrate 
trials and non-experimental data for HTE, see \cite{brantner2023methods}.

A major barrier to such integration is that the covariates collected in trials and 
observational sources rarely align. We refer to this as 
{covariate mismatch}: the RCT may contain behavioral or survey-based variables 
absent from the OS, while the OS may contain laboratory or utilization features not 
recorded in the trial \citep{Rassler2002}. This differs fundamentally from 
{covariate shift} \citep{sugiyama2012machine, ghosh2026demographic}, where the {same} covariates 
are observed in both sources but follow different distributions. Under mismatch, 
standard fusion strategies including regression adjustment, reweighting, 
and many recent trial-OS generalizability estimators 
\citep{ackerman2019implementing, josey2022calibration, li2022generalizing, colnet2021generalizing} can fail 
because key effect modifiers or prognostic variables are unobserved in one of the 
sources. Indeed, \cite{colnet2021generalizing} show that when transporting trial 
effects to a target population, imputing covariates that were never measured in the 
trial cannot recover the required identification, even under perfectly specified 
linear models.



Under covariate mismatch, a few transfer-learning approaches address partial overlap in covariate sets \citep{bica2022transfer, chang2024heterogeneous, xu2025representation}, 
but typically rely on learned latent representations optimized for predictive fit. 
While only \cite{bica2022transfer} focuses on a causal setting, in general, such end-to-end procedures may not preserve the randomization structure and therefore 
provide limited guarantees that observational confounding will not leak into the 
CATE, a phenomenon known as \emph{negative transfer}. More broadly, power-likelihood and borrowing frameworks
\citep{lin2025powerlikelihood, rahman2021leveraging} combine
experimental and observational data for treatment effect estimation,
but do not explicitly address covariate mismatch or provide CMO-based
negative-transfer guarantees.

This paper extends R-OSCAR to settings with covariate mismatch. We partition baseline covariates into RCT-only, OS-only, and shared blocks and introduce discrepancy functions calibrated on the randomized data, explicitly accommodating outcome shift without assuming transportability of outcome means or CATEs \citep{dahabreh2020extending}. We consider two borrowing strategies: SR-OSCAR, which applies R-OSCAR to the shared covariates only, and our main proposal MR-OSCAR, which uses the OS to predict OS-only covariates from the shared block, imputes them in the RCT, fits OS outcome models on the full covariate set, calibrates to the RCT, and estimates CATEs via augmented pseudo-outcome regression. In the calibration step, both shared and RCT-only covariates can enter, so strong RCT-only predictors are retained.

On the theoretical side, we derive finite-sample risk bounds that decompose into an approximation term, a complexity term for the OS-trained nuisance functions, and an imputation term quantifying uncertainty from predicting OS-only covariates in the RCT. These bounds show that MR-OSCAR controls negative transfer: when imputation error is large, the risk is driven by the RCT calibration step and does not substantially deteriorate relative to RCT-only estimation. Specializing to sparse linear models yields concrete conditions characterizing when each strategy is beneficial. Simulations confirm these trade-offs across a range of covariate-mismatch configurations.

We illustrate these ideas using the Greenlight Plus early childhood obesity prevention trial \citep{heerman2022greenlight, heerman2024digital}, comparing standard counseling \citep{sanders2014greenlight} with an enhanced digital program. The trial was underpowered for fine-grained heterogeneity \citep{wang2007statistics, kent2018personalized}, but linked EHR data from non-enrolled children provide rich auxiliary information. Behavioral covariates in the RCT and utilization data in the EHR create a canonical covariate-mismatch setting in which MR-OSCAR sharpens CATE estimates while retaining protection against observational confounding.

The rest of the paper is organized as follows. \Cref{sec:meth} presents the setup, causal assumptions, and the proposed methods. \Cref{sec:theory} derives the error bounds and their sparse linear specialization. \Cref{sec:simu} reports simulation studies, and \Cref{sec:real_data} applies MR-OSCAR to the Greenlight Plus study. \Cref{sec:conclusion} concludes.



}

{
\section{Methods for data borrowing from an RCT under covariate mismatch}
\label{sec:meth}

In this section we formalize the covariate-mismatch setting, review the 
counterfactual mean outcome (CMO) and pseudo-outcome representation from 
\cite{asiaee2023leveraging}, and then develop two estimators that leverage 
observational information under misaligned covariates.

\subsection{Setup, background, and causal assumptions}
\label{subsec:notation}

We consider two data sources: a randomized trial and a large observational study. Let \(S\in\{r,o\}\) denote membership in the RCT and OS respectively, \(Y\) the outcome of interest, 
and treatment \(A\in\{-1,1\}\). For each unit we posit potential outcomes 
\(Y(1),Y(-1)\), the potential outcome under treatment and that under control, respectively.
Each source observes a different subset of baseline covariates: the RCT records \(\bX^r\) and the OS records \(\bX^o\). We denote the covariates measured in both sources as \(\bZ\in\mathbb R^{p_z}\) (the shared block), those exclusive to the RCT as \(\bU:=\bX^r\!\setminus\!\bZ\in\mathbb R^{p_u}\), and those exclusive to the OS by \(\bV:=\bX^o\!\setminus\!\bZ\in\mathbb R^{p_v}\), so that \(\bX^r=(\bU,\bZ)\in\mathbb R^{p_u+p_z}\) and \(\bX^o=(\bZ,\bV)\in\mathbb R^{p_z+p_v}\). We write \(\bX=(\bU,\bZ,\bV)\in\mathbb R^{p}\) with \(p=p_u+p_z+p_v\) for the complete covariate vector. For generic realizations we write \(\bm x=(\bm u,\bm z,\bm v)\), \(\bm x^r=(\bm u,\bm z)\), and \(\bm x^o=(\bm z,\bm v)\). Because covariate distributions can differ across sources, we retain the source indicator \(S=s\) in conditional distributions and expectations as needed. Since \(\bZ\) is a subvector of both \(\bX^r\) and \(\bX^o\), functions of \(\bZ\) may appear alongside functions of \(\bX^r\) or \(\bX^o\) in the same expression; in such cases \(\bZ\) denotes the shared component extracted from the source-specific vector. A complete glossary of notation is provided in Table~\ref{tab:notation} in the Supplement.
The RCT sample is \(\{(\bX_i^r,A_i,Y_i, S_i=r\}^{n^r}_{i=1}\), and the OS sample is \(\{(\bX_i^o,A_i,Y_i, S_i=o\}^{n^o}_{i=1}\).
Let \(n_a^s\) denote the number of units in arm \(a\) under source \(s\).  
For each study $s \in \{r,o\}$ and treatment arm $a \in \{-1,1\}$, we write 
$\mu^s_a(\bm x^s)$ $ \coloneqq \E[Y \mid A = a, \bX^s = \bm x^s, S = s]$ for the mean outcome that would be observed in population $s$ 
for individuals with observed covariates $\bm x^s$ if they all received treatment $a$, 
and $\pi^r_a(\bm x^r)$ for the randomization probability of arm $a$ in the RCT 
at covariate value $\bm x^r$. For source $S = s$, we write the per-arm \emph{full conditional mean outcome} as
\(
\mu^s_a(\bm x)
   = \E\bigl[Y \mid \bX=\bm x,A=a,S=s\bigr].
\)
Our target is the CATE in the RCT population, which depends on the contrast 
between the two arm-specific mean outcomes $\mu^r_{1}(\bm x^r)$ and $\mu^r_{-1}(\bm x^r)$. Mathematically, this target RCT CATE is given by
\begin{align}\label{eq:rct-cate}
    \tau^r(\bm x^r)
  \coloneqq \E\bigl[Y(1)-Y(-1)\mid \bX^r=\bm x^r,S=r\bigr]
  = \mu_1^r(\bm x^r) - \mu_{-1}^r(\bm x^r).
\end{align}
This quantity marginalizes the full CATE on \(\bX\) over the unobserved \(\bV\) in the trial. 
Similarly, the RCT propensity score as
{\small\begin{align}\label{eq:rct-propensity}
    \pi^r_a(\bm x^r)
  = \P\{A=a\mid \bX^r=\bm x^r,S=r\}.
\end{align}}
Next, we introduce the \emph{counterfactual mean outcome} (CMO) which is a variance minimizing choice of the augmentation function used to construct a pseudo-outcome \eqref{eq:pseudo-outcome} for CATE estimation through pseudo-outcome regression that we discuss later in \eqref{eq:pseudo_outcome_reg}. The CMO can be defined for any source, but here we focus on the RCT and define
{\small\begin{align} \label{eq:cmo}
    \mu^r(\bm x^r)
   = \sum_{a\in\{-1,1\}} \pi^r_{-a}(\bm x^r)\,\mu^r_a(\bm x^r).
\end{align}}
This quantity is the main building block for R-OSCAR: in our estimator, $\mu^r$ serves
as a personalized baseline outcome that we subtract from $Y$ to construct pseudo-outcomes
for CATE regression \citep{asiaee2023leveraging}. To build intuition, note that for
any fixed individual $\bm x^r$, the CMO is a particular weighted average of the arm-specific
predicted outcomes, where each arm is weighted by the RCT randomization probability
of the opposite arm. Informally, for treatments $1$ and $-1$, the CMO takes the
predicted outcome under treatment $1$ and weights it by the probability of being
assigned $-1$, and vice versa, then sums these two components. This swapped-weight
construction is chosen so that the CMO is highly predictive of the outcome but nearly
uncorrelated with the treatment indicator under the RCT randomization scheme.

Operationally, the CMO acts as a personalized baseline outcome: it captures the part
of an individual's response that can be predicted from covariates alone, before using
the actual treatment they received. In the R-OSCAR framework, one subtracts this
baseline from the observed outcome to form a transformed outcome (a pseudo-outcome)
and then regresses this transformed outcome on covariates. This removes noise due to treatment-free effects of covariates and leaves a cleaner signal for the CATE, while the causal
identification still comes entirely from randomization in the RCT. It should be noted that the OS enters only
through improved prediction of the arm-specific mean outcomes $\{\mu^s_a\}$ and hence
through a better estimate of the CMO. Following \cite{asiaee2023leveraging}, we construct the pseudo-outcomes
\begin{align}
  \tau_{m}(\bX^r,A,Y)
    = \frac{A\{Y-m(\bX^r)\}}{\pi^{r}_{A}(\bX^r)},
    \label{eq:pseudo-outcome}
\end{align}
for a generic augmentation function \(m:\mathbb R^{p_u+p_z}\to\mathbb R\).

A natural way to view this construction is as a specific instance of the broader class of transformed-outcome and pseudo-outcome methods for CATE estimation. These approaches recast CATE estimation as a standard prediction problem by constructing a derived outcome whose conditional mean equals the CATE. For example, \citet{athey2016recursive} and \citet{wager2018estimation} use ``transformed outcomes" and orthogonalized pseudo-outcomes within tree- and forest-based learners so that flexible prediction methods can directly target heterogeneous effects. Building on this idea, \citet{nie2021quasi} and \citet{kennedy2023towards} develop residualized and doubly robust pseudo-outcomes whose conditional expectation is the CATE, yielding estimators with strong robustness and efficiency guarantees while allowing the use of arbitrary machine-learning regressors.

We now view the pseudo-outcome regression as an empirical risk
minimization (ERM) problem in the RCT.  Define the population
squared-error loss
$\mathcal L(f)
    := \E\Bigl\{
           \bigl(\tau_m(\bX^r,A,Y) - f(\bX^r)\bigr)^2 \mid S = r
         \Bigr\},$
where the expectation is over the RCT distribution
$(\bX^r,A,Y)\sim P(\cdot\mid S=r)$.
Under the RCT identification assumptions in
Assumption~\ref{as:ident} stated below, the pseudo-outcome is
conditionally unbiased for the CATE:
\[
  \E\bigl[\tau_m(\bX^r,A,Y)\mid \bX^r, S = r\bigr]
    = \tau^r(\bX^r).
\]
Consequently, $\tau^r$ is the unconstrained minimizer of the
population loss, $\tau^r \in \arg\min_{f} \mathcal L(f)$.
In practice, we restrict optimization to a function class $\cD$ on
$\bX^r$, so the empirical risk minimizer
{\begin{align}\label{eq:pseudo_outcome_reg}
    \hat\tau \;\in\;
  \arg\min_{f\in\cD}\;
    \frac{1}{n^r}\sum_{i:S_i=r}
      \bigl\{\tau_m(\bX_i^r,A_i,Y_i) - f(\bX_i^r)\bigr\}^2
\end{align}}
targets the best approximation to $\tau^r$ within $\cD$, with excess
risk governed by the approximation error
$\Delta_2^2(\cD,\tau^r)$ and the complexity of $\cD$
(see Table~\ref{tab:notation} for formal definitions). Moreover, \cite{yu2021diagnosis} and \cite{asiaee2023leveraging} show that the conditional variance of
\(\tau_m\) given \(\bX^r\) is minimized by \(m=\tilde\mu^r\), where $\tilde\mu^r$ is the marginalized RCT CMO given by {\small\begin{align}
  \tilde \mu^r(\bm x^r)
    &= \E\bigl[\mu^r(\bX)\mid \bX^r=\bm x^r,S=r\bigr].
    \label{eq:rct-cmo}
\end{align}} Thus the CATE 
estimator obtained by regressing pseudo-outcomes on \(\bX^r\) is most 
precise when the augmentation is the marginalized RCT CMO, yielding an augmented inverse probability weighting (AIPW) form \citep{robins1994estimation, zhang2012estimating}. This observation 
motivates using OS data to obtain a better estimator of \(\tilde\mu^r\), while 
keeping the causal identification in the RCT. We impose the following assumptions.
\begin{assumption}[Internal validity of the RCT]\label{as:ident}
SUTVA holds; 
\((Y(1),Y(-1))\ind A\mid (\bX^r,S=r)\) (RCT ignorability);
and there exists \(\rho>0\) such that 
\(0<\rho\le\pi^r_a(\bX^r)\le 1-\rho\) almost surely for each \(a\in\{-1,1\}\).
\end{assumption}
\begin{assumption}[Transportability of $\bV$ given $\bZ$]\label{as:transport}
The conditional distribution of $\bV$ given $\bZ$ is the same in RCT and OS, and 
\(\bV\) is conditionally independent of $\bU$ given $\bZ$ in the RCT:
\[
  P(\bV\mid \bU,\bZ,S=r)
    = P(\bV\mid \bZ,S=r)
    = P(\bV\mid \bZ,S=o).
\]
\end{assumption}
\noindent Assumption~\ref{as:ident} ensures that the pseudo-outcome regression in the RCT 
truly targets the CATE in~\eqref{eq:rct-cate}, so that borrowing from the OS is 
used purely for variance reduction rather than {to correct for unmeasured
confounding or violations of randomization. 
Assumption~\ref{as:transport} formalizes our covariate-mismatch setting:
OS-only covariates \(\bV\) are conditionally transportable given the shared
block \(\bZ\), and \(\bZ\) renders \(\bU\) and \(\bV\) conditionally independent
in the RCT, i.e. \(\bU \perp\!\!\!\perp \bV \mid \bZ, S = r\).
 In particular, this assumption justifies fitting a prediction function 
\(g:\mathbb{R}^{p_z}\to\mathbb{R}^{p_v}\) for \(\bV\) given \(\bZ\) in the OS and 
then using \(\hat g\) to impute \(\widehat\bV = \hat g(\bZ)\) in the RCT. Under 
Assumption~\ref{as:transport}, this imputation does not introduce bias 
in the RCT; the discrepancy between \(\widehat\bV\) and the unobserved \(\bV\) 
manifests only as additional imputation noise, which we track explicitly in our 
risk bounds.}

Throughout, our target of inference is the CATE in the RCT population. The observational source is used purely as an auxiliary data source, not as a second target population. In particular, we \emph{do not} assume that arm-specific outcome regressions agree between the two sources; in general, \(\mu^r_a(\bm x) \neq \mu^o_a(\bm x)\). All causal identification and population interpretation are anchored in the RCT; the OS sample is only used to improve prediction of outcomes or covariates that are missing or sparse in the trial, and all of these OS-based predictions are always calibrated back to the RCT.
Table~\ref{tab:notation} (Supplement) collects the principal symbols used throughout.


\subsubsection{Baseline Methods under Covariate Mismatch}
Under covariate mismatch, a natural baseline is to ignore the OS data entirely
and estimate the CATE using only the RCT. To do so, one can fit arm-wise outcome
models in the RCT,
\[
\hat\mu^r_a(\bm x^r)
  \in \arg\min_{f}
      \frac{1}{n_a^r}\sum_{i:A_i=a,\; S_i=r}
      \bigl(Y_i-f(\bX_i^r)\bigr)^2
      + \mathcal{P}_a^r(f),
\]
with a penalty \(\mathcal{P}_a^r\) controlling the complexity of the arm-specific
regression class.  These nuisance estimates are then plugged into the
counterfactual mean outcome and the pseudo-outcome to construct a purely
RCT-based CATE estimator.  Plugging it into the CMO as 
\(
\htmu^r(\bm x^r)
  = \sum_{a\in\{-1,1\}}
      \pi^r_{-a}(\bm x^r)\,\hat\mu^r_a(\bm x^r),
\) we obtain the RACER estimator $\hat\tau_{\text{RACER}}$ by regressing the pseudo-outcome \eqref{eq:pseudo-outcome} for each RCT unit $i$ on
\(\bX_i^r\) within the RCT.
Under Assumption~\ref{as:ident} and correct specification of the propensity
scores \(\pi^r_a(\bX^r)\) and arm-wise outcome models \(\mu^r_a(\bX^r)\),
the pseudo-outcome satisfies that its expectation given \(\{\bX_i^r,S_i=r\}\) equals \( \tau^r(\bX_i^r)\),
so \(\hat\tau_{\text{RACER}}\) is centered on the target CATE and any remaining
error is due to estimation variance rather than bias. When
\(n^r\) is modest, however, this purely RCT-based strategy can have relatively
high variance, motivating the use of OS data to stabilize the nuisance
estimators and improve CATE precision.

We now introduce a naive borrowing strategy under covariate mismatch that follows the R-OSCAR logic of fitting outcome models in the OS, calibrating 
them to the RCT via discrepancy functions, and then using the calibrated models 
to construct variance-reducing CMOs for pseudo-outcome regression. Here, we can restrict our attention to 
the shared covariates \(\bZ\). We can fit OS outcome models using only the shared covariates $\bZ$, calibrate these models to the RCT using discrepancy functions on \(\bX^r\), and then use the calibrated predictions to construct the CMO and pseudo-outcomes. This borrows outcome information from the large OS without ever knowing about the OS-only block $\bV$. When outcomes depend strongly on $\bV$, however, restricting to $\bZ$ can introduce an irreducible projection error that limits the benefit of borrowing. Mathematically, for each arm \(a\), we fit an OS outcome model 
that ignores \(\bV\):
\begin{align}
\hat\mu^{o,\mathrm{sh}}_{a}(\bm z)
  \in \arg\min_{f}
      \frac{1}{n_a^o}\sum_{i:A_i=a,S=o}\bigl(Y_i-f(\bZ_i)\bigr)^2
      + \mathcal{P}_a^{o,\mathrm{sh}}(f),
\label{eq:z-only-outcome}
\end{align}
where \(\mathcal{P}_a^{o,\mathrm{sh}}(f)\) regularizes the OS model fitted on the shared
space. We then calibrate these predictions to the RCT using a discrepancy function 
on \(\bX^r\):
\begin{align}
\hat{\delta}^{\mathrm{sh}}_{a}(\bm x^r)
  \in \arg\min_{d}
      \frac{1}{n_a^r}\sum_{i:A_i=a, S = r}
      \Bigl\{Y_i - \hat\mu^{o,\mathrm{sh}}_{a}(\bZ_i) - d(\bX_i^r)\Bigr\}^2
      + \mathcal{P}_a^{\mathrm{sh}}(d),
\label{eq:z-only-discrepancy}
\end{align}
with a penalty \(\mathcal{P}_a^{\mathrm{sh}}(d)\) controlling the calibration complexity. Here \(d(\bm x^r)\) is a discrepancy function that corrects for differences
in the relationship between \(\bX^r\) and \(Y\) in the OS versus the RCT:
if the OS outcome model \(\hat\mu^{o,\mathrm{sh}}_{a}\) were perfectly transportable, the
ideal discrepancy would be close to zero. The penalty
\(\mathcal{P}_{\bU\bZ,a}^r(d)\) regularizes the magnitude and complexity of
this correction. Larger penalization shrinks \(d\) toward zero, effectively
borrowing more strength from the OS model, while smaller penalization allows a
larger correction and hence relies more heavily on the RCT data. Thus, when the
OS and RCT share a similar covariate-outcome relationship, the calibration
step remains small and substantial information can be borrowed; when there is
more mismatch, the penalty prevents over-correction and limits the influence of
the OS.
 The calibrated arm-specific prediction on \(\bX^r\) is $\hat\mu^{\mathrm{sh}}_a(\bm x^r)
  = \hat\mu^{o,\mathrm{sh}}_{a}(\bm z) + \hat\delta^{\mathrm{sh}}_{a}(\bm x^r),$
which we aggregate into a shared-only CMO and an induced preliminary CATE respectively as
\[
\htmu^{\mathrm{sh}}(\bm x^r)
  = \sum_{a\in\{-1,1\}} \pi^r_{-a}(\bm x^r)\,
      \hat\mu^{\mathrm{sh}}_a(\bm x^r), \enskip \text{ and } \enskip \httau^{\mathrm{sh}}(\bm x^r)
  = \sum_{a\in\{-1,1\}} a\,\hat\mu^{\mathrm{sh}}_a(\bm x^r).
\]
As in R-OSCAR, we then perform a second-stage CATE calibration on the RCT, 
using the shared-only CMO as augmentation. This second-stage calibration targets potential miscalibration of the
OS-derived CATE itself. The first calibration layer adjusts the OS
outcome model so that, after shrinkage, its arm-specific mean predictions align with the RCT outcomes. However, this shrinkage can still induce bias in the resulting CATE contrasts, since the CATE is a nonlinear functional of the outcome models and may have a different smoothness structure. The second calibration stage therefore directly regresses the pseudo-outcomes on \(\bX^r\), treating the OS-estimated CATE as an initial plug-in estimate and learning an additional correction
term. This separates calibration and regularization for the outcome models from calibration and regularization for the CATE surface, allowing the latter to adapt to its own complexity even when the underlying outcome models are heavily smoothed. We then get 

\begin{align}
\hat \delta^{\mathrm{sh}}
  &= \arg\min_{d \in \mathcal{D}}
      \frac{1}{n^r} \sum_{i: S = r}
      \Biggl\{
        \frac{A_i\bigl(Y_i - \htmu^{\mathrm{sh}}(\bX_i^r)\bigr)}
             {\pi^r_{A_i}(\bX_i^r)}
        - \bigl[\httau^{\mathrm{sh}}(\bX_i^r) + d(\bX_i^r)\bigr]
      \Biggr\}^2 + \mathcal{P}(d),
\label{eq:shared-stage2-calib}
\end{align}
and define the shared-only borrowing SR-OSCAR CATE estimator, writing $\hat\tau_{\text{SR}}$ and $\hat\tau_{\text{MR}}$ for the SR-OSCAR and MR-OSCAR estimators in subscripts, as
\[
\hat{\tau}_{\text{SR}}(\bm x^r)
  = \httau^{\mathrm{sh}}(\bm x^r) + \hat \delta^{\mathrm{sh}}(\bm x^r).
\]
This strategy leverages outcome structure estimated with high precision in the large OS sample and then uses the RCT to correct systematic differences that are explainable by \(\bX^r\). 
It is safe in the sense that it never imputes \(\bV\) in the RCT and therefore 
does not rely on Assumption~\ref{as:transport}. 
However, when outcomes depend strongly on OS-only covariates \(\bV\), the map 
\(\bm z\mapsto\E[\mu^o(\bm z,\bV;a)\mid \bZ=\bm z,S=o]\) may lie far from the chosen 
function class on \(\bZ\) alone, and SR-OSCAR can incur an irreducible 
projection bias that we will quantify in Section~\ref{sec:theory}.

\subsubsection{Proposed Method: MR-OSCAR}

A more robust way to handle covariate mismatch is to be explicit about the fact that the RCT is missing the whole block of covariates $\bV$ and then try to recreate this block in the trial using information from the OS. This mismatch-aware, imputation-augmented borrowing, which we call MR-OSCAR (Mismatch-aware Robust Observational Studies for CMO-Augmented RCT), uses the OS to learn how OS-only covariates 
\(\bV\) relate to the shared block \(\bZ\), then transports the predictable part of 
\(\bV\) into the RCT via imputation. Suppose the imputed proxy for RCT units is $\widehat\bV$. We then fit OS outcome models on \(\bX^o\), calibrate them to the RCT on \((\bX^r,\widehat\bV)\), and use the resulting calibrated predictions to build the CMO and pseudo-outcomes. Under a transportability condition for $\bV$ given $\bZ$, the extra uncertainty
introduced by this step is captured by an explicit imputation error term in our
risk bounds, which we analyze in Section~\ref{sec:theory}. Under
Assumption~\ref{as:transport}, this additional term reflects only imputation
noise rather than bias. Mathematically, we first estimate a mapping \(g:\mathbb R^{p_z}\to\mathbb R^{p_v}\) in the OS:
\[
\hat g
  \in \arg\min_{g}
      \frac{1}{n^o}\sum_{i: S = o}
      \bigl\|\bV_i - g(\bZ_i)\bigr\|^2
      + \mathcal{P}^{\mathrm{im}}(g),
\]
where \(\mathcal{P}(g)\) is a penalty appropriate for the chosen regression class 
(e.g., ridge or LASSO). In other words, \(\hat g\) is trained to approximate the conditional mean 
\(\E(\bV \mid \bZ)\) within the chosen function class, so that for any RCT unit 
with covariates \(\bZ_i\), \(\hat g(\bZ_i)\) provides a regularized prediction of 
its OS-only covariates \(\bV_i\). We then impute $\widehat\bV_i = \hat g(\bZ_i),$ for $i=1,\ldots,n^r,$
for RCT units, and treat \(\widehat\bV\) as a proxy for the predictable component 
of \(\bV\). Next, for each arm \(a\), we fit OS outcome models on the full OS feature set
\(\bX^o\):
\begin{align}
\hat\mu^{o,\mathrm{im}}_{a}(\bm x^o)
  \in \arg\min_{f}
      \frac{1}{n_a^o}\sum_{i:A_i=a, S = o}
      \bigl(Y_i-f(\bX_i^o)\bigr)^2
      + \mathcal{P}_a^{o,\mathrm{im}}(f),
\label{eq:imputation-outcome-model}
\end{align}
and calibrate these models to the RCT in the augmented covariate space 
\((\bX^r,\widehat\bV)\):
\begin{align}
\hat\delta^{\mathrm{im}}_{a}(\bm x^r,\widehat{\bm v})
  \in \arg\min_{d}
      \frac{1}{n_a^r}\sum_{i:A_i=a, S = r}
      \Bigl\{Y_i
            - \hat\mu^{o,\mathrm{im}}_{a}(\bZ_i,\widehat\bV_i)
            - d(\bX_i^r,\widehat\bV_i)
      \Bigr\}^2
      + \mathcal{P}_a^{\mathrm{im}}(d).
\label{eq:imputation-discrepancy}
\end{align}
Without regularization, the discrepancy would simply recover the estimate one would obtain from the RCT outcomes;
the penalty \(\mathcal{P}_a^{\mathrm{im}}(d)\) instead forces the calibration step 
to keep most of the OS prediction and add only a modest RCT correction, allowing for borrowing of information from the OS. The calibrated arm-specific prediction in the augmented space is
\(
\hat\mu^{\mathrm{im}}_a(\bm x^r,\widehat{\bm v})
  = \hat\mu^{o,\mathrm{im}}_{a}(\bm z,\widehat{\bm v})
    + \hat\delta^{\mathrm{im}}_{a}(\bm x^r,\widehat{\bm v}).
\) We define the imputation-augmented CMO and preliminary CATE as
\begin{align}\label{cmo_and_prelim_CATE}
    \htmu^{\mathrm{im}}(\bm x^r)
  = \sum_{a\in\{-1,1\}} \pi^r_{-a}(\bm x^r)\,
      \hat\mu^{\mathrm{im}}_a(\bm x^r,\widehat{\bm v}) \enskip \text{  and  } \enskip \httau^{\mathrm{im}}(\bm x^r)
  = \sum_{a\in\{-1,1\}} a\,
      \hat\mu^{\mathrm{im}}_a(\bm x^r,\widehat{\bm v}).
\end{align}
Finally, we perform the same second-stage CATE calibration as in 
\eqref{eq:shared-stage2-calib}, now using the imputation-augmented CMO:
\begin{align}
\hat \delta^{\mathrm{im}}
  &= \arg\min_{d \in \mathcal{D}}
      \frac{1}{n^r} \sum_{i:S = r}
      \Biggl\{
        \frac{A_i\bigl(Y_i - \htmu^{\mathrm{im}}(\bX_i^r)\bigr)}
             {\pi^r_{A_i}(\bX_i^r)}
        - \bigl[\httau^{\mathrm{im}}(\bX_i^r) + d(\bX_i^r)\bigr]
      \Biggr\}^2 + \mathcal{P}(d),
\label{eq:robust-delta}
\end{align} where, this second-stage calibration takes the OS-derived CATE as an initial plug-in estimate and then fits an additional regression of the pseudo-outcomes on \(\bX^r\) to correct any remaining bias. We define the MR-OSCAR CATE estimator as
{\small\begin{align}\label{MR-OSCAR-estimator}
    \hat \tau_{\text{MR}}(\bm x^r)
  = \httau^{\mathrm{im}}(\bm x^r) + \hat \delta^{\mathrm{im}}(\bm x^r).
\end{align}}
In practice we employ sample-splitting or cross-fitting across the nuisance stages 
(OS outcome modeling, imputation, and RCT calibrations) to avoid overfitting. Intuitively, MR-OSCAR uses a single imputation to expose the predictable 
component of \(\bV\) to the R-OSCAR calibrator, thereby shrinking the residual 
structure that must be learned from the RCT. When \(\bV\) carries substantial 
predictive signal that is partially recoverable from \(\bZ\), we show that MR-OSCAR can reduce 
CATE risk relative to both RACER and SR-OSCAR. When \(\bV\) is weakly 
predictive or poorly imputable, our finite-sample theory in Section~\ref{sec:theory} 
shows that MR-OSCAR effectively falls back toward RCT-only estimation, thereby 
guarding against negative transfer. Recall that we do not impose any transportability of outcome regressions
or CATEs between sources; MR-OSCAR uses the OS only to improve
prediction and then calibrates all borrowed structure back to the RCT.

Table~\ref{tab:cmo-summary} summarizes the estimation pipeline for each method.

\begin{table}[ht]
\centering
\caption{Summary of arm-specific mean, CMO, and CATE estimators across
methods.  A dash (--) indicates the component is not used.}\label{tab:cmo-summary}
\footnotesize
\begin{tabular}{@{}l ccc@{}}
\toprule
\textbf{Component} & \textbf{RACER} & \textbf{SR-OSCAR} & \textbf{MR-OSCAR} \\
\midrule
OS outcome &
  -- &
  $\hat\mu^{o,\mathrm{sh}}_a(\bm z)$ &
  $\hat\mu^{o,\mathrm{im}}_a(\bm x^o)$ \\[4pt]
Imputation &
  -- &
  -- &
  $\widehat\bV = \hat g(\bZ)$ \\[4pt]
Discrepancy &
  -- &
  $\hat\delta^{\mathrm{sh}}_a(\bm x^r)$ &
  $\hat\delta^{\mathrm{im}}_a(\bm x^r,\widehat{\bm v})$ \\[4pt]
\begin{tabular}[c]{@{}l@{}}\scriptsize(Calibrated)\\[-3pt]Arm mean $\hat\mu_a$\end{tabular} &
  $\hat\mu^r_a(\bm x^r)$ &
  $\hat\mu^{\mathrm{sh}}_a\!=\!\hat\mu^{o,\mathrm{sh}}_a\!+\!\hat\delta^{\mathrm{sh}}_a$ &
  $\hat\mu^{\mathrm{im}}_a\!=\!\hat\mu^{o,\mathrm{im}}_a\!+\!\hat\delta^{\mathrm{im}}_a$ \\[4pt]
CMO &
  $\htmu^r\!=\!\textstyle\sum_a \pi^r_{-a}\hat\mu^r_a$ &
  $\htmu^{\mathrm{sh}}\!=\!\textstyle\sum_a \pi^r_{-a}\hat\mu^{\mathrm{sh}}_a$ &
  $\htmu^{\mathrm{im}}\!=\!\textstyle\sum_a \pi^r_{-a}\hat\mu^{\mathrm{im}}_a$ \\[4pt]
Preliminary CATE &
  -- &
  $\httau^{\mathrm{sh}}\!=\!\textstyle\sum_a a\,\hat\mu^{\mathrm{sh}}_a$ &
  $\httau^{\mathrm{im}}\!=\!\textstyle\sum_a a\,\hat\mu^{\mathrm{im}}_a$ \\[4pt]
\begin{tabular}[c]{@{}l@{}}\scriptsize(Calibrated)\\[-3pt]CATE $\hat\tau$\end{tabular} &
  $\hat\tau_{\text{RACER}}(\bm x^r)$ &
  $\hat\tau_{\text{SR}}\!=\!\httau^{\mathrm{sh}}\!+\!\hat\delta^{\mathrm{sh}}$ &
  $\hat\tau_{\text{MR}}\!=\!\httau^{\mathrm{im}}\!+\!\hat\delta^{\mathrm{im}}$ \\
\bottomrule
\end{tabular}
\end{table}
\subsubsection{Example: Sparse Linear Model}
For implementation, we first focus on a sparse linear specification of the
MR-OSCAR nuisance models.
Let \(p_r:=p_u+p_z=\dim(\bX^r)\) and \(p_o:=p_z+p_v=\dim(\bX^o)\). Define the intercept-augmented design vectors
\(\tilde{\bX}^o_i := (1,\bZ_i^\top,\bV_i^\top)^\top \in \mathbb{R}^{p_o + 1}\),
\(\tilde{\bX}^{r,\mathrm{im}}_i := (1,\bU_i^\top,\bZ_i^\top,\widehat\bV_i^\top)^\top \in \mathbb{R}^{p_r+p_v+1}\), and
\(\tilde{\bX}^r_i := (1,\bU_i^\top,\bZ_i^\top)^\top \in \mathbb{R}^{p_r+1}\)
for the OS, the RCT with imputed
$\widehat\bV$, and the RCT with observed $(\bU,\bZ)$, respectively.
We assume that for each arm $a\in\{-1,1\}$, the OS outcome model is $\mu^{o,\mathrm{im}}_{a}(\tilde{\bX}^o_i) = (\tilde{\bm x}^{o}_i)^\top\bm\beta^o_a,
$, where $\bm\beta^o_a$ is $s$-sparse. In this
sparse linear setting, each outcome model and calibration step is a LASSO
regression on the corresponding design matrix, and we estimate the sparse coefficient vector $\bm\beta^o_a$ via
\begin{equation}
\label{eq:os-lasso}
\widehat{\bm\beta}^o_a
  \in \arg\min_{\bm\beta\in\mathbb R^{p_o+1}}
  \biggl\{
    \frac{1}{n_a^o}\sum_{i:A_i=a,S=o}
      \bigl(Y_i - (\tilde{\bX}^o_i)^\top\bm\beta\bigr)^2
    + \lambda_a^o\|\bm\beta\|_1
  \biggr\},
\end{equation}
where $\lambda_a^o\ge 0$ is a LASSO tuning parameter.  The fitted OS
prediction is $\hat\mu^{o,\mathrm{im}}_{a}(\bX_i^o) = (\tilde{\bX}^o_i)^\top\widehat{\bm\beta}^o_a$.
Given the OS fit, we form residuals for RCT units in arm $a$,
$\tilde Y_i(a) = Y_i - \hat\mu^{o,\mathrm{im}}_{a}(\bZ_i,\widehat\bV_i),$ for $i:A_i=a,S=r,$
and model the arm-specific discrepancy as
$d_a(\bX_i^r,\widehat\bV_i) = (\tilde{\bX}^{r,\mathrm{im}}_i)^\top\bm\gamma^r_a$.
We estimate $\bm\gamma^r_a$ via a second LASSO regression,
\begin{equation}
\label{eq:disc-lasso}
\widehat{\bm\gamma}^r_a
  \in \arg\min_{\gamma\in\mathbb R^{p_r+p_v+1}}
  \biggl\{
    \frac{1}{n_a^r}\sum_{i:A_i=a,S=r}
      \bigl(\tilde Y_i(a) - (\tilde{\bX}^{r,\mathrm{im}}_i)^\top\bm\gamma\bigr)^2
    + \lambda_a^r\|\bm\gamma\|_1
  \biggr\}.
\end{equation}
The calibrated arm-specific prediction in the augmented space
$(\bX^r,\widehat\bV)$ is then $\hat\mu^{\mathrm{im}}_a(\bX_i^r,\widehat\bV_i)
  = \hat\mu^{o,\mathrm{im}}_{a}(\bZ_i,\widehat\bV_i)
    + (\tilde{\bX}^{r,\mathrm{im}}_i)^\top\widehat{\bm\gamma}^r_a.$
Using the calibrated predictions, we define the imputation-augmented CMO and
preliminary CATE in the RCT as \eqref{cmo_and_prelim_CATE}. We then form pseudo-outcomes $\psi_i$ as in Equation~\ref{eq:pseudo-outcome} and fit a
final linear correction
$d(\bX_i^r)= (\tilde{\bX}^r_i)^\top\bm\eta$ via
\begin{equation}
\label{eq:delta-lasso}
\widehat{\bm\eta}
  \in \arg\min_{\eta\in\mathbb R^{p_r+1}}
 \biggl\{
    \frac{1}{n^r}\sum_{i:S=r}
      \bigl(\psi_i - [\httau^{\mathrm{im}}(\bX_i^r) +(\tilde{\bX}^r_i)^\top\bm\eta]\bigr)^2
    + \lambda_\delta\|\bm\eta\|_1
  \biggr\}.
\end{equation}
The MR-OSCAR CATE estimator in the sparse linear regime is finally $\hat\tau_{\text{MR}}(\bX^r)
  = \httau^{\mathrm{im}}(\bX^r) + (\tilde{\bX}^r)^\top\widehat{\bm\eta}.$
In practice, we employ
sample-splitting or cross-fitting across
\eqref{eq:os-lasso}-\eqref{eq:delta-lasso} to avoid overfitting. Moreover, the tuning parameters
$\lambda_a^o$, $\lambda_a^r$, and $\lambda_\delta$
are selected separately for each nuisance stage by cross-validation,
using the squared-error criterion associated with that stage's own
regression problem.
Specifically, for the OS arm-specific outcome models in
\eqref{eq:os-lasso}, cross-validation is performed using the original
outcome $Y_i$ as the response and the prediction loss $\sum_{i\in V,\;A_i=a,S=o}
(Y_i-(\tilde{\bX}_i^o)^\top\hat{\bm\beta}_{a,-V}^o)^2$
on each validation fold $V$.
For the RCT arm-specific discrepancy regressions in
\eqref{eq:disc-lasso}, cross-validation uses the residualized outcome $\tilde Y_i(a)=Y_i-\hat\mu_{a}^{o,\mathrm{im}}(\bZ_i,\widehat\bV_i)$ as the response and minimizes the corresponding validation loss $\sum_{i\in V,\;A_i=a,S=r}
(\tilde Y_i(a)-(\tilde{\bX}_i^{r,\mathrm{im}})^\top
\hat{\bm\gamma}_{a,-V}^r)^2.$
Finally, for the CATE correction step in \eqref{eq:delta-lasso},
cross-validation is performed using the pseudo-outcome regression
criterion itself. Equivalently, since
\(\httau^{\mathrm{im}}(\bX_i^r)\) is an offset, one may view the response
as either the pseudo-outcome $\psi_i$ with offset
$\httau^{\mathrm{im}}(\bX_i^r)$, or as the residualized pseudo-outcome
$\psi_i-\httau^{\mathrm{im}}(\bX_i^r)$; both formulations give the same
optimization problem. Thus the validation loss for $\lambda_\delta$ is $\sum_{i\in V,\;S=r}
(\psi_i-[\httau^{\mathrm{im}}(\bX_i^r)
+(\tilde{\bX}_i^r)^\top\hat{\bm\eta}_{-V}])^2.$
Because MR-OSCAR involves multiple nuisance stages (imputation, OS outcome, RCT discrepancy, and CATE regression), we employ $K$-fold cross-fitting: for each fold $I_k$, all nuisance functions---$\hat g^{(-k)}$, $\hat\mu^{o,(-k)}_{a}$, $\hat\delta_a^{(-k)}$, and the preliminary CMO $\htmu^{(-k)}$---are estimated on the complement $I_k^c$, and the final CATE $\hat\tau_{\text{MR}}^{(-k)}$ is evaluated on $I_k$. Aggregating over folds ensures that pseudo-outcomes and CATE regressions use disjoint data, avoiding overfitting.
{
\section{Theory}
\label{sec:theory}
We now study the finite-sample performance of the CATE estimators
\(\hat\tau_{\text{SR}}\) and \(\hat\tau_{\text{MR}}\).
Throughout this section we focus on the population risk in the RCT
distribution, $\Delta_2^2(\hat\tau,\tau^r)
    := \E\bigl\{ \bigl(\hat\tau(\bX^r)-\tau^r(\bX^r)\bigr)^2\mid S = r \bigr\},$
where the expectation is taken over an independent RCT draw
\((\bX^r,A,Y)\sim P(\cdot\mid S=r)\).
Since all of our results are in squared \(L_2\) loss under the RCT
distribution, this notation is meaningful in this context. Our goal is to characterize how much each estimator can reduce this risk
relative to the RCT-only RACER benchmark by borrowing information from
the OS, and how covariate mismatch and imputation error impact the rate.
\subsection{Function classes, complexity measures, and shift structure}
\label{subsec:theory-setup}
We first introduce the function classes and complexity measures that
enter our bounds.
Let \(\cD\subset L_2(P^r)\) be the function class used for the final
CATE calibration \(d\) in \eqref{eq:shared-stage2-calib} and
\eqref{eq:robust-delta}, and for each arm \(a\in\{-1,1\}\) define \(\cM^{o,\mathrm{sh}}_{a}\) and \(\cM^{o,\mathrm{im}}_{a}\) as the function classes used for the OS outcome models
        \(\hat\mu^{o,\mathrm{sh}}_{a}(\bm z)\) in \eqref{eq:z-only-outcome}
        and \(\hat\mu^{o,\mathrm{im}}_{a}(\bm x^o)\) in
        \eqref{eq:imputation-outcome-model}, \(\cM^{r}_{a}\) as the arm-specific RCT outcome classes used
        by RACER on \((\bU,\bZ)=\bX^r\), and \(\cD^{\mathrm{sh}}_{a}\) and \(\cD^{\mathrm{im}}_{a}\) as the calibration
        classes used for \(\hat\delta^{\mathrm{sh}}_{a}\) in
        \eqref{eq:z-only-discrepancy} and
        \eqref{eq:imputation-discrepancy} respectively. Let \(\cG\) be the class used for the imputation map
\(g:\bZ\mapsto\bV\) in MR-OSCAR. We write \(\Rad_n(\cH)\) for the empirical Rademacher
complexity of a class \(\cH\) based on \(n\) samples \citep{bartlett2006local}
(see Table~\ref{tab:notation} for its formal definition), and we use
\(\lesssim\) to suppress universal constants. For any function class \(\cH\subset L_2(P^r)\), define the
{approximation error} as $\Delta_2^2(\cH,\tau^r)
    := \inf_{h\in\cH}
        \E\bigl\{ (h(\bX^r) - \tau^r(\bX^r))^2 \mid S = r \bigr\}.$
This term captures the irreducible bias induced by restricting attention
to a given class (for example, linear or sparse linear functions), while
all remaining terms in our bounds arise from estimation error and from
covariate mismatch. To quantify the effect of ignoring OS-only covariates \(\bV\), we
introduce a shared-only mismatch penalty \eqref{eq:BZ-def}.
For each arm \(a\in\{-1,1\}\), define the arm-specific marginalized RCT mean (the per-arm counterpart of the CMO \eqref{eq:rct-cmo})
\[
  \tilde\mu^r_a(\bm x^r)
    := \E\bigl[
          \mu^r_a(\bm x^r,\bV)
        \mid \bX^r=\bm x^r, S = r
       \bigr].
\]
Under SR-OSCAR, the armwise calibrated predictor has the form \(m(\bZ)+d(\bX^r)\) with \(m\in\cM^{o,\mathrm{sh}}_{a}\) and \(d\in\cD^{\mathrm{sh}}_{a}\). Define the corresponding SR-representable class
\[
  \cH^{\mathrm{sh}}_{a}
    := \Bigl\{
          \bm x^r\mapsto m(\bm z)+d(\bm x^r)
          : m\in\cM^{o,\mathrm{sh}}_{a},\ d\in\cD^{\mathrm{sh}}_{a}
       \Bigr\}.
\]
We define the squared mismatch penalty for shared-only borrowing as the residual approximation error of \(\tilde\mu^r_a\) in \(\cH^{\mathrm{sh}}_{a}\):
\begin{equation}
  B_{\mathrm{sh}}^2
    := \sum_{a\in\{-1,1\}}
        \inf_{h\in\cH^{\mathrm{sh}}_{a}}
        \E\Bigl[
          \bigl\{
            h(\bX^r)
            - \tilde\mu^r_a(\bX^r)
          \bigr\}^2 \mid S = r
        \Bigr].
  \label{eq:BZ-def}
\end{equation}
Intuitively, \(B_{\mathrm{sh}}^2\) is the irreducible penalty from restricting SR-OSCAR’s armwise calibration to functions of \(\bX^r\): it is small when OS-only covariates \(\bV\) do not materially affect outcomes beyond what \(\bX^r\) can represent, and it is large when important effect modifiers reside in \(\bV\) and cannot be summarized by \(\bX^r\).
For MR-OSCAR, we track the quality of imputation via
\[
  r_{\mathrm{im}}^2
    := \E\bigl[\|\bV - g^\star(\bZ)\|^2 \mid S = r\bigr],
\]
where \(g^\star\in\cG\) denotes an ideal imputation map in the chosen
class and the expectation is taken under the RCT distribution.
In practice \(r_{\mathrm{im}}^2\) can be thought of as the prediction risk of
the best imputation model.
One may also consider the OS-averaged version \(\E[\|\bV - g^\star(\bZ)\|^2 \mid S=o]\); under transportability of \(\bV\mid\bZ\) and mild overlap between \(P(\bZ\mid S=r)\) and \(P(\bZ\mid S=o)\), the OS- and RCT-averaged imputation risks are of the same order.

 
Finally, for some parts of the theory it is convenient to use
{localized} complexity parameters in the sense of
\citet{asiaee2023leveraging} following \cite{bartlett2006local}.
Given a function class \(\cH\) and rate exponent \(\eta>0\), we write
\(c(\cH)\) for a localized complexity of \(\cH\) such that
\begin{equation}
  \Delta_2^2(\hat h_n,h^\star)
    \;\lesssim\;
    \frac{c(\cH)}{n^{\eta}} \text{ for ERM in \(\cH\) based on \(n\) samples,}
  \label{eq:localized-generic}
\end{equation}
under correct specification \(h^\star\in\cH\) and standard regularity.
For instance, for sparse linear models (LASSO) with sparsity \(s\) among
\(p\) features one has \(c(\cH)\asymp s\log p\) and \(\eta=1\); for
Hölder-smooth nonparametric classes one obtains
\(\eta=2\alpha/(2\alpha+p)\), etc.
The Rademacher-based statements below and the localized forms
\eqref{eq:localized-generic} are equivalent up to constants. The following assumptions formalize the outcome-shift structure and
smoothness conditions needed for our analysis.

\begin{assumption}[Outcome shift in the full space]
\label{as:shift}
For each arm \(a\in\{-1,1\}\) there exists a function
\(\delta_a:\R^{p}\to\R\) such that $\mu^r_a(\bm x)
    = \mu^o_a(\bm x) + \delta_a(\bm x),$
and \(\delta_a\) belongs to a function class \(\cD_a\) of controlled
complexity (e.g., having bounded Rademacher complexity).
\end{assumption}

\begin{assumption}[Lipschitzness in the OS-only block]
\label{as:lipschitz}
There exists \(L>0\) such that for all $ s\in\{r,o\}$ and for all 
\((\bm x^r,\bm v,\bm v',a,s)\), we have $\bigl|
    \mu^s_a(\bm x^r,\bm v) - \mu^s_a(\bm x^r,\bm v')
  \bigr|
  \le L \|\bm v - \bm v'\|
  \text{ and } \bigl|
    \delta_a(\bm x^r,\bm v) - \delta_a(\bm x^r,\bm v')
  \bigr|
  \le L \|\bm v - \bm v'\|.$
\end{assumption}

\begin{assumption}[Rates for nuisance estimation]
\label{as:rates}
The estimators
\(\hat\mu^{o,\mathrm{sh}}_{a}\), \(\hat\mu^{o,\mathrm{im}}_{a}\),
\(\hat\delta^{\mathrm{sh}}_{a}\), \(\hat\delta^{\mathrm{im}}_{a}\), \(\hat g\), and the final-stage calibrators
\(\hat\delta^{\mathrm{sh}}\), \(\hat\delta^{\mathrm{im}}\) are obtained by empirical risk
minimization (or penalized ERM) in their respective function classes,
with sample splitting or cross-fitting across nuisance stages.
Their estimation errors admit bounds of the form $\E\Bigl[
    \bigl\|\hat h - h^\star\bigr\|_{2}^2
	  \Bigr]
	  \;\lesssim\;
	  \Rad_{n}^2(\cH) \text{ for each nuisance component } h^\star\in\cH,$
	with \(n=n^r\) or \(n^o\) as appropriate.
\end{assumption}

Assumption~\ref{as:shift} allows the arm-specific mean outcomes to
differ arbitrarily between the RCT and OS, encoded by the shift
functions \(\delta_a\).
Crucially, we do {not} assume that the outcome means or CATEs
transport between populations in the sense that
we allow \(\mu^r_a\ne\mu^o_a\) and \(\tau^r\ne\tau^o\).
The OS is used only as a high-quality predictor of outcomes, with the
shift structure captured and corrected on the RCT via the calibration
steps.
Assumption~\ref{as:lipschitz} ensures that small changes in the OS-only
covariates \(\bV\) lead to small, predictable changes in both the
outcome model and the discrepancy function, which is crucial when we
impute \(\bV\) into the trial.
Assumption~\ref{as:rates} (combined with
\eqref{eq:localized-generic}) ensures that the nuisance estimators are
sufficiently accurate for standard Rademacher/localized-complexity
arguments to apply.

\subsection{Baseline risk bound for RACER}
\label{subsec:racer-theory}

As a benchmark, we recall the risk bound for the RCT-only estimator
RACER in our current notation.
RACER fits arm-specific outcome models
\(\hat\mu^r_{a}(\bX^r)\) in the RCT only and then constructs CMOs and
pseudo-outcomes using these models, followed by a final CATE calibration
in the class \(\cD\).
The following result is a straightforward adaptation of the main
R-OSCAR risk bound in \citet{asiaee2023leveraging} to the covariate
block \(\bX^r\).

\begin{theorem}[Baseline risk bound for RACER, adapted from \citet{asiaee2023leveraging}]
\label{thm:racer}
Suppose Assumptions~\ref{as:ident} and \ref{as:rates} hold, and that
RACER uses arm-specific outcome classes \(\cM^{r}_{a}\) and final CATE
class \(\cD\), with nuisance estimators obtained via cross-fitting.
Then there exists a constant \(C>0\) such that, with probability at
least \(1-\gamma\) for all \(\gamma\in(0,1)\), $\Delta_2^2(\hat\tau_{\mathrm{RACER}},\tau^{r})$ can be bounded by
\begin{align} \label{eq:racer-bound}
\Delta_2^2(\cD,\tau^{r})
+
C\Biggl[
    \Rad_{n^r}^2(\cD)
    + \sum_{a\in\{-1,1\}}
        \Rad_{n^r}^2\bigl(\cM^{r}_{a}\bigr)
\Biggr]
+ C\,\frac{\log(1/\gamma)}{n^r}.
\end{align}
Equivalently, in the localized-complexity notation of
\citet{asiaee2023leveraging}, if \(c(\cM^{r}_{a})\) denotes a
localized complexity of the arm-specific RCT outcome class on
\(\bX^r\), then under the same conditions there exists \(C<\infty\)
and \(\eta_r>0\) such that
\begin{align}
  \Delta_2^2\bigl(\hat\tau_{\mathrm{RACER}},\tau^{r}\bigr)
    \;\lesssim\;
    \sum_{a\in\{-1,1\}}
      \frac{c\bigl(\cM^{r}_{a}\bigr)}{(n^r)^{\eta_r}}.
  \label{eq:racer-localized}
\end{align}
\end{theorem}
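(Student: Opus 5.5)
The bound follows by treating RACER as a cross-fitted pseudo-outcome ERM with an estimated augmentation $\hat m=\htmu^r$, and splitting the risk into an oracle-ERM term and a nuisance-perturbation term. Work on one fold, so $\hat m$ is fixed given the complement sample, then average over the $K$ folds; this costs only a constant in $C$. Write $\tau_{\hat m}$ for the pseudo-outcome \eqref{eq:pseudo-outcome} built with $\hat m$. Under Assumption~\ref{as:ident}, for any fixed $m$,
\[
\E[\tau_m(\bX^r,A,Y)\mid \bX^r,S=r]=\tau^r(\bX^r),
\]
because $\sum_a a\,\pi^r_a\,m/\pi^r_a=0$. So the estimated augmentation introduces \emph{no bias}, only extra conditional variance. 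This is the property that lets the nuisance error enter additively rather than through a product of errors.

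\textbf{Step 1 (oracle ERM).} With $\hat m$ held fixed, the final stage is least squares in $\cD$ with responses $\tau_{\hat m}$, whose regression function is $\tau^r$. Boundedness follows from $\rho\le\pi^r_a\le1-\rho$ together with bounded $Y$ and $\hat m$. Then the standard localized Rademacher oracle inequality for least squares \citep{bartlett2006local} gives, with probability at least $1-\gamma$,
\[
\Delta_2^2(\hat\tau,\tau^r)\le \Delta_2^2(\cD,\tau^r)+C\,\Rad_{n^r}^2(\cD)+C\log(1/\gamma)/n^r .
\]
The constant $C$ depends on the envelope of $\tau_{\hat m}$, which scales like $\rho^{-1}\{\|Y\|_\infty+\|\hat m\|_\infty\}$.

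\textbf{Step 2 (nuisance contribution).} The dependence on $\hat m$ enters through the noise level. The conditional variance of $\tau_{\hat m}$ decomposes as $\mathrm{Var}(\tau_{\tilde\mu^r}\mid\bX^r)$ plus a term bounded by $\rho^{-2}\,\E\{(\hat m-\tilde\mu^r)^2\mid\bX^r\}$. In the localized bound the noise level multiplies the complexity term, so the extra error is at most a constant times $\|\hat m-\tilde\mu^r\|_2^2$ plus $\Rad^2_{n^r}(\cD)$. Since $\htmu^r-\tilde\mu^r=\sum_a\pi^r_{-a}(\hat\mu^r_a-\mu^r_a)$ with weights bounded by 1, we get
\[
\|\hat m-\tilde\mu^r\|_2^2\le 2\sum_a\|\hat\mu^r_a-\mu^r_a\|_2^2\lesssim \sum_a\Rad^2_{n^r}(\cM^r_a)
\]
by Assumption~\ref{as:rates}. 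Converting this expectation bound to high probability uses Markov's inequality or a bounded-differences inequality. That step inflates only $C$ and the $\log(1/\gamma)/n^r$ term. Combining the two steps yields \eqref{eq:racer-bound}.

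\textbf{Step 3 (localized form).} For \eqref{eq:racer-localized}, invoke \eqref{eq:localized-generic}, which states $\Rad^2_n(\cH)\asymp c(\cH)/n^\eta$. This requires two conditions. First, $\Delta_2^2(\cD,\tau^r)=0$. Second, $\Rad^2_{n^r}(\cD)$ is dominated by the outcome-class terms; for example, $\tau^r=\mu^r_1-\mu^r_{-1}$ lies in $\cD$ and $\cD$ is no more complex than $\cM^r_1+\cM^r_{-1}$. Under these conditions all terms collapse to $\sum_a c(\cM^r_a)/(n^r)^{\eta_r}$, with $\eta_r$ the slowest exponent and the $\log(1/\gamma)/n^r$ term of lower order.

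\textbf{Main obstacle.} The main obstacle is Step 2: making the variance-inflation argument rigorous inside a localized Rademacher bound rather than a simple expectation bound. The complication is that the response $\tau_{\hat m}$ is random through $\hat m$. My first attempt would condition on the complement fold and treat $\hat m$ as deterministic, so that Step 1 applies verbatim. I would then bound the cross term $\E[(\tau_{\hat m}-\tau_{\tilde\mu^r})(f-\tau^r)]$, which vanishes by the mean-zero property above. What remains is a pure second-moment term controlled by $\|\hat m-\tilde\mu^r\|_2^2$. This is essentially the argument in \citet{asiaee2023leveraging}, adapted by replacing their covariate vector with $\bX^r$.
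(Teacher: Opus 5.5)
Your proof is correct at the paper's level of rigor, but its key step differs from the template the paper writes out. The paper gives no separate proof of Theorem~\ref{thm:racer} and defers to \citet{asiaee2023leveraging}. Its worked analogue, the proof of Theorem~\ref{thm:mroscar}, compares the plug-in pseudo-outcomes $\hat\phi$ with oracle ones $\phi^\star$ through the perturbation inequality \eqref{eq:mr-split-oracle}. The nuisance error therefore enters additively, as $C\,\E[(\hat\phi-\phi^\star)^2\mid S=r]\lesssim\sum_a\Delta_2^2(m_{\mathrm{aug},a},\tilde\mu^r_a\mid\bX^r)$, which is how a term like $\sum_a\Rad_{n^r}^2(\cM^r_a)$ naturally arises.

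You instead use that, with known randomization probabilities and $\hat m$ frozen by cross-fitting, $\tau_{\hat m}$ is itself conditionally unbiased for $\tau^r$. The final ERM then already targets $\tau^r$, and $\hat m-\tilde\mu^r$ only inflates the noise level. This gives the multiplicative form \eqref{eq:sroscar-aug-layer}, which the paper attributes to \citet{asiaee2023leveraging}, and you then relax it to an additive bound.

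Your route buys two things. First, it keeps the leading constant $1$ on $\Delta_2^2(\cD,\tau^r)$ that the statement asserts, whereas the perturbation route as written produces a constant larger than one there (see \eqref{eq:mr-pre-prop}). Second, it shows the outcome classes affect only the variance: with a uniformly bounded envelope, your Step~1 alone already yields \eqref{eq:racer-bound}, and the $\sum_a\Rad_{n^r}^2(\cM^r_a)$ term is slack. The paper's route buys modularity. It never uses exact unbiasedness, so it would survive estimated propensities, and the MR-OSCAR imputation terms slot into it additively.

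One correction. Markov's inequality turns the expectation bound of Assumption~\ref{as:rates} into a high-probability bound with a factor $1/\gamma$, not $\log(1/\gamma)$. Your $C$ would then depend on $\gamma$, and bounded differences is not generally available for ERM outputs. Use the high-probability localized oracle inequality for the nuisance fits instead, or simply rely on Step~1.

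You are also right that \eqref{eq:racer-localized} needs $\Delta_2^2(\cD,\tau^r)=0$ and $c(\cD)\lesssim\sum_a c(\cM^r_a)$, conditions the statement leaves implicit.
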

\noindent The bound~\eqref{eq:racer-bound} will serve as our baseline for
comparison with SR-OSCAR and MR-OSCAR: all three estimators share
the same approximation error \(\Delta_2^2(\cD,\tau^{r})\), but RACER
does not incur mismatch or imputation penalties and relies solely on
RCT-based outcome models whose complexity scales with \(n^r\).

\subsection{Error bounds for MR-OSCAR and imputation penalties}
\label{subsec:theory-mroscar}

We now turn to the mismatch-aware estimator MR-OSCAR, which augments
the RCT covariates with imputed OS-only covariates \(\widehat\bV\).
Under Assumption~\ref{as:transport}, the distribution of \(\bV\) given
\(\bZ\) is transportable across RCT and OS, and \(\bZ\) screens off the
dependence between \(\bU\) and \(\bV\) in the RCT.
Combined with the Lipschitz condition in
Assumption~\ref{as:lipschitz}, this lets us relate the error due to
imputation to the prediction error of the outcome models. Let \(\hat\tau_{\text{MR}}\) denote the CATE estimator constructed in \eqref{MR-OSCAR-estimator}, using an imputation class \(\cG\) with
oracle risk \(r_{\mathrm{im}}^2\) and final calibration class \(\cD\).
Let \(m_{\mathrm{aug},a}(\bX^r)\) denote the arm-specific augmentation
function implicitly defined by MR-OSCAR after imputation and
calibration, and define the corresponding augmentation error
\[
  \Delta_2^2(m_{\mathrm{aug},a},\tilde\mu^r_a\mid\bX^r)
    := \E\Bigl[
          \bigl\{
            m_{\mathrm{aug},a}(\bX^r)
            - \tilde\mu^r_a(\bX^r)
          \bigr\}^2 \mid S = r
        \Bigr].
\]

\begin{theorem}[Risk bound for MR-OSCAR]
\label{thm:mroscar}
Suppose Assumptions~\ref{as:ident}-\ref{as:transport} and
\ref{as:shift}-\ref{as:rates} hold, and the nuisances are estimated
with cross-fitting.
Then there exists a constant \(C>0\) such that, with probability at
least \(1-\gamma\) for all \(\gamma\in(0,1)\), we can bound $\Delta_2^2(\hat\tau_{\text{MR}},\tau^r)$ by 
{\small\begin{align}
\Delta_2^2(\cD,\tau^r)
+
C\Biggl[
    L^2 r_{\mathrm{im}}^2
    + \Rad_{n^r}^2(\cD)
    + \sum_{a\in\{-1,1\}}
      \Bigl(
        \Rad_{n^o}^2(\cM^{o,\mathrm{im}}_{a})
        +
        \Rad_{n^r}^2(\cD^{\mathrm{im}}_{a})
      \Bigr)
    + \Rad_{n^o}^2(\cG)
\Biggr]
+ C\,\frac{\log(1/\gamma)}{n^r}.
\label{eq:mroscar-bound}
\end{align}}
\end{theorem}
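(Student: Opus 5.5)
The plan is to reduce the statement to the generic pseudo-outcome regression bound behind Theorem~\ref{thm:racer}, with the RACER augmentation replaced by the MR-OSCAR augmentation $m_{\mathrm{aug},a}$. The oracle inequality from \citet{asiaee2023leveraging} is conditional on the cross-fitted nuisances. For the final calibration step \eqref{eq:robust-delta}, it gives, with probability at least $1-\gamma$,
\[
\Delta_2^2(\hat\tau_{\text{MR}},\tau^r)\lesssim \Delta_2^2(\cD,\tau^r)+\Rad_{n^r}^2(\cD)+\sum_{a}\Delta_2^2(m_{\mathrm{aug},a},\tilde\mu^r_a\mid\bX^r)+\frac{\log(1/\gamma)}{n^r}.
\]
This step uses Assumption~\ref{as:ident}. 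Since $\pi^r_a\ge\rho$, an error in the augmentation enters the pseudo-outcome only through a bounded multiple of $m_{\mathrm{aug},a}-\tilde\mu^r_a$. Because $\httau^{\mathrm{im}}$ is a fixed offset under cross-fitting, shifting the class $\cD$ by it does not change its Rademacher complexity. I expect a multiplicative constant depending on $\rho$, which is absorbed into $C$. All remaining work is bounding the augmentation error.

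Fix an arm $a$. I will compare $m_{\mathrm{aug},a}(\bX^r)=\hat\mu^{o,\mathrm{im}}_a(\bZ,\hat g(\bZ))+\hat\delta^{\mathrm{im}}_a(\bX^r,\hat g(\bZ))$ with $\tilde\mu^r_a$ through the chain of intermediate functions
\[
\mu^r_a(\bX^r,\hat g(\bZ)),\qquad \mu^r_a(\bX^r,g^\star(\bZ)),\qquad \E[\mu^r_a(\bX^r,\bV)\mid\bX^r,S=r]=\tilde\mu^r_a(\bX^r).
\]

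\begin{enumerate}[(i)]
\item \emph{Estimation term.} By Assumption~\ref{as:shift}, $\mu^r_a=\mu^o_a+\delta_a$. The first link is therefore bounded by the OS outcome error plus the discrepancy error, both evaluated at the imputed point. Assumption~\ref{as:rates} gives $\Rad_{n^o}^2(\cM^{o,\mathrm{im}}_a)+\Rad_{n^r}^2(\cD^{\mathrm{im}}_a)$.
\item \emph{Imputation-estimation term.} By the Lipschitz property in Assumption~\ref{as:lipschitz}, the second link is at most $L^2\|\hat g-g^\star\|^2$, which is at most $L^2\Rad_{n^o}^2(\cG)$. Transportability (Assumption~\ref{as:transport}) is what allows the OS-trained $\hat g$ to be evaluated under the RCT distribution of $\bZ$; any overlap constant goes into $C$.
\item \emph{Oracle imputation term.} By Assumption~\ref{as:transport}, $\bV\ind\bU\mid\bZ$ in the RCT, so $\tilde\mu^r_a(\bX^r)$ is an average of $\mu^r_a(\bX^r,\bV)$ over $\bV\mid\bZ$. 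By Jensen and Lipschitzness, the third link is at most
\[
\E\bigl[\{\mu^r_a(\bX^r,g^\star(\bZ))-\mu^r_a(\bX^r,\bV)\}^2\mid S=r\bigr]\le L^2 r_{\mathrm{im}}^2 .
\]
\end{enumerate}
Summing the three links with $(x+y+z)^2\le 3(x^2+y^2+z^2)$ and over both arms yields the bracket in \eqref{eq:mroscar-bound}.

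The main obstacle is step (i). The fitted $\hat\mu^{o,\mathrm{im}}_a$ is trained on OS covariates $(\bZ,\bV)$ under the OS distribution but evaluated at $(\bZ,\hat g(\bZ))$ in the RCT. Likewise, $\hat\delta^{\mathrm{im}}_a$ is trained on imputed inputs while its target $\delta_a$ is defined on true $\bV$. A cleaner first attempt is to define the calibration target as the best $\cD^{\mathrm{im}}_a$ fit of $Y-\hat\mu^{o,\mathrm{im}}_a(\bZ,\hat g(\bZ))$ on $(\bX^r,\hat g(\bZ))$. Its population counterpart differs from $\delta_a(\bX^r,\hat g(\bZ))$ plus the outcome-model error only by quantities already controlled, via Lipschitzness, by $L^2(r_{\mathrm{im}}^2+\Rad_{n^o}^2(\cG))$. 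The OS outcome error then has to be transferred from the OS to the RCT covariate distribution. I would handle this with the same overlap and transport assumptions, absorbing density-ratio constants into $C$, and accept that this is where the constant $C$ quietly depends on overlap.
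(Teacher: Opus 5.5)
Your argument follows the paper's proof. The paper's Steps~1--2 reduce the risk to $\sum_a\Delta_2^2(m_{\mathrm{aug},a},\tilde\mu^r_a\mid\bX^r)$ via the oracle ERM bound plus a plug-in perturbation. Proposition~\ref{prop:mroscar-aug} then supplies the two estimation terms and the Jensen--Lipschitz term $L^2r_{\mathrm{im}}^2$, and Step~4 adds $\Rad_{n^o}^2(\cG)$ by Lipschitzness. Your chain through $\mu^r_a(\bX^r,\hat g(\bZ))$ is, if anything, tidier than the paper's Step~4. You only use Lipschitzness of the true $\mu^r_a$, whereas the paper perturbs the \emph{fitted} $\hat\mu^{o,\mathrm{im}}_a,\hat\delta^{\mathrm{im}}_a$ in $\bm v$, which Assumption~\ref{as:lipschitz} does not cover.

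The step that would fail is the density-ratio patch in your last paragraph. The RCT law of $(\bZ,\hat g(\bZ))$ lives on the graph of $\hat g$. This graph typically has probability zero under the OS law of $(\bZ,\bV)$: this happens whenever $\bV\mid\bZ$ has no atoms, and in the Greenlight analysis $\widehat V\in(0,1)$ while the OS has $V\in\{0,1\}$. So no finite density ratio exists, and an $L_2(P^o)$ rate for $\hat\mu^{o,\mathrm{im}}_a$ does not by itself control its error at the imputed inputs. This affects only the OS outcome model: $\hat\delta^{\mathrm{im}}_a$ is trained on the imputed inputs, and moving $\hat g-g^\star$ from the OS to the RCT law of $\bZ$ does go through under overlap. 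The paper does not resolve this either; its proof of Proposition~\ref{prop:mroscar-aug} simply applies Assumption~\ref{as:rates} at $(\bZ,\widehat\bV)$ under the RCT law. You should likewise build that into the rate assumption, or exploit class structure (e.g.\ linear classes, where coefficient error controls prediction error under any input law with comparable second moments). Finally, your reduction, like the paper's Step~2, yields $\Delta_2^2(\cD,\tau^r)$ only up to a constant factor, not with the coefficient one in the statement. The paper drops that factor without justification in its last display.
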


The bound~\eqref{eq:mroscar-bound} closely parallels the SR-OSCAR bound
\eqref{eq:sroscar-bound} derived in the supplement, but with three key
differences. First, whereas the SR-OSCAR bound contains the
shared-only mismatch penalty $B_{\mathrm{sh}}^2$ arising from discarding $\bV$,
the MR-OSCAR bound replaces this structural penalty by an explicit
imputation term. In particular, the Lipschitz-imputation factor
$L^2 r_{\mathrm{im}}^2$ quantifies the additional error incurred by working
with imputed or predicted OS-only covariates $\widehat{\bV}$ rather than observing
$\bV$ directly. When $\bV$ is highly predictable from $\bZ$, the
oracle imputation risk $r_{\mathrm{im}}^2$ is small and the MR-OSCAR bound
approaches the ideal case in which the full covariate vector is
available in the RCT. Second, MR-OSCAR operates in the full
$\bX^o$ space, so the OS outcome and RCT discrepancy complexities
enter through $\Rad_{n^o}^2(\cM^{o,\mathrm{im}}_{a})$ and
$\Rad_{n^r}^2(\cD^{\mathrm{im}}_{a})$, reflecting the cost of
using richer models that exploit $\bV$. Finally, the complexity of
the imputation class $\cG$ appears via $\Rad_{n^o}^2(\cG)$,
capturing the estimation error in fitting the map
$g:\bZ \mapsto \widehat{\bV}$.  Together, these terms make explicit
how MR-OSCAR trades the shared-only mismatch penalty in SR-OSCAR for a
combination of imputation error and additional modeling flexibility in
the enlarged covariate space.

Together, \Cref{thm:mroscar} and Theorem~\ref{thm:sroscar} in the Supplement provide
transparent conditions under which borrowing from the OS is beneficial.
For example, MR-OSCAR improves on SR-OSCAR when (i) the OS-only
covariates \(\bV\) carry substantial predictive signal for the outcome;
(ii) \(\bV\) can be imputed from \(\bZ\) with small oracle risk
\(r_{\mathrm{im}}^2\); and (iii) the classes \(\cM^{o,\mathrm{im}}_{a}\),
\(\cD^{\mathrm{im}}_{a}\), and \(\cG\) have manageable complexity so that
their Rademacher penalties decay reasonably with \(n^o\) and \(n^r\). At the same time, the bound \eqref{eq:mroscar-bound} makes explicit how
negative transfer is controlled.
If \(\bV\) is poorly predictable from \(\bZ\) (so that \(r_{\mathrm{im}}^2\) is
large) or if the outcome models on \(\bX^o\) are highly complex, the
additional penalties in \eqref{eq:mroscar-bound} may outweigh the gains
from accessing \(\bV\).
In such arrangements, one can tune the penalties in
\eqref{eq:imputation-outcome-model}-\eqref{eq:imputation-discrepancy}
to shrink the influence of OS-only covariates, effectively reverting
MR-OSCAR toward RACER and recovering the robust RCT-only behavior.
Thus the theory highlights how MR-OSCAR can be deployed in a
``safe borrowing'' mode: exploiting informative \(\bV\) when imputation
is reliable, while automatically dampening their effect when it is not.

\paragraph{Augmentation-error decomposition for MR-OSCAR.}

Analogously to \eqref{eq:sroscar-aug-layer}, the analysis of
MR-OSCAR yields an intermediate bound in terms of augmentation errors:
\[
  \Delta_2^2(\hat\tau_{\text{MR}},\tau^r)
  \;\lesssim\;
  \Delta_2^2(\cD,\tau^r)
  +
  \Biggl(
	    1 + \sum_{a\in\{-1,1\}}
	            \Delta_2^2(m_{\mathrm{aug},a},\tilde\mu^r_a\mid\bX^r)
	  \Biggr)\Rad_{n^r}(\cD),
\]
up to negligible residual terms.
The next proposition shows how the augmentation errors decompose into
complexity and imputation terms.

\begin{proposition}[Augmentation error for MR-OSCAR]
\label{prop:mroscar-aug}
Under the conditions of Theorem~\ref{thm:mroscar}, there exist
constants \(C<\infty\), \(\eta_o>0\), and \(\eta_r>0\) such that, for
each arm \(a\),
\begin{equation}
  \Delta_2^2(m_{\mathrm{aug},a},\tilde\mu^r_a\mid\bX^r)
  \;\lesssim\;
  \frac{c(\cM^{o,\mathrm{im}}_{a})}{(n^o)^{\eta_o}}
  +
  \frac{c(\cD^{\mathrm{im}}_{a})}{(n^r)^{\eta_r}}
  +
  L^2 r_{\mathrm{im}}^2.
  \label{eq:mroscar-aug-bound}
\end{equation}
\end{proposition}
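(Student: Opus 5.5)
The plan is to identify $m_{\mathrm{aug},a}$ explicitly as the calibrated MR-OSCAR arm prediction evaluated at the imputed covariate, $m_{\mathrm{aug},a}(\bX^r)=\hat\mu^{o,\mathrm{im}}_a(\bZ,\hat g(\bZ))+\hat\delta^{\mathrm{im}}_a(\bX^r,\hat g(\bZ))$. We then compare it with $\tilde\mu^r_a$ through a chain of intermediate oracle functions, and bound each link by Assumption~\ref{as:rates}, the localized form \eqref{eq:localized-generic}, or Assumption~\ref{as:lipschitz}.

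Write $\mu^{r,\star}_a(\bm x^r,\bm v):=\mu^o_a(\bm z,\bm v)+\delta_a(\bm x^r,\bm v)=\mu^r_a(\bm x^r,\bm v)$, using Assumption~\ref{as:shift}. The two intermediate functions are $h_1(\bX^r)=\mu^{r,\star}_a(\bX^r,\hat g(\bZ))$ (the truth evaluated at the estimated imputation) and $h_2(\bX^r)=\mu^{r,\star}_a(\bX^r,g^\star(\bZ))$. The triangle inequality with $(x+y+z)^2\le 3(x^2+y^2+z^2)$ then gives three terms.
\begin{itemize}
\item[(i)] The term $\|m_{\mathrm{aug},a}-h_1\|^2$. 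Split it as $\|\hat\mu^{o,\mathrm{im}}_a-\mu^o_a\|^2+\|\hat\delta^{\mathrm{im}}_a-\delta_a\|^2$, both evaluated on $(\bZ,\hat g(\bZ))$ under the RCT law.
\item[(ii)] The term $\|h_1-h_2\|^2\le L^2\,\E\|\hat g(\bZ)-g^\star(\bZ)\|^2$, which follows from Assumption~\ref{as:lipschitz} applied to both $\mu^o_a$ and $\delta_a$.
\item[(iii)] The term $\|h_2-\tilde\mu^r_a\|^2$. This is the step where transportability and the Lipschitz constant produce $L^2r_{\mathrm{im}}^2$.
\end{itemize}

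For (iii), Assumption~\ref{as:transport} gives $\bV\ind\bU\mid\bZ$ in the RCT and $P(\bV\mid\bZ,S=r)=P(\bV\mid\bZ,S=o)$. Hence $\tilde\mu^r_a(\bX^r)=\E[\mu^r_a(\bX^r,\bV)\mid\bX^r,S=r]$, where the conditional law of $\bV$ depends only on $\bZ$. Jensen's inequality then gives
\[
\bigl(h_2-\tilde\mu^r_a\bigr)^2\le \E\bigl[\{\mu^r_a(\bX^r,g^\star(\bZ))-\mu^r_a(\bX^r,\bV)\}^2\mid\bX^r,S=r\bigr]\le L^2\,\E[\|\bV-g^\star(\bZ)\|^2\mid\bX^r,S=r].
\]
Taking expectations yields $L^2 r_{\mathrm{im}}^2$. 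For (ii), Assumption~\ref{as:rates} gives $\E\|\hat g-g^\star\|^2\lesssim\Rad^2_{n^o}(\cG)$ under the OS law of $\bZ$. Transporting this to the RCT law of $\bZ$ needs the overlap/density-ratio remark that follows the definition of $r_{\mathrm{im}}^2$. The resulting contribution is of imputation-estimation order, and I would absorb it into the $L^2r_{\mathrm{im}}^2$ term, treating $g^\star$ as the population minimizer so that the estimation error is dominated by the oracle risk up to constants.

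The main obstacle is (i). The first-stage estimates $\hat\mu^{o,\mathrm{im}}_a$ are trained on the OS distribution of $(\bZ,\bV)$ but evaluated at $(\bZ,\hat g(\bZ))$ under the RCT law. The discrepancy $\hat\delta^{\mathrm{im}}_a$ is trained on RCT residuals $Y-\hat\mu^{o,\mathrm{im}}_a(\bZ,\widehat\bV)$, whose regression target is not exactly $\delta_a(\bX^r,\widehat\bV)$. I would handle this by viewing the discrepancy ERM as estimating the best approximation, in $\cD^{\mathrm{im}}_a$, of $\E[Y\mid\bX^r,\widehat\bV,A=a,S=r]-\hat\mu^{o,\mathrm{im}}_a$. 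With cross-fitting, $\hat\mu^{o,\mathrm{im}}_a$ and $\hat g$ are fixed functions, so \eqref{eq:localized-generic} gives the rate $c(\cD^{\mathrm{im}}_a)/(n^r)^{\eta_r}$ around that target. The target itself differs from $\delta_a(\bX^r,\widehat\bV)+\mu^o_a-\hat\mu^{o,\mathrm{im}}_a$ only by $\mu^r_a(\bX^r,\bV)$ versus $\mu^r_a(\bX^r,\widehat\bV)$ inside a conditional expectation. By the Lipschitz argument above, that difference is again controlled by $L^2r_{\mathrm{im}}^2$ plus the imputation-estimation term. The OS-model error $\|\hat\mu^{o,\mathrm{im}}_a-\mu^o_a\|^2$ is partly absorbed by the calibration, since the residual target includes it. What remains is bounded by $c(\cM^{o,\mathrm{im}}_a)/(n^o)^{\eta_o}$, using the same overlap argument to change measure from the OS to the RCT distribution.

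Collecting (i)–(iii) gives \eqref{eq:mroscar-aug-bound}, with constants depending on $\rho$, $L$, and the overlap constant.
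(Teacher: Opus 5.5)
Your links (i) and (iii) are the paper's argument. It adds and subtracts $\mu^r_a=\mu^o_a+\delta_a$ at the imputed covariate, charges the two ERM errors to Assumption~\ref{as:rates}, and bounds $\mu^r_a(\bX^r,g^\star(\bZ))-\tilde\mu^r_a(\bX^r)$ by conditional Jensen plus Lipschitzness. Transportability is not actually needed there, since $g^\star(\bZ)$ is $\bX^r$-measurable and $r_{\mathrm{im}}^2$ is already an RCT-law quantity.

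\textbf{The gap is link (ii).} By evaluating $m_{\mathrm{aug},a}$ at $\hat g(\bZ)$ you create an extra term $L^2\,\E\|\hat g(\bZ)-g^\star(\bZ)\|^2$. The statement has no slot for it, so you absorb it into $L^2 r_{\mathrm{im}}^2$ on the grounds that $g^\star$ is the population minimizer. That does not work. Assumption~\ref{as:rates} bounds this error only by $\Rad_{n^o}^2(\cG)$, a complexity term with no relation to the oracle risk. Nothing prevents $r_{\mathrm{im}}^2$ from being zero, or much smaller than $\Rad_{n^o}^2(\cG)$ at a given $n^o$. In the well-specified case with cross-fitting, the realized imputation risk is exactly $\E\|\bV-\hat g(\bZ)\|^2=r_{\mathrm{im}}^2+\E\|\hat g(\bZ)-g^\star(\bZ)\|^2$, so the estimation error is genuinely additional. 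A noise-proportional rate would let you absorb it for $n^o$ large relative to the sparsity. The LASSO bound in the proof of Theorem~\ref{thm:MR-linear} is one example, since its estimation error scales with $\sum_j s_j\sigma_{o,j}^2$. But Assumption~\ref{as:rates} supplies no such rate, which is why Theorem~\ref{thm:mroscar} carries a separate $\Rad_{n^o}^2(\cG)$ term.

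The paper's proof avoids the issue by proving the proposition for the oracle-imputed augmentation, setting $\widehat\bV:=g^\star(\bZ)$. Then your $h_1$ and $h_2$ coincide and (ii) disappears. The substitution of $\hat g$ for $g^\star$ is deferred to Step~4 of the proof of Theorem~\ref{thm:mroscar}. Adopt that convention, or add an $L^2\Rad_{n^o}^2(\cG)$ term to the right-hand side, and the rest of your argument goes through.

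\textbf{A secondary issue in (i).} The overlap remark after the definition of $r_{\mathrm{im}}^2$ compares $P(\bZ\mid S=r)$ with $P(\bZ\mid S=o)$. It therefore cannot transfer an $L_2$ bound on $\hat\mu^{o,\mathrm{im}}_a-\mu^o_a$ under the OS law of $(\bZ,\bV)$ to the law of $(\bZ,\widehat\bV)$. That law is concentrated on the graph of the imputation map and is typically singular with respect to the joint law of $(\bZ,\bV)$, so no density ratio exists. The paper simply reads Assumption~\ref{as:rates} as delivering the rate at the imputed point. If you want to derive it, you need regularity of the fitted class in $\bm v$ (for instance, $\cM^{o,\mathrm{im}}_a$ Lipschitz in $\bm v$). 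That adds another term of order $L^2 r_{\mathrm{im}}^2$, which is compatible with the stated bound.
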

\noindent Combining this with the generic localized bound
\eqref{eq:localized-generic} yields a localized version of
\eqref{eq:mroscar-bound}:
\begin{equation}
  \Delta_2^2(\hat\tau_{\text{MR}},\tau^r)
  \;\lesssim\;
  \Delta_2^2(\cD,\tau^r)
  +
  L^2 r_{\mathrm{im}}^2
  +
  \frac{c(\cD)}{(n^r)^{\eta_r}}
  +
  \sum_{a\in\{-1,1\}}
  \biggl\{
    \frac{c(\cM^{o,\mathrm{im}}_{a})}{(n^o)^{\eta_o}}
    +
    \frac{c(\cD^{\mathrm{im}}_{a})}{(n^r)^{\eta_r}}
  \biggr\}
  +
  \frac{c(\cG)}{(n^o)^{\eta_o}}.
  \label{eq:mroscar-localized}
\end{equation}
\noindent Similar bounds for SR-OSCAR have been derived in the Supplement. The localized bounds \eqref{eq:sroscar-localized} and
\eqref{eq:mroscar-localized} also yield conditions under which
MR-OSCAR can strictly dominate SR-OSCAR. MR-OSCAR improves on SR-OSCAR when the additional
complexity of modeling \(\bX^o\) and the imputation error \(r_{\mathrm{im}}^2\)
are more than offset by the reduction in mismatch penalty obtained from
accessing \(\bV\). This is concretely formalized in the following Corollary. 

\begin{corollary}[MR-OSCAR can dominate SR-OSCAR]
\label{cor:mroscar-vs-sroscar}
Under the assumptions of Theorems~\ref{thm:sroscar} in the Supplement
and~\Cref{thm:mroscar} with the same CATE class \(\cD\) and RCT size
\(n^r\), we have that for all sufficiently large \(n^r,n^o\), $\Delta_2^2(\hat\tau_{\text{MR}},\tau^r)
  \;\le\;
  \Delta_2^2(\hat\tau_{\text{SR}},\tau^r)$
whenever
\begin{align*}
\sum_{a\in\{-1,1\}}\!\Bigg\{
  \frac{c(\cM^{o,\mathrm{im}}_{a})}{(n^o)^{\eta_o}}
  +\frac{c(\cD^{\mathrm{im}}_{a})}{(n^r)^{\eta_r}}
  +L^2 r_{\mathrm{im}}^2
\Bigg\} + \frac{c(\cG)}{(n^o)^{\eta_o}}
<
\sum_{a\in\{-1,1\}}\!\Bigg\{
  \frac{c(\cM^{o,\mathrm{sh}}_{a})}{(n^o)^{\eta_o}}
  +\frac{c(\cD^{\mathrm{sh}}_{a})}{(n^r)^{\eta_r}}
  +B_{\mathrm{sh}}^2
\Bigg\}.
\end{align*}
\end{corollary}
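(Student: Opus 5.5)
The plan is to compare the two localized bounds directly. By \eqref{eq:mroscar-localized}, which comes from Theorem~\ref{thm:mroscar} and Proposition~\ref{prop:mroscar-aug}, $\Delta_2^2(\hat\tau_{\text{MR}},\tau^r)$ is at most $\Delta_2^2(\cD,\tau^r)+c(\cD)/(n^r)^{\eta_r}$ plus the left-hand side of the displayed inequality, up to a constant. The SR-OSCAR analogue \eqref{eq:sroscar-localized} from the Supplement gives the same structure for $\hat\tau_{\text{SR}}$, with the right-hand side of the displayed inequality in place of the left-hand side. Both estimators use the same class $\cD$ and the same $n^r$. So the approximation term $\Delta_2^2(\cD,\tau^r)$ and the final-stage complexity $c(\cD)/(n^r)^{\eta_r}$ are common to both bounds and cancel in the comparison. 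The $\log(1/\gamma)/n^r$ residuals are also common, and I would handle them the same way.

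The key steps, in order, are as follows.

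\begin{enumerate}
\item Write both risks via the intermediate augmentation-layer bounds. That is, use
\[
\Delta_2^2(\hat\tau,\tau^r)\lesssim\Delta_2^2(\cD,\tau^r)+\Bigl(1+\sum_a\Delta_2^2(m_{\mathrm{aug},a},\tilde\mu^r_a\mid\bX^r)\Bigr)\Rad_{n^r}(\cD),
\]
together with its counterpart \eqref{eq:sroscar-aug-layer}. Working here isolates the method-specific part of each risk in the augmentation errors.
\item Bound the MR augmentation errors by Proposition~\ref{prop:mroscar-aug}, and the SR augmentation errors by the corresponding supplementary proposition. The latter involves $c(\cM^{o,\mathrm{sh}}_a)/(n^o)^{\eta_o}+c(\cD^{\mathrm{sh}}_a)/(n^r)^{\eta_r}+B_{\mathrm{sh}}^2$. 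For the MR side, add the imputation-class term $c(\cG)/(n^o)^{\eta_o}$.
\item Conclude that the method-dependent part of the MR bound is strictly smaller than that of the SR bound exactly when the hypothesis holds.
\end{enumerate}

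The main obstacle is that two upper bounds do not by themselves order the actual risks. To get $\Delta_2^2(\hat\tau_{\text{MR}},\tau^r)\le\Delta_2^2(\hat\tau_{\text{SR}},\tau^r)$, the SR side also needs a matching \emph{lower} bound. I would try first to show that $B_{\mathrm{sh}}^2$ enters the SR risk from below. The argument would be that the preliminary CATE and the CMO of SR-OSCAR are restricted to $\cH^{\mathrm{sh}}_a$, so the SR augmentation error is at least $B_{\mathrm{sh}}^2$ by the definition \eqref{eq:BZ-def}. Under the non-degeneracy implicit in ``sufficiently large $n^r,n^o$'', the variance inflation this causes in the pseudo-outcome regression is then of order $B_{\mathrm{sh}}^2$ times a nonvanishing factor.

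If such a two-sided statement is not available, I would fall back on the interpretation standard in this literature. Under that reading, the corollary compares the guaranteed rates, i.e.\ the bounds \eqref{eq:mroscar-localized} and \eqref{eq:sroscar-localized} evaluated with the same universal constant. The strict inequality in the hypothesis then absorbs constants once $n^r,n^o$ are large enough that the common residual terms are negligible. Either way, the last step is to take $n^r,n^o$ large so that the hidden constants and the $\log(1/\gamma)/n^r$ terms are dominated by the strict gap. The strict gap stays bounded away from zero when $B_{\mathrm{sh}}^2>L^2r_{\mathrm{im}}^2$, which is the nondegenerate case of the hypothesis.
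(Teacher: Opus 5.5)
Your steps 1--3, together with your fallback reading, are what the paper does. It gives no separate proof of Corollary~\ref{cor:mroscar-vs-sroscar}; it simply sets \eqref{eq:mroscar-localized} beside \eqref{eq:sroscar-localized} and cancels the common terms $\Delta_2^2(\cD,\tau^r)$ and $c(\cD)/(n^r)^{\eta_r}$. You are right that this orders only the two high-probability guarantees, not the realized risks, and the paper offers nothing beyond this comparison.

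Be careful about constants: the comparison needs the \emph{same} constant in front of both method-specific sums. With a common constant the shared terms cancel exactly, and there is nothing to absorb. With different hidden constants the strict inequality is not preserved, and taking $n^r,n^o$ large does not help, because the dominant terms $L^2r_{\mathrm{im}}^2$ and $B_{\mathrm{sh}}^2$ do not shrink with sample size.

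The lower-bound repair you try first would fail, for three reasons.
\begin{enumerate}
\item \emph{The augmentation error enters only through variance.} Because $\pi^r_a$ is known and the augmentation is cross-fitted, $\E[\tau_m(\bX^r,A,Y)\mid\bX^r,S=r]=\tau^r(\bX^r)$ for \emph{every} augmentation $m$. So SR-OSCAR's augmentation error never biases the final regression; it only inflates the conditional variance of the pseudo-outcome. In \eqref{eq:sroscar-aug-layer}, your own step~1, it is multiplied by $\Rad_{n^r}(\cD)\to0$. The ``nonvanishing factor'' you need is therefore the final-stage complexity, which vanishes, so no lower bound by a fixed multiple of $B_{\mathrm{sh}}^2$ can hold for large $n^r$. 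The additive $B_{\mathrm{sh}}^2$ in \eqref{eq:sroscar-bound} comes from a loose additive plug-in step like \eqref{eq:mr-split-oracle}. It is loose because $\hat\phi-\phi^\star=-A(\hat m-m^\star)/\pi^r_A$ has conditional mean zero, so it does not reflect a genuine bias.
\item \emph{Even the variance inflation is not bounded below by $B_{\mathrm{sh}}^2$.} With $\pi^r_a\equiv1/2$, the excess conditional variance of $\tau_m$ over $\tau_{\tilde\mu^r}$ is $4\{m(\bX^r)-\tilde\mu^r(\bX^r)\}^2$. This depends only on the CMO error $\sum_a\pi^r_{-a}(m_{\bZ,a}-\tilde\mu^r_a)$. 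Per-arm errors of opposite sign cancel there, even though $\sum_a\Delta_2^2(m_{\bZ,a},\tilde\mu^r_a\mid\bX^r)\ge B_{\mathrm{sh}}^2$.
\item \emph{The offset does not help either.} In the sparse linear specification, both $\httau^{\mathrm{sh}}$ and $\httau^{\mathrm{im}}$ are affine in $\bX^r$. Adding them to the affine class $\cD$ changes nothing, so both risks tend to the same limit $\Delta_2^2(\cD,\tau^r)$.
\end{enumerate}

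Consequently, $B_{\mathrm{sh}}^2>L^2r_{\mathrm{im}}^2$ produces a gap between the two \emph{bounds} that stays away from zero, but not a gap between the risks. The difference in risks is a vanishing random estimation term, and the displayed hypothesis does not determine its sign. What your argument, like the paper's, actually establishes is the comparison of guarantees, not the literal inequality $\Delta_2^2(\hat\tau_{\text{MR}},\tau^r)\le\Delta_2^2(\hat\tau_{\text{SR}},\tau^r)$.
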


\subsection{Specialization of risk bounds to sparse linear models}
\label{subsec:linear-example}

The general bounds above apply to a wide range of nonparametric
learners.
To provide more concrete guidance, we now specialize to sparse linear
models, which yield explicit sample-size and signal-strength thresholds. Assume that all covariates are centered and bounded, and that the true
CATE belongs to a sparse linear class,
\[
  \tau^{r}(\bm x^r)
    = \bm\beta^{\top}\bm x^r,
  \qquad \|\bm\beta\|_0 \le s_\tau,
\]
for some sparsity level \(s_\tau\). Similarly, the OS outcome models,
discrepancy models, and imputation map are assumed to lie in sparse
linear classes over their respective covariates. Because the SR- and MR-OSCAR
pipelines operate on different covariate spaces, each class carries its own
sparsity:
\(\cM^{o,\mathrm{sh}}_{a}\) on \(\bZ\in\R^{p_z}\)
with sparsity \(s_\mu^{\mathrm{sh}}\),
\(\cM^{o,\mathrm{im}}_{a}\) on \(\bX^o\in\R^{p_o}\)
with sparsity \(s_\mu^{\mathrm{im}}\),
\(\cD^{\mathrm{sh}}_{a}\) on \(\bX^r\in\R^{p_r}\)
with sparsity \(s_\delta^{\mathrm{sh}}\),
\(\cD^{\mathrm{im}}_{a}\) on \((\bX^r,\widehat\bV)\in\R^{p}\)
with sparsity \(s_\delta^{\mathrm{im}}\), and
\(\cG\) on \(\bZ\in\R^{p_z}\).
Since \(\bZ\subset\bX^o\) and \(\bX^r\subset(\bX^r,\widehat\bV)\), any
sparse model on the smaller space embeds into the larger one (by zeroing
the extra coordinates), so
\(s_\mu^{\mathrm{sh}}\le s_\mu^{\mathrm{im}}\) and
\(s_\delta^{\mathrm{sh}}\le s_\delta^{\mathrm{im}}\).
The imputation map \(g:\bZ\to\bV\) predicts each coordinate
\(V_j\) independently from \(\bZ\) via a sparse regression with
sparsity \(s_j:=\|\lambda_{o,j}\|_0\), where \(\lambda_{o,j}\) is the
\(j\)-th row of the coefficient matrix
\(\bm\Lambda_o\in\R^{p_v\times p_z}\).
Penalties are chosen to implement (group) LASSO-type estimators. Standard high-dimensional linear theory then gives
\[
  \Rad_{n^r}^2(\cD)
    \lesssim \frac{s_\tau\log p_r}{n^r},
  \quad
  \Rad_{n^o}^2(\cM^{o,\mathrm{sh}}_{a})
    \lesssim \frac{s_\mu^{\mathrm{sh}}\log p_z}{n^o},
  \quad
  \Rad_{n^o}^2(\cM^{o,\mathrm{im}}_{a})
    \lesssim \frac{s_\mu^{\mathrm{im}}\log p_o}{n^o},
\]
\[
  \Rad_{n^r}^2(\cD^{\mathrm{sh}}_{a})
    \lesssim \frac{s_\delta^{\mathrm{sh}}\log p_r}{n^r},
  \quad
  \Rad_{n^r}^2(\cD^{\mathrm{im}}_{a})
    \lesssim \frac{s_\delta^{\mathrm{im}}\log p}{n^r},
  \quad
  \Rad_{n^o}^2(\cG)
    \lesssim \frac{(\sum_j s_j)\log p_z}{n^o}.
\]
Plugging these into
\eqref{eq:sroscar-bound}--\eqref{eq:mroscar-bound} yields
\[
  \Delta_2^2\bigl(\hat\tau_{\text{SR}},\tau^r\bigr)
    \lesssim
      \Delta_2^2(\cD,\tau^r)
      +
      B_{\mathrm{sh}}^2
      +
      \frac{(s_\tau+s_\delta^{\mathrm{sh}})\log p_r}{n^r}
      +
      \frac{s_\mu^{\mathrm{sh}}\log p_z}{n^o},
\]
and
\[
  \Delta_2^2\bigl(\hat\tau_{\text{MR}},\tau^r\bigr)
    \lesssim
      \Delta_2^2(\cD,\tau^r)
      +
      L^2 r_{\mathrm{im}}^2
      +
      \frac{s_\tau\log p_r}{n^r}
      +
      \frac{s_\delta^{\mathrm{im}}\log p}{n^r}
      +
      \frac{s_\mu^{\mathrm{im}}\log p_o}{n^o}
      +
      \frac{(\sum_j s_j)\log p_z}{n^o},
\]
up to constants and lower-order logarithmic factors.
The ordering \(s_\mu^{\mathrm{sh}}\le s_\mu^{\mathrm{im}}\) and
\(s_\delta^{\mathrm{sh}}\le s_\delta^{\mathrm{im}}\) together with
\(p_z\le p_o\) and \(p_r\le p\) show that MR-OSCAR incurs a strictly
larger statistical cost than SR-OSCAR in every Rademacher term, and
it pays an additional imputation estimation cost
\((\sum_j s_j)\log p_z/n^o\) that SR-OSCAR avoids entirely.
MR-OSCAR's sole advantage is replacing the shared-only mismatch penalty
\(B_{\mathrm{sh}}^2\) with the imputation risk \(L^2 r_{\mathrm{im}}^2\),
which can be substantially smaller when \(\bV\) is predictable from
\(\bZ\). Because the extra statistical and imputation costs are all
\(O(1/n^o)\) or \(O(1/n^r)\), they vanish with sample size, whereas
\(B_{\mathrm{sh}}^2\) and \(r_{\mathrm{im}}^2\) are population-level
quantities that persist regardless of sample size. Thus,
with sufficiently large \(n^o\) and \(n^r\), the comparison reduces
to \(B_{\mathrm{sh}}^2\) versus \(L^2 r_{\mathrm{im}}^2\).
When \(\bV\) is weakly predictive of \(Y\) so that including it in
the OS outcome model does not meaningfully reduce the mismatch
\(B_{\mathrm{sh}}^2\), then \(L^2 r_{\mathrm{im}}^2\) is comparable to
or larger than \(B_{\mathrm{sh}}^2\), and the extra statistical costs of
MR-OSCAR offer no compensating gain; in this regime SR-OSCAR is
preferable.
When \(\bV\) is difficult to impute from \(\bZ\) (low
\(R^2(\bV\!\mid\!\bZ)\)), the imputation risk \(r_{\mathrm{im}}^2\) is
large and introduces noise that further inflates the MR-OSCAR bound,
again favoring SR-OSCAR or even the RCT-only RACER.

To further interpret \(r_{\mathrm{im}}^2\), consider the linear-Gaussian
imputation model $\bV = \bm\Lambda_s \bZ + \varepsilon_s, s\in\{o,r\},$ with \(\varepsilon_s\!\perp\!\bZ\),
\(\mathrm{Cov}(\varepsilon_s)=\bm\Sigma_{\bV\mid \bZ}^{\,s}\), and
\(\mathrm{Cov}_s(\bZ)=\bm\Sigma_{\bZ\bZ}^{\,s}\).
In the population (oracle) case where the true map \(\bm\Lambda_o\) is
known and used for imputation in both sources, one obtains $r_{\mathrm{im}}^2
  =
  \tr\!\bigl(
    (\bm\Lambda_o-\bm\Lambda_r)\,
    \bm\Sigma_{\bZ\bZ}^{\,r}\,
    (\bm\Lambda_o-\bm\Lambda_r)^\top
  \bigr)
  +
  \tr\!\bigl(\bm\Sigma_{\bV\mid \bZ}^{\,r}\bigr),$
which cleanly separates an OS\(\to\)RCT mean-relation shift term and an
irreducible RCT noise term. Under Assumption~\ref{as:transport}, the conditional law \(P(\bV\mid\bZ,S=o)=P(\bV\mid\bZ,S=r)\) implies \(\bm\Lambda_o=\bm\Lambda_r\) in this linear model (so the shift trace term is \(0\)), and the oracle imputation risk reduces to \(r_{\mathrm{im}}^2=\tr\!\bigl(\bm\Sigma_{\bV\mid \bZ}^{\,r}\bigr)\). If instead \(\bm\Lambda_o\) is learned in the OS via a row-sparse LASSO
with per-coordinate sparsity \(s_j:=\|\lambda_{o,j}\|_0\) (as introduced above), then one can establish the bound stated in the following theorem.

\begin{theorem}[MR-OSCAR risk bound in the sparse linear setting]\label{thm:MR-linear}
Assume $\bX$ is sub-Gaussian and the population Gram matrices on the relevant supports satisfy restricted eigenvalue (RE) conditions \citep{van2009conditions}. Under Assumptions \ref{as:transport} and \ref{as:lipschitz}, the imputation error satisfies
\[
r_{\mathrm{im}}^2 \lesssim
\mathrm{tr}\!\bigl((\bm\Lambda_o^\star-\bm\Lambda_r)\,\bm\Sigma_{\bZ\bZ}^{\,r}\,(\bm\Lambda_o^\star-\bm\Lambda_r)^\top\bigr)
+
\kappa\,
\frac{\log p_z}{n^o}\,\sum_{j=1}^{p_v}s_j\,\sigma_{o,j}^2
+
\mathrm{tr}\!\bigl(\bm\Sigma_{\bV\mid \bZ}^{\,r}\bigr),
\]
where $\sigma_{o,j}^2:=(\bm\Sigma_{\bV\mid \bZ}^{\,o})_{jj}$ is the
$j$-th conditional variance,
$\bm\Lambda_o^\star$ is the best row-sparse approximation to $\bm\Lambda_o$, and
$\kappa=\bigl\|\bm\Sigma_{\bZ\bZ}^{\,r}(\bm\Sigma_{\bZ\bZ}^{\,o})^{-1}\bigr\|_{\mathrm{op}}$.

\end{theorem}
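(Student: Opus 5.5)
We interpret $r_{\mathrm{im}}^2$ in Theorem~\ref{thm:MR-linear} as the RCT-averaged squared imputation error of the \emph{learned} map, $\E[\|\bV-\widehat{\bm\Lambda}_o\bZ\|^2\mid S=r]$, with $\widehat{\bm\Lambda}_o$ fit on an independent OS fold (cross-fitting). The strategy is a three-way decomposition of the residual. In the RCT model, write
\[
\bV-\widehat{\bm\Lambda}_o\bZ
=\varepsilon_r+(\bm\Lambda_r-\bm\Lambda_o^\star)\bZ+(\bm\Lambda_o^\star-\widehat{\bm\Lambda}_o)\bZ .
\]
Since $\varepsilon_r\perp\bZ$ and has mean zero, the cross terms between $\varepsilon_r$ and the two $\bZ$-terms vanish in expectation. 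This holds conditionally on the OS fold, because $\widehat{\bm\Lambda}_o$ is independent of the RCT draw.

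Using $\|a+b\|^2\le 2\|a\|^2+2\|b\|^2$ for the remaining two terms, we get
\[
r_{\mathrm{im}}^2\le \tr(\bm\Sigma^{\,r}_{\bV\mid\bZ})
+2\,\tr\!\bigl((\bm\Lambda_o^\star-\bm\Lambda_r)\bm\Sigma^{\,r}_{\bZ\bZ}(\bm\Lambda_o^\star-\bm\Lambda_r)^\top\bigr)
+2\,\tr\!\bigl((\widehat{\bm\Lambda}_o-\bm\Lambda_o^\star)\bm\Sigma^{\,r}_{\bZ\bZ}(\widehat{\bm\Lambda}_o-\bm\Lambda_o^\star)^\top\bigr),
\]
since $\bZ$ is centered. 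The first two terms are already of the stated form. Under Assumption~\ref{as:transport} the shift trace is only the approximation part, but we keep it general. Assumption~\ref{as:lipschitz} is not needed for this bound; it only enters when transferring $r_{\mathrm{im}}^2$ to outcome error in Theorem~\ref{thm:mroscar}.

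The third term carries the main work. First change measure from the RCT to the OS covariance: for each row $\bm\delta_j=\hat\lambda_{o,j}-\lambda^\star_{o,j}$,
\[
\bm\delta_j^\top\bm\Sigma^{\,r}_{\bZ\bZ}\bm\delta_j
=\bm w^\top (\bm\Sigma^{\,o}_{\bZ\bZ})^{-1/2}\bm\Sigma^{\,r}_{\bZ\bZ}(\bm\Sigma^{\,o}_{\bZ\bZ})^{-1/2}\bm w,
\qquad \bm w=(\bm\Sigma^{\,o}_{\bZ\bZ})^{1/2}\bm\delta_j .
\]
The middle matrix is similar to $\bm\Sigma^{\,r}_{\bZ\bZ}(\bm\Sigma^{\,o}_{\bZ\bZ})^{-1}$, so its eigenvalues match. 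Bounding its largest eigenvalue by $\kappa$ yields $\bm\delta_j^\top\bm\Sigma^{\,r}\bm\delta_j\le\kappa\,\bm\delta_j^\top\bm\Sigma^{\,o}\bm\delta_j$. Here $\kappa$ is the operator norm as defined, or we take the spectral radius, which is bounded by it.

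Next, invoke the standard LASSO prediction bound for each coordinate regression $V_j=\lambda_{o,j}^\top\bZ+\varepsilon_{o,j}$ in the OS. The ingredients are sub-Gaussian design, the RE condition on the population Gram matrix (transferred to the sample Gram matrix with high probability once $n^o\gtrsim s_j\log p_z$), sub-Gaussian noise with variance $\sigma_{o,j}^2$, and $\lambda\asymp\sigma_{o,j}\sqrt{\log p_z/n^o}$. These give
\[
\bm\delta_j^\top\bm\Sigma^{\,o}_{\bZ\bZ}\bm\delta_j\lesssim \frac{s_j\sigma_{o,j}^2\log p_z}{n^o}+\text{(approximation of }\lambda_{o,j}\text{ by }\lambda^\star_{o,j}).
\]
The approximation piece is absorbed, either because $\lambda^\star_{o,j}$ is exactly sparse or by an oracle-inequality choice of $\lambda^\star$.

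The main obstacle is passing from in-sample to population prediction error, and making the bound uniform over the $p_v$ regressions. I would handle this with a union bound, replacing $\log p_z$ by $\log(p_zp_v)$ and absorbing the difference into constants, together with restricted-eigenvalue concentration for sub-Gaussian designs. Summing over $j$ gives the middle term $\kappa\frac{\log p_z}{n^o}\sum_j s_j\sigma_{o,j}^2$.
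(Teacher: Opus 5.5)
Your proposal is correct and follows essentially the same route as the paper: the noise/shift/estimation decomposition with vanishing $\varepsilon^r$ cross terms and a factor-2 split, then the change of measure $\bm\Sigma^{\,r}_{\bZ\bZ}\preceq\kappa\,\bm\Sigma^{\,o}_{\bZ\bZ}$, then row-wise LASSO prediction bounds summed over $j$. Your extra care is well placed: the paper asserts that $\|\bm\Sigma^{\,r}_{\bZ\bZ}(\bm\Sigma^{\,o}_{\bZ\bZ})^{-1}\|_{\mathrm{op}}$ equals the symmetrized operator norm, but only the inequality you use (spectral radius $\le$ operator norm) holds in general, and Assumption~\ref{as:lipschitz} is indeed never needed.
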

Thus the only extra price of bringing in \(\bV\) is the imputation error
\(r_{\mathrm{im}}^2\), which is small precisely when \(\bV\) is predictable from
\(\bZ\) (high \(R^2(\bV\!\mid\!\bZ)\)) and the OS\(\to\)RCT map for
\(\bV\!\mid\!\bZ\) is stable. Altogether, when \(n^o\) is large, \(R^2(\bV\!\mid\!\bZ)\) is moderate
to high, and the calibration class on \(\bX\) is reasonably
sparse, the augmentation-driven factor in the MR-OSCAR bound becomes
small, yielding strictly tighter rates than the RCT-only RACER and the
shared-only SR-OSCAR.
Conversely, if \(\bV\!\mid\!\bZ\) is poorly predictable or undergoes
strong cross-source shifts in conditional distribution, the \(r_{\mathrm{im}}^2\) term can dominate,
warning that mismatch-aware borrowing may not improve upon the simpler baseline estimators, exactly the trade-off the theory is designed to make
explicit.

}

\section{Finite sample experiments}\label{sec:simu}
We consider two data sources with partially overlapping covariates:
the randomized trial observes \(\bX^r=(\bU,\bZ)\) and the observational study observes \(\bX^o=(\bZ,\bV)\).
The total dimension is fixed at $p=100$, with the block dimensions $p_u$, $p_v$, and $p_z=p-p_u-p_v$.
Covariates are generated from an equicorrelated Gaussian distribution
($\rho=0.4$) with a mean shift applied to $\bU$ in the RCT to induce mild domain shift.
The observational sample size is $n^o=1000$ and the RCT size is $n^r=300$, unless mentioned otherwise.
Treatment is assigned at random in the RCT ($\Pr(A=1)=0.5$) and by a logistic model in the OS
using a subset of $\bZ$ so that the treated proportion is approximately $1/3$. Potential outcomes are linear in the covariates with sparse coefficients:
a proportion ($=0.05$) of the \(\bX^r\) coordinates carry signal of
magnitude about $2/3$, and the $\bV$ coefficients have magnitude about $1$; a small fraction
($=0.02$) of the \(\bX^r\) coefficients are perturbed in the RCT to induce
an outcome shift. The outcome depends sparsely on all of $\bX=(\bU,\bZ,\bV)$, but we observe only $\bX^r$ in the RCT and $\bX^o$ in the OS. Homoscedastic Gaussian noise is added to both arms. More details can be found in the supplement. We compare three RCT-anchored CATE procedures:
{RACER}, which uses only RCT data with predictors \(\bX^r\);
{SR-OSCAR}, which borrows from OS using only the shared covariates $\bZ$ on both sources; and
{MR-OSCAR}, which augments RCT predictors with an imputed
$\widehat{\bV}=\hat g(\bZ)$, where the imputation map $g:\mathbb{R}^{p_z}\to\mathbb{R}^{p_v}$ is estimated in the OS and applied to the RCT; OS outcome predictions are then calibrated to the RCT. All OS-anchored procedures are implemented with sample splitting to avoid overfitting.

\subsection{Varying the Effect of OS-only Covariates}
We fix $(p_u,p_z,p_v)=(30,40,30)$ and vary the signal strength of $\bV$ in the outcome model from weak ($0.3$) to strong ($1.3$). The left panel of \Cref{fig:Cov_eff} shows that RMSE increases with the $\bV$ effect for all methods. Across the entire grid, MR-OSCAR has the lowest RMSE, RACER is intermediate, and SR-OSCAR is uniformly worst. MR-OSCAR improves over RACER by $0.03$--$0.10$ RMSE units, with the separation widest at mid-range effects ($\approx0.6$--$1.0$). As the $\bV$ signal strengthens, SR-OSCAR degrades while MR-OSCAR maintains both lower error and greater stability by leveraging OS information.

\subsection{Varying the RCT Sample Size}\label{subsec:vary-nr}
We vary $n^r$ from 200 to 1000 with all else fixed. The right panel of \Cref{fig:Cov_eff} shows three patterns. First, RMSE decreases for all methods as $n^r$ grows. Second, MR-OSCAR attains the lowest RMSE throughout, with the advantage most pronounced at small-to-moderate trial sizes and gradually attenuating as $n^r$ increases, consistent with the $(n^r)^{-1/2}$ scaling of the complexity term. At sufficiently large $n^r$, the RACER and MR-OSCAR curves essentially superimpose. Third, SR-OSCAR is uniformly worst due to the irreducible projection error from restricting to the shared block $\bZ$. In summary, MR-OSCAR is most beneficial when RCT enrollment is limited; as $n^r$ increases the gains diminish but do not reverse.
\begin{figure}[]
    \centering
    \includegraphics[width=0.9\textwidth]{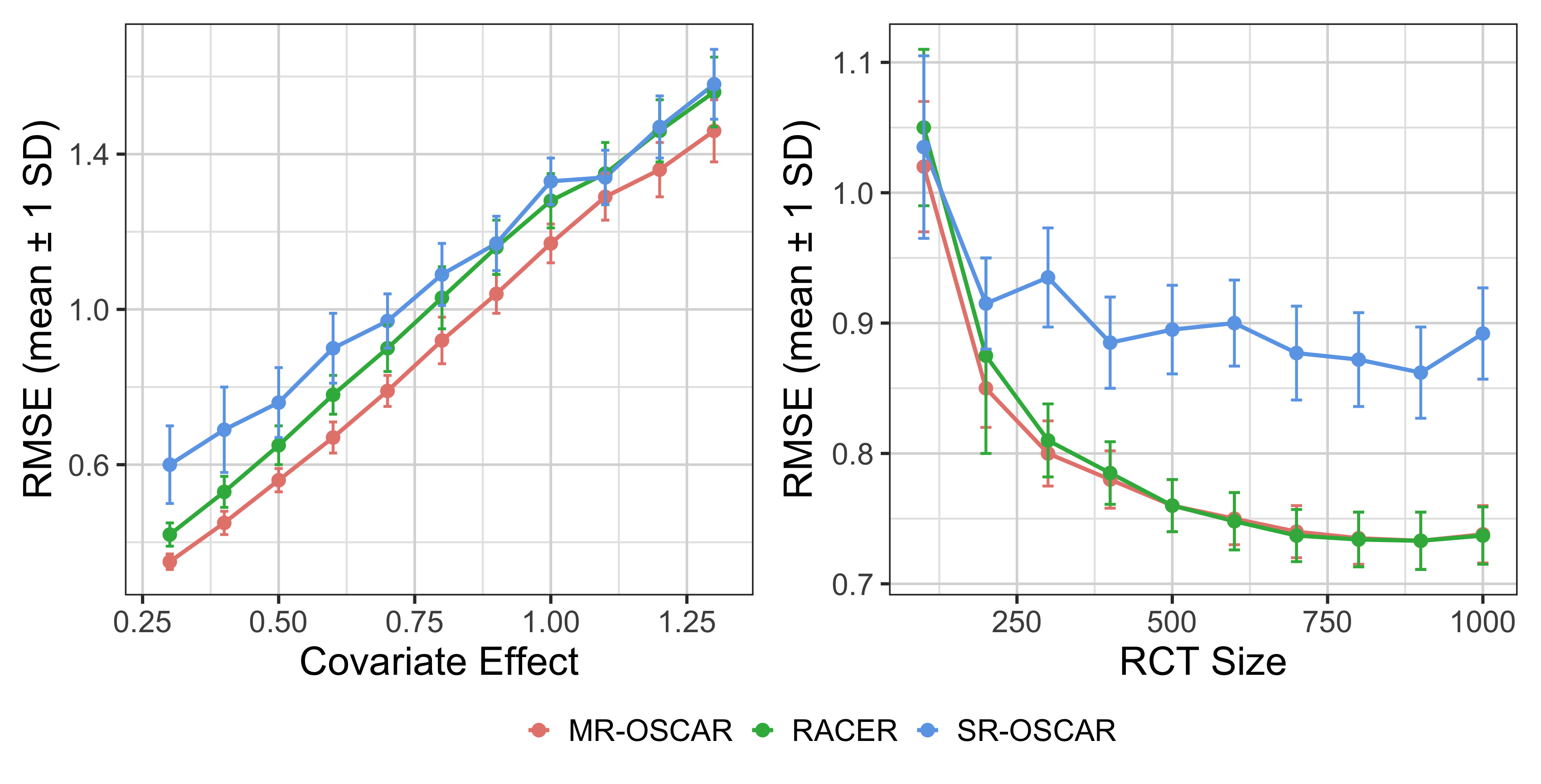}
    \caption{\textit{
\textbf{Left:} Mean RMSE (points) with $\pm 1$ SD (bars) versus the effect size of the OS-only covariates $\bV$ (OS size fixed at $n^o=1000$, RCT size at $n^r=300$).
\textbf{Right:} Mean RMSE versus RCT sample size $n^r$ (OS fixed at $n^o=1000$).}}
    \label{fig:Cov_eff}
\end{figure}

\subsection{Varying the Proportion of Shared Features}
We set $p_u=\lfloor f_1 p\rfloor$, $p_v=\lfloor f_2 p\rfloor$ for $f_1,f_2\in\{0,0.1,\ldots,0.5\}$ with $f_1+f_2\le 0.8$, and display heatmaps of $\text{RMSE}(\text{MR-OSCAR})-\text{RMSE}(\text{RACER})$ and $\text{RMSE}(\text{MR-OSCAR})-\text{RMSE}(\text{SR-OSCAR})$ in Figure~\ref{fig:Prop_shared}.

MR-OSCAR outperforms RACER (left panel, blue) when the OS contains many unavailable variables (large $f_2$) and the trial has few exclusive ones (small $f_1$). As $f_1$ increases, imputation cannot recover $\bU$-specific structure and RACER becomes preferable (red). MR-OSCAR uniformly dominates SR-OSCAR (right panel), confirming that augmenting beyond $\bZ$ through imputation and calibration is always preferable to restricting to the shared intersection alone.

\begin{figure}[htbp]
    \centering
    \includegraphics[width=0.49\linewidth]{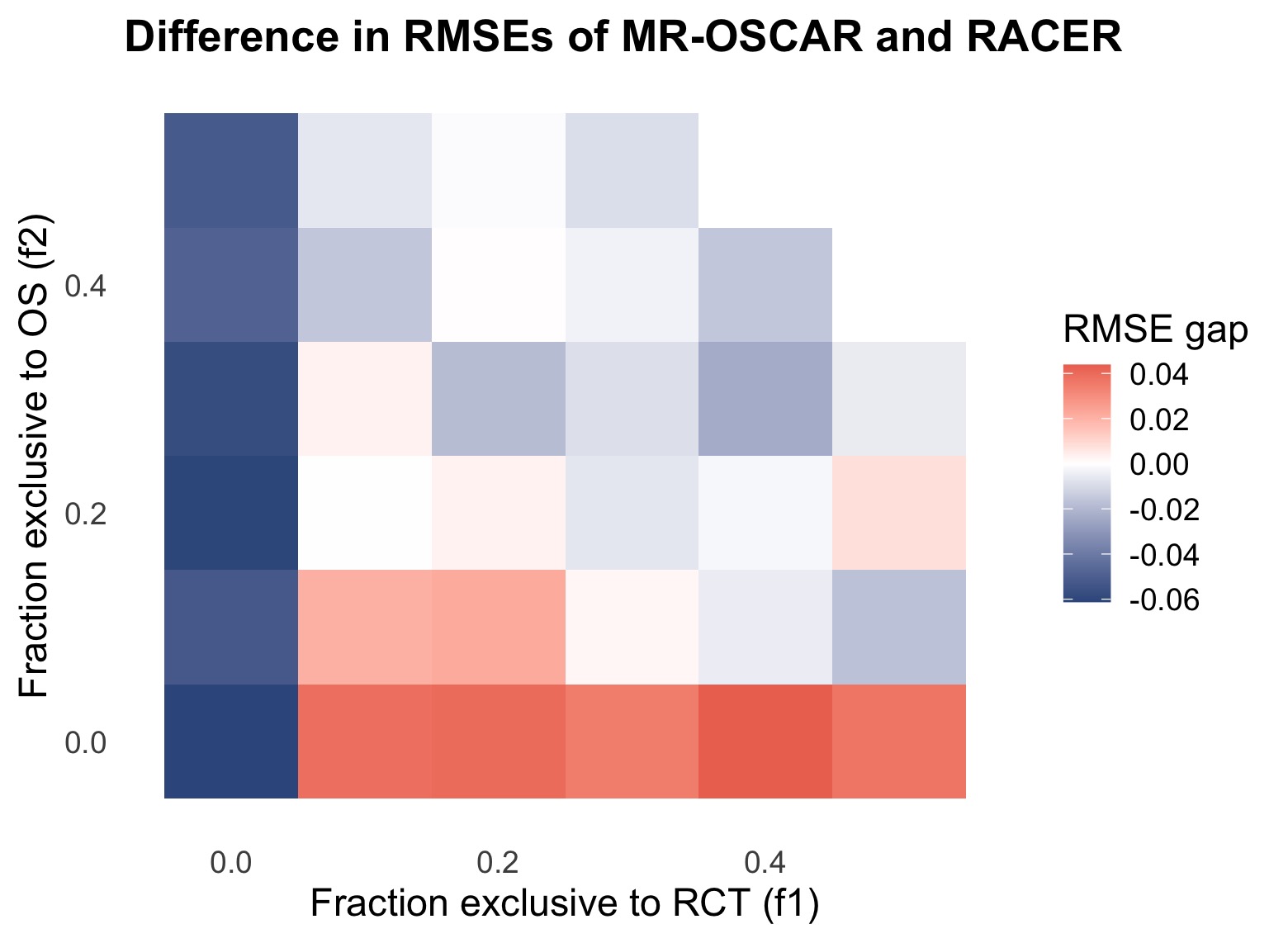}
    \includegraphics[width=0.49\linewidth]{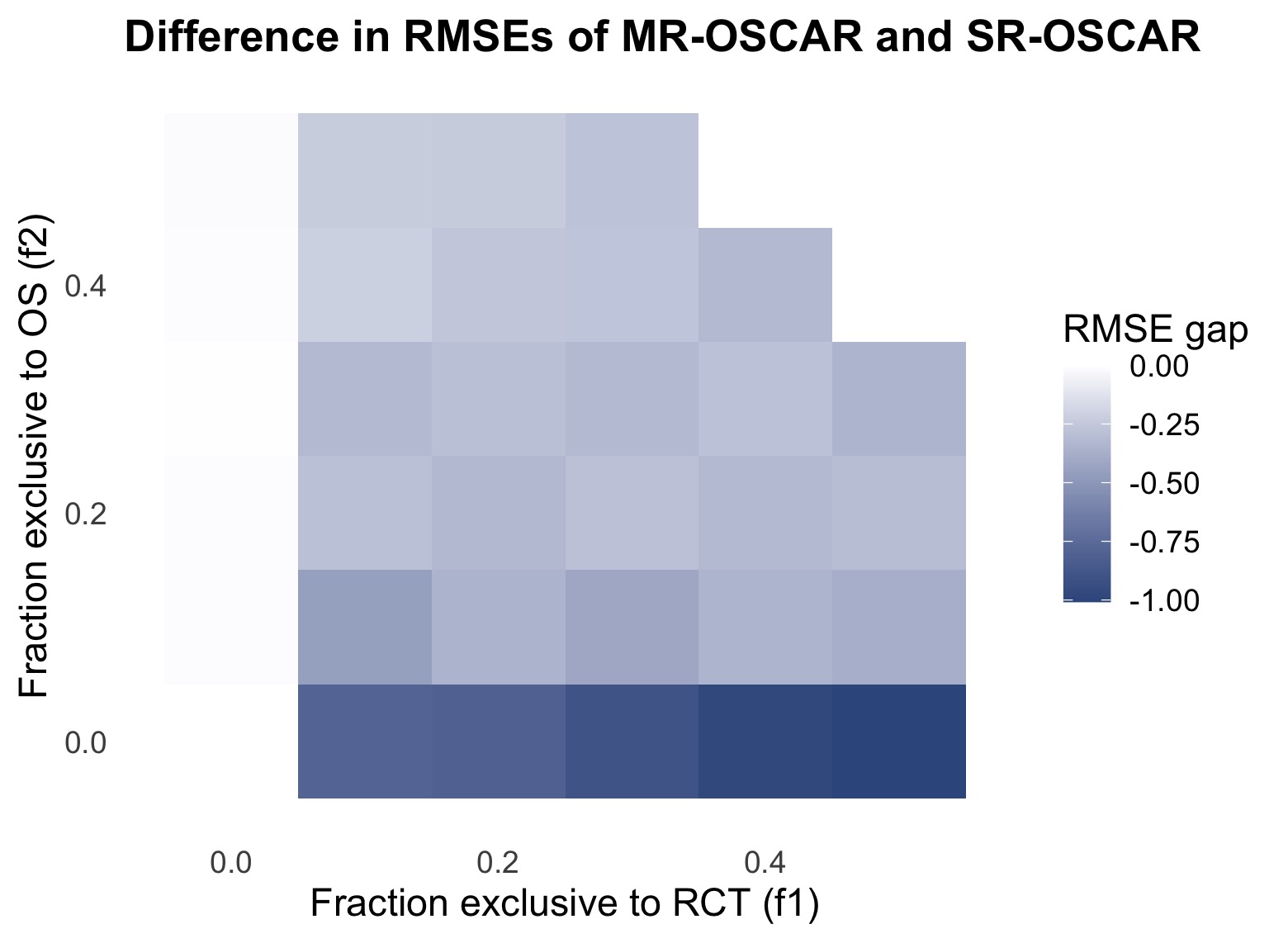}
    \caption{\textit{Heatmaps of RMSE gaps over the grid of covariate-mismatch fractions $(f_1,f_2)$, where $f_1$ is the share exclusive to the RCT ($\bU$) and $f_2$ the share exclusive to the OS ($\bV$). 
}}
    \label{fig:Prop_shared}
\end{figure}

Additionally, to assess how the benefit of imputation-augmented borrowing depends on the predictability of the
OS-only covariates, we carried out a sensitivity study that varies the population
$R^2(\bV\mid\bZ)$ from 0.1 to 0.9 (see Supplement, Figure~S\ref{fig:R2}).  
As $R^2(\bV\mid\bZ)$ increases, RMSE decreases for all methods, with MR-OSCAR
slightly worse than RACER when $\bV$ is poorly predictable but overtaking RACER once
$R^2(\bV\mid\bZ)\gtrsim 0.6$, while the SR-OSCAR benchmark is uniformly worst.

\section{Analysis of Greenlight Plus Study}\label{sec:real_data}

\subsection{Data Description and Pre-processing}
The trial sample comprises $n^r=860$ complete-case children from the Greenlight Plus Study \citep{heerman2022greenlight, heerman2024digital}, a multi-site RCT that evaluated whether augmenting standard well-child counseling with a literacy-sensitive digital program---including tailored text messages, goal-setting supports, and a caregiver dashboard---could help parents maintain healthy weight trajectories in early childhood.
Children were randomized approximately 1:1 to clinic-only care ($n=431$) or clinic plus digital intervention ($n=429$) across six sites (Duke, UNC, Vanderbilt, NYU, Stanford, and Miami).
Baseline covariates recorded in the RCT include child sex, birth weight, caregiver race/ethnicity, education, household income, preferred language (English or Spanish), food security, health literacy, and baseline weight-for-length $z$-score (WFLz).
The observational sample is drawn from electronic health records at three of the six trial sites with accessible systems---Duke ($n=27{,}679$), Vanderbilt ($n=10{,}498$), and UNC ($n=1{,}147$)---yielding $n^o=39{,}324$ non-enrolled children who received routine well-child care during the trial window and thus represent the control condition only.
EHR covariates include site, sex, race/ethnicity, language, insurance type, and baseline WFLz.
In the notation of \Cref{subsec:notation}, the shared block $\bZ$ comprises site, sex, race/ethnicity, language, and baseline WFLz; the RCT-only block $\bU$ includes birth weight, education, income, food security, and health literacy; and insurance, which is deliberately masked in the RCT to construct a covariate-mismatch scenario, serves as the OS-specific variable $\bV$.

For each child in both sources, WFLz outcomes at target ages were obtained by local linear interpolation of observed growth measurements within age-specific windows ($\pm1$ month at the 12- and 24-month endpoints), using the WHO Child Growth Standards.
The EHR cohort was assembled by linking patient demographic files with clinical encounter records containing anthropometric measurements from well-child and other visits.
Patients were excluded from the EHR analytic sample if they lacked a reliable weight measurement in the first 21 days of life at or above the third WHO percentile (71\% of exclusions), spoke a language other than English or Spanish (24\%), or had no coded well-child visit in their first year (4\%).
Insurance status for RCT participants at the three linked sites was extracted from the corresponding EHR billing records, and all variable names were harmonized across sources to ensure consistency before analysis.
The raw insurance field in the EHR contains several categories, including a large ``Unavailable/No Payer'' group comprising roughly 58\% of EHR patients. During binarization, this group was combined with the uninsured category ($V=0$), yielding a binary indicator $V\in\{0,1\}$ (insured vs.\ uninsured/unavailable). This coding inflates the proportion of $V=0$ observations and introduces site-specific sparsity in the insured category, which contributes to the variance inflation seen when the raw indicator is used directly in R-OSCAR (see \Cref{subsec:results_gp}).

\subsection{Method Implementations}
In our Greenlight Plus analysis we deliberately impose a covariate-mismatch
scenario by masking the binary insurance variable (that we call $V$) in the RCT and treating
it as unavailable for primary CATE estimation so that we can assess how well methods can perform with real world data relative to the gold standard when all covariates are available in both data sources.  The EHR data contain $V$ only
for control patients (there are no treated children in the external EHR), so
any borrowing must respect this restriction and cannot rely on external
information about the treatment arm. In this scenario, SR-OSCAR implements the R-OSCAR
framework using only the shared covariates $Z$: an outcome model for the
control arm is first trained in the EHR on $(Z,Y)$ for $A=0$ and transported to
the RCT, a discrepancy function is learned on RCT controls to correct for
outcome shift, and finally a treatment-arm model and CATE calibration are
fit using RCT data. For MR-OSCAR, in the first step, we use the EHR {control} sample to learn a
prediction model for insurance based on the shared covariates, fitting a
logistic LASSO for $\Pr(V=1 \mid Z)$
using only individuals with $A=0$ in the EHR.  We then apply this fitted model
to the RCT controls and treated children to obtain subject-specific
probabilities $\widehat V = \Pr(V=1 \mid Z)$. Next, we fit the EHR control outcome model on $(Z,V)$, i.e., using both the
shared covariates and the {observed} insurance in the large EHR control
sample.  In the RCT, by contrast, all outcome, discrepancy, and CATE
calibration models use $(Z,\widehat V)$, replacing the unobserved $V$ with its
logistic-LASSO prediction.  In this way, MR-OSCAR borrows external information
only from the control arm and only through the shared covariates $Z$ and the
imputed insurance $\widehat V$, while the randomized treatment contrast in the
trial remains the sole source of causal identification.
We compare against three baselines.
SR-OSCAR is described above (shared covariates only).
R-OSCAR \citep{asiaee2023leveraging} uses $(Z,V)$ in both the EHR and the RCT, representing an ``oracle'' scenario in which insurance is fully available in the trial (it is masked only to showcase MR-OSCAR's utility).
RACER \citep{asiaee2023leveraging} is an RCT-only benchmark that fits separate outcome models for $A=0$ and $A=1$ using all trial covariates and calibrates the CATE via pseudo-outcome regression.

\subsection{Results}\label{subsec:results_gp}

Figure~\ref{fig:sorted_cate} (upper panel) displays the sorted individual CATE estimates and their 95\% confidence intervals for all four methods.  All approaches reveal
substantial heterogeneity in the intervention effect across children, with estimated CATEs ranging from moderately negative to clearly positive.  The
width of the bands, however, differs markedly: MR-OSCAR produces the tightest intervals, followed by SR-OSCAR, then R-OSCAR, with RACER showing the widest intervals.  These visual impressions are confirmed by
Table~\ref{tab:cate_widths}.  Relative to the RCT-only RACER benchmark, MR-OSCAR reduces the mean CI width from 1.03 to 0.83 (roughly a 20\% reduction) and the mean radius from 0.283 to 0.178, with SR-OSCAR and
R-OSCAR yielding intermediate gains. An interesting feature of Table~\ref{tab:cate_widths} is that R-OSCAR, which has access to the {true} insurance indicator $V$ in the RCT, yields slightly wider CATE intervals than MR-OSCAR, which uses an imputed
$\widehat V$. This does not necessarily contradict the intuition that more covariate information should help; rather, it may reflect how that information enters the
estimation pipeline. In the Greenlight Plus data, our covariate-balance diagnostics
(\Cref{fig:loveplot}, Supplement) show that the collapsed
insurance indicator is not only a little imbalanced across treatment arms, but it is also relatively rare and exhibits site-specific sparsity in the uninsured category. Including the raw insurance factor $V$ directly in the regression steps therefore forces the RCT models to estimate several small cell means and interactions, which can substantially inflate variance with little corresponding reduction in bias.
By contrast, MR-OSCAR first learns a smooth prediction
$\widehat V = \Pr(V=1 \mid Z)$ in the large EHR control sample and then uses this denoised surrogate in the RCT outcome and calibration models, stabilizing the contribution of insurance. As noted in the data description, the ``Unavailable/No Payer'' category was merged with uninsured during binarization, inflating the $V\!=\!0$ counts and exacerbating the sparsity that drives R-OSCAR's variance.  MR-OSCAR sidesteps this issue by replacing the noisy binary indicator with a smooth imputed probability, effectively regularizing the insurance signal. As a result, MR-OSCAR can
achieve shorter intervals than the oracle R-OSCAR in this particular covariate-mismatch setting, while still relying solely on the randomized
trial for causal identification.

\begin{table}[htbp]
\centering
\resizebox{0.85\linewidth}{!}{%
\begin{tabular}{lccccc}
\hline
Method   & Mean width & Median width & 25th pct.\ width & 75th pct.\ width & Mean radius \\
\hline
MR-OSCAR & \textbf{0.831} & \textbf{0.786} & \textbf{0.716} & \textbf{0.901} & \textbf{0.178} \\
SR-OSCAR & 0.872 & 0.830 & 0.764 & 0.936 & 0.195 \\
R-OSCAR  & 0.909 & 0.846 & 0.770 & 0.962 & 0.217 \\
RACER    & 1.030 & 0.970 & 0.832 & 1.190 & 0.283 \\
\hline
\end{tabular}}
\caption{Summary of individual CATE uncertainty across methods. 
For each method, we report the mean, median, 25th and 75th percentiles of the 95\% CI widths, and the mean
radius $r_i = 0.25\cdot(\text{width}_i)^2$ averaged over trial participants.}
\label{tab:cate_widths}
\end{table}

\begin{figure}[htbp]
    \centering
    \includegraphics[width=0.75\linewidth]{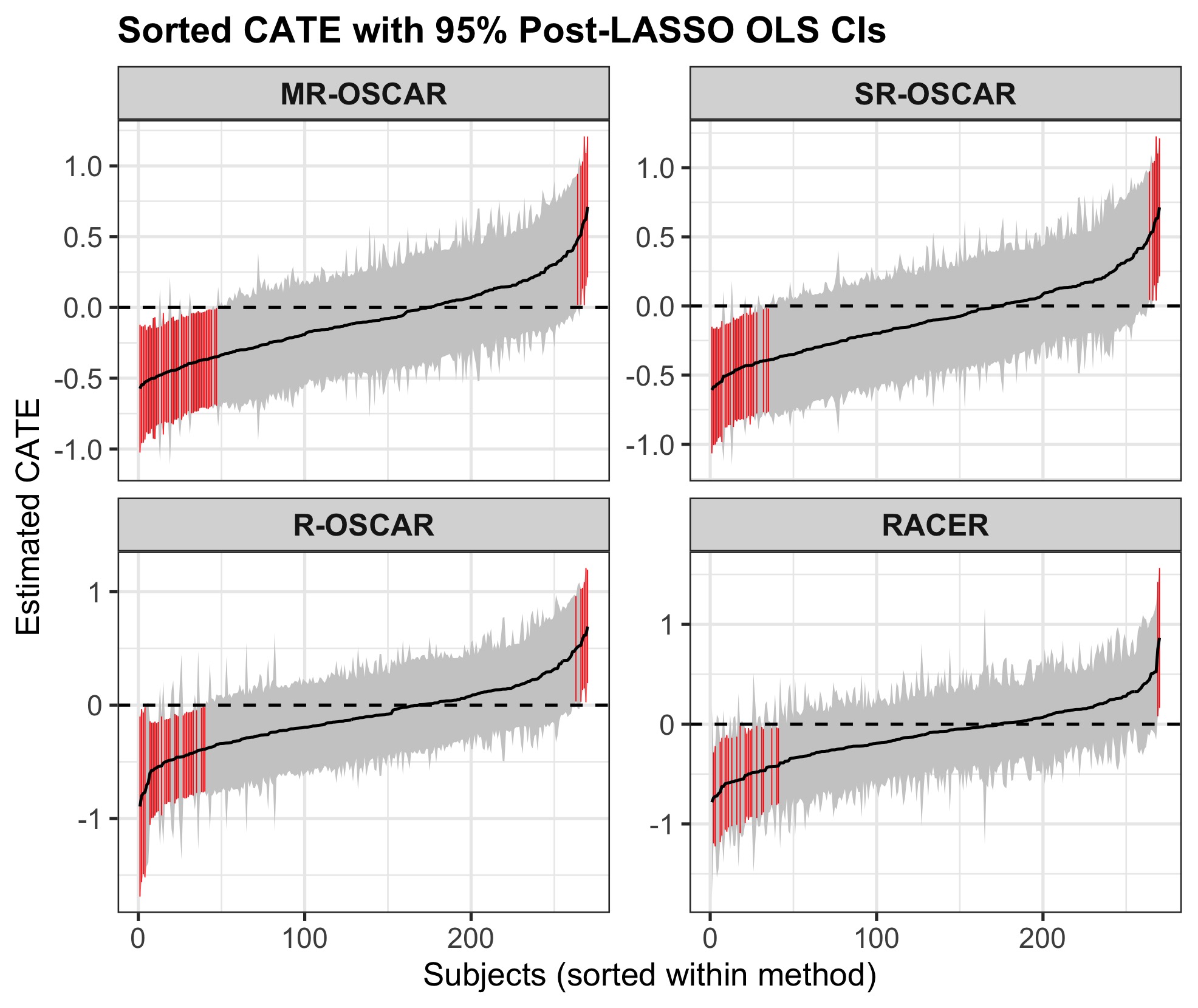}
    \includegraphics[width=0.8\linewidth]{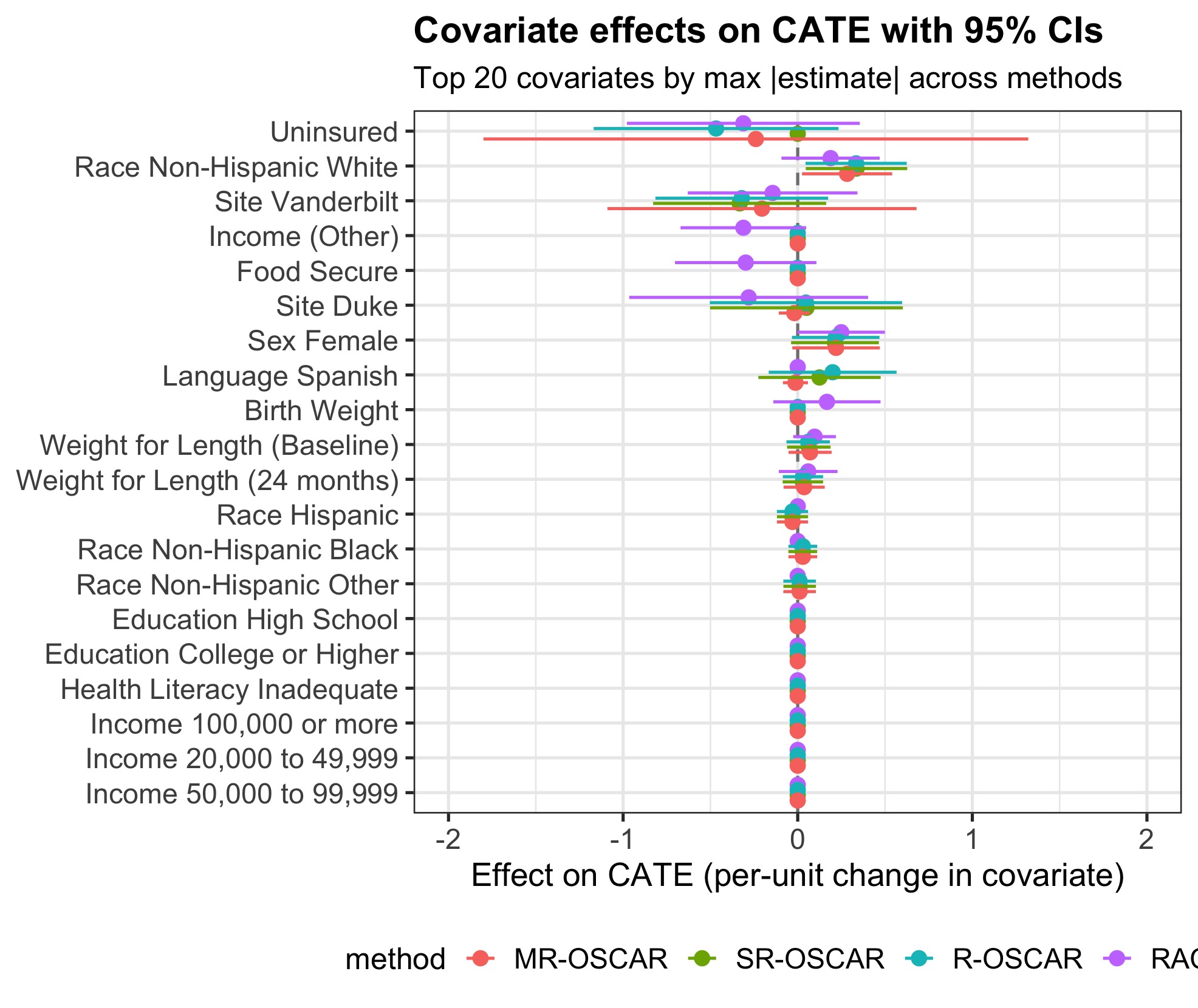}
    \caption{\small\textit{Upper Panel:} Sorted individual CATE estimates with 95\% post-LASSO OLS confidence intervals; grey intervals cover 0, while red intervals correspond to subjects whose CIs exclude 0. \textit{Lower Panel:} Covariate-specific effects on the CATE with 95\% confidence intervals for the top 20 covariates (by maximum absolute effect size).}
    \label{fig:sorted_cate}
\end{figure}


Figure~\ref{fig:sorted_cate} (lower panel) examines how key covariates affect the CATE.
Across methods, being uninsured is associated with lower predicted treatment
benefit, whereas non-Hispanic White race, Vanderbilt site, and higher-income
categories are associated with larger CATEs.  Baseline anthropometrics (birth
weight, baseline weight-for-length, and 24-month weight-for-length),
language, and health literacy also exhibit non-negligible associations.  The
four estimators largely agree on the direction and relative magnitude of
these effects; the borrowing-based methods (MR-OSCAR and SR-OSCAR) tend to
produce slightly shrunk and more stable coefficient estimates than RACER,
consistent with their narrower CATE intervals.

\section{Discussion}\label{sec:conclusion}
We have proposed MR-OSCAR, a framework for improving CATE estimation in randomized trials by borrowing strength from observational sources that record additional, potentially effect-modifying covariates not available in the trial.
Rather than discarding the mismatched covariates, MR-OSCAR imputes them in the trial using a prediction model trained in the observational sample, then calibrates all outcome and CATE models back to the RCT, preserving the randomized treatment contrast as the sole source of causal identification.

Our error bounds decompose the CATE risk into the complexity of the final CATE regression, the quality of OS-to-RCT outcome model transport, and the accuracy of covariate imputation.
When OS-only covariates are strongly related to the outcome and moderately predictable from shared variables, MR-OSCAR strictly dominates both an RCT-only estimator and a shared-only borrowing strategy.
When the missing block is weakly predictive or hard to impute, MR-OSCAR gracefully shrinks back toward RCT-only estimation without substantial additional error.
These theoretical predictions are confirmed by simulations across a range of covariate-mismatch configurations.

In the Greenlight Plus application, MR-OSCAR delivers the narrowest individual confidence intervals---roughly a 20\% reduction in mean CI width relative to the RCT-only benchmark---even outperforming an ``oracle'' estimator that observes the true insurance variable in the trial.
This counterintuitive result arises because a smooth imputed probability regularizes a sparse, noisy binary covariate more effectively than using the raw indicator directly.
The covariate-effect analysis further shows that insurance status is a key driver of treatment response, underscoring the cost of ignoring mismatched variables and the value of imputation-based integration for identifying which children benefit most from obesity prevention programs.

\if1\blind
{
\section{Acknowledgement}
%
This research was funded by the Patient-Centered Outcomes Research Institute (PCORI) award ME-2023C1-32148, ``Improving Methods for Conducting Patient-Centered Comparative Clinical Effectiveness Research.'' All authors are supported by this award.
}\fi 

\if1\blind
{
\section{Data Availability Statement}
%
This authors are unable to share the data due to patient privacy concerns. However, users may request access from Vanderbilt University Medical Center upon reasonable request. 
}\fi 

\bibliographystyle{apalike}

\bibliography{ref}
\newpage
\begin{center}
{\large\bf SUPPLEMENTARY MATERIAL}
\end{center}

\begin{table}[!ht]
\centering
\caption{Comprehensive glossary of notation.}\label{tab:notation}
{\scriptsize
\begin{tabular}{@{}c l p{0.72\textwidth}@{}}
\toprule
& \textbf{Symbol} & \textbf{Definition / description} \\
\midrule
\multirow{4}{*}{\rotatebox{90}{\scriptsize\textit{Data}}}
& $S\in\{r,o\}$ & Source indicator (RCT or OS) \\
& $Y;\; Y(1),Y(-1)$ & Observed / potential outcomes \\
& $A\in\{-1,1\}$ & Treatment indicator \\
& $n^s;\; n_a^s$ & Total and arm-specific sample sizes for source $s$ \\[2pt]
\midrule
\multirow{5}{*}{\rotatebox{90}{\scriptsize\textit{Covariates}}}
& $\bX=(\bU,\bZ,\bV)\in\R^{p}$ & Complete covariate vector;\; $p=p_u\!+\!p_z\!+\!p_v$ \\
& $\bX^r=(\bU,\bZ)\in\R^{p_r}$ & RCT-observed covariates;\; $p_r=p_u\!+\!p_z$ \\
& $\bX^o=(\bZ,\bV)\in\R^{p_o}$ & OS-observed covariates;\; $p_o=p_z\!+\!p_v$ \\
& $\bU,\;\bZ,\;\bV$ & RCT-exclusive, shared, OS-exclusive blocks \\
& $\widehat\bV=\hat g(\bZ)$ & Imputed OS-only covariates in the RCT \\[2pt]
\midrule
\multirow{8}{*}{\rotatebox{90}{\scriptsize\textit{Core quantities}}}
& $\mu^s_a(\bm x)$ & $\E[Y\mid \bX\!=\!\bm x,A\!=\!a,S\!=\!s]$: conditional mean outcome, arm $a$, source $s$ \\
& $\delta_a(\bm x)$ & $\mu^r_a(\bm x)-\mu^o_a(\bm x)$: outcome shift between RCT and OS \\
& $\tau^r(\bm x^r)$ & $\E[Y(1)\!-\!Y(-1)\mid \bX^r\!=\!\bm x^r,S\!=\!r]$: target CATE \eqref{eq:rct-cate} \\
& $\pi^r_a(\bm x^r)$ & $P(A\!=\!a\mid \bX^r\!=\!\bm x^r,S\!=\!r)$: RCT propensity \eqref{eq:rct-propensity} \\
& $\mu^r(\bm x^r)$ & $\sum_a \pi^r_{-a}\mu^r_a$: counterfactual mean outcome (CMO) \eqref{eq:cmo} \\
& $\tilde\mu^r(\bm x^r)$ & $\E[\mu^r(\bX)\mid \bX^r\!=\!\bm x^r,S\!=\!r]$: marginalized CMO \eqref{eq:rct-cmo} \\
& $\tau_m(\bX^r,A,Y)$ & $A\{Y-m(\bX^r)\}/\pi^r_A(\bX^r)$: pseudo-outcome \eqref{eq:pseudo-outcome} \\
& $\hat\tau_{\text{SR}},\;\hat\tau_{\text{MR}}$ & SR-OSCAR and MR-OSCAR CATE estimators (see Table~\ref{tab:cmo-summary}) \\[2pt]
\midrule
\multirow{6}{*}{\rotatebox{90}{\scriptsize\textit{Estimators}}}
& $\htmu^r,\;\htmu^{\mathrm{sh}},\;\htmu^{\mathrm{im}}$ & CMO estimators: RACER, SR-OSCAR, MR-OSCAR (Table~\ref{tab:cmo-summary}) \\
& $\hat\mu^{o,\mathrm{sh}}_a,\;\hat\mu^{o,\mathrm{im}}_a$ & OS outcome models (shared-only $\bZ$; full $\bX^o$) \\
& $\hat\delta^{\mathrm{sh}}_a,\;\hat\delta^{\mathrm{im}}_a$ & Discrepancy: \eqref{eq:z-only-discrepancy}, \eqref{eq:imputation-discrepancy} \\
& $\hat\mu^{\mathrm{sh}}_a,\;\hat\mu^{\mathrm{im}}_a$ & Calibrated arm means: $\hat\mu^{o,(\cdot)}_a+\hat\delta^{(\cdot)}_a$ \\
& $\hat g$ & Imputation map $\bZ\to\bV$ \\
& $\hat\mu^r_a$ & RACER arm-specific RCT outcome model \\[2pt]
\midrule
\multirow{4}{*}{\rotatebox{90}{\scriptsize\textit{Classes}}}
& $\cM^{o,\mathrm{sh}}_a,\;\cM^{o,\mathrm{im}}_a$ & OS outcome classes on $\bZ$ ($p_z$-dim) and $\bX^o$ ($p_o$-dim) \\
& $\cD^{\mathrm{sh}}_a,\;\cD^{\mathrm{im}}_a;\;\cD$ & Discrepancy on $\bX^r$ ($p_r$-dim) / $(\bX^r,\widehat\bV)$ ($p$-dim);\; CATE class on $\bX^r$ \\
& $\cM^r_a;\;\cG$ & RACER outcome class on $\bX^r$;\; imputation class on $\bZ$ \\
& $\mathcal{P}^{(\cdot)}_a(\cdot)$ & Penalty term matching the corresponding class \\[2pt]
\midrule
\multirow{5}{*}{\rotatebox{90}{\scriptsize\textit{Risk terms}}}
& $\Delta_2^2(\cH,\tau^r)$ & $\inf_{h\in\cH}\E\{(h(\bX^r)-\tau^r(\bX^r))^2\mid S\!=\!r\}$: approximation error \\
& $\Rad_n(\cH)$ & $\E_{\bm\epsilon}\bigl[\sup_{h\in\cH}\frac{1}{n}\sum_{i=1}^{n}\epsilon_i\,h(\bX_i^r)\bigr]$: empirical Rademacher complexity ($\epsilon_i\!\stackrel{\mathrm{iid}}{\sim}\!\mathrm{Unif}\{-1,1\}$) \\
& $c(\cH)$ & Localized complexity parameter \citep{bartlett2006local} \\
& $B_{\mathrm{sh}}^2$ & $\sum_a\inf_{h\in\cH^{\mathrm{sh}}_a}\E[(h(\bX^r)-\tilde\mu^r_a(\bX^r))^2\mid S\!=\!r]$: mismatch penalty \eqref{eq:BZ-def} \\
& $r_{\mathrm{im}}^2$ & $\E[\|\bV-g^\star(\bZ)\|^2\mid S\!=\!r]$: imputation risk \\
& $L$ & Lipschitz constant of $\mu^o_a$ in $\bV$ \\[2pt]
\midrule
\multirow{5}{*}{\rotatebox{90}{\scriptsize\textit{Sparsity}}}
& $s_\tau$ & Sparsity of the CATE model on $\bX^r$ (\S\ref{subsec:linear-example}) \\
& $s_\mu^{\mathrm{sh}},\;s_\mu^{\mathrm{im}}$ & OS outcome sparsity on $\bZ$ (SR) and $\bX^o$ (MR);\; $s_\mu^{\mathrm{sh}}\le s_\mu^{\mathrm{im}}$ \\
& $s_\delta^{\mathrm{sh}},\;s_\delta^{\mathrm{im}}$ & Discrepancy sparsity on $\bX^r$ (SR) and $(\bX^r,\widehat\bV)$ (MR);\; $s_\delta^{\mathrm{sh}}\le s_\delta^{\mathrm{im}}$ \\
& $s_j$ & Per-coordinate imputation sparsity: $s_j:=\|\lambda_{o,j}\|_0$ \\
\bottomrule
\end{tabular}
}
\end{table}
\clearpage
Algorithm~\ref{alg:high-level} summarizes the common structure of these procedures.
\begin{algorithm}[th]
\footnotesize
\caption{High-level structure of R-OSCAR-type estimators under covariate mismatch}
\label{alg:high-level}
\begin{algorithmic}[1]
\State \textbf{Input:} 
RCT data $\{(\bX_i^r,A_i^r,Y_i^r)\}_{i=1}^{n^r}$, 
OS data $\{(\bX_i^o,A_i^o,Y_i^o)\}_{i=1}^{n^o}$, 
choice of estimator in $\{\text{RACER, SR-OSCAR, MR-OSCAR}\}$.
\State \textbf{If} using MR-OSCAR \textbf{then} 
  estimate $g:\bZ\mapsto\bV$ in the OS and impute 
  $\widehat\bV_i = \hat g(\bZ_i^r)$ for RCT units.
\For{$a \in \{-1,1\}$}
  \State \textbf{OS outcome modeling:}
    \begin{itemize}
      \item RACER: no OS outcome model is used.
      \item SR-OSCAR: fit an OS model $\hat\mu^{o,\mathrm{sh}}_{a}(\bZ)$ on $(\bZ^o,Y^o)$.
      \item MR-OSCAR: fit an OS model $\hat\mu^{o,\mathrm{im}}_{a}(\bX^o)$ 
            on $(\bZ^o,\bV^o,Y^o)$.
    \end{itemize}
  \State \textbf{Calibration to the RCT:}
    \begin{itemize}
      \item RACER: fit $\hat\mu^r_a(\bX^r)$ directly on the RCT.
      \item SR-OSCAR: fit a discrepancy $\hat\delta^{\mathrm{sh}}_{a}(\bX^r)$ so that 
            $Y^r \approx \hat\mu^{o,\mathrm{sh}}_{a}(\bZ^r) + \hat\delta^{\mathrm{sh}}_{a}(\bX^r)$.
      \item MR-OSCAR: fit a discrepancy $\hat\delta^{\mathrm{im}}_{a}(\bX^r,\widehat\bV)$ so that 
            $Y^r \approx \hat\mu^{o,\mathrm{im}}_{a}(\bZ^r,\widehat\bV) 
                    + \hat\delta^{\mathrm{im}}_{a}(\bX^r,\widehat\bV)$.
    \end{itemize}
\EndFor
\State \textbf{Construct arm-specific calibrated predictions} 
      $\hat\mu^{\mathrm{sh}}_a$ (SR-OSCAR) or $\hat\mu^{\mathrm{im}}_a$ (MR-OSCAR) and the associated CMO for each individual by 
      combining the two arm-specific calibrated predictions with 
      swapped RCT randomization weights (as in \cite{asiaee2023leveraging}).
\State \textbf{Form transformed outcomes} by subtracting the CMO from the observed 
      outcome in the RCT and multiplying by a known function of \((A,\bX^r)\) 
      (the pseudo-outcome), and regress these transformed outcomes on 
      \(\bX^r\) to obtain the final CATE estimator \(\hat\tau(\bX^r)\).
\State \textbf{Output:} 
  CATE estimator $\hat\tau(\cdot)$ corresponding to RACER, SR-OSCAR, or MR-OSCAR.
\end{algorithmic}
\end{algorithm}
\subsection*{Simulation Setup Details}
We generate a latent baseline vector $\bX=(\bU,\bZ,\bV)\in\mathbb{R}^{p}$ from an equicorrelated Gaussian distribution with correlation $\rho=0.4$ and unit marginal variances. To induce a mild covariate (domain) shift between sources, we set $\E[\bX\mid S=o]=\bm{0}$ and $\E[\bX\mid S=r]=(\bm{\delta}_{u},\bm{0}_{z},\bm{0}_{v})$, where the coordinates of $\bm{\delta}_{u}\in\mathbb{R}^{p_u}$ are drawn independently with random sign and magnitude in a small range (e.g., $|\delta_{u,j}|\in[0.25,0.5]$). Treatment is randomized in the RCT with $\Pr(A=1)=0.5$, whereas in the OS it follows a sparse logistic propensity model depending only on a subset of shared covariates, $\Pr(A=1\mid \bZ)=\mathrm{expit}(\alpha_0+\bZ_S^\top\bm \gamma)$, where $S\subset\{1,\dots,p_z\}$ is a fixed index set (e.g., $|S|=10$), $\bm \gamma$ has nonzero entries on $S$ (e.g., $\gamma_j\in[0.25,0.5]$ with random sign), and $\alpha_0$ is chosen so that the marginal treated proportion is approximately $1/3$. Potential outcomes are linear with sparse coefficients: for $a\in\{0,1\}$ we set $Y(a)=\bX^\top\bm \beta_a+\varepsilon_a$, where a small proportion of coordinates in $\bX^r=(\bU,\bZ)$ carry signal (e.g., $5\%$ with magnitude about $2/3$) and coefficients on $\bV$ have larger magnitude (about $1$), reflecting outcome-relevant information observed only in the OS. To induce an outcome-model shift, we perturb a small fraction (e.g., $2\%$) of the $\bX^r$ coefficients in the RCT relative to the OS. Finally, we add homoskedastic Gaussian noise independently across units (and arms), $\varepsilon_a\sim \mathcal{N}(0,\sigma_\varepsilon^2)$; in our implementation the noise standard deviation is $0.5\,\sigma$, so $\sigma_\varepsilon^2=\sigma^2/4$.
\subsection*{Additional Simulation: Varying the predictability of OS-only covariates}\label{sec:sim-R2}
Here, in order to study the effect of imputability of $\bV$ from $\bZ$, we vary the population predictability $R^2(\bV\mid \bZ)$
by sampling $\bV\mid \bZ=\bm\Lambda \bZ+\varepsilon$ with $\varepsilon\sim\mathcal N(0,\sigma^2_{\bV\mid \bZ}\bm{I})$ and
choosing $\sigma^2_{\bV\mid \bZ}$ to achieve target values $R^2\in\{0.10,0.20,\ldots,0.90\}$, while keeping
$\bm\Lambda$ fixed and the rest of the study design same as before.
\begin{figure}[htbp]
    \centering
    \includegraphics[width=0.6\linewidth]{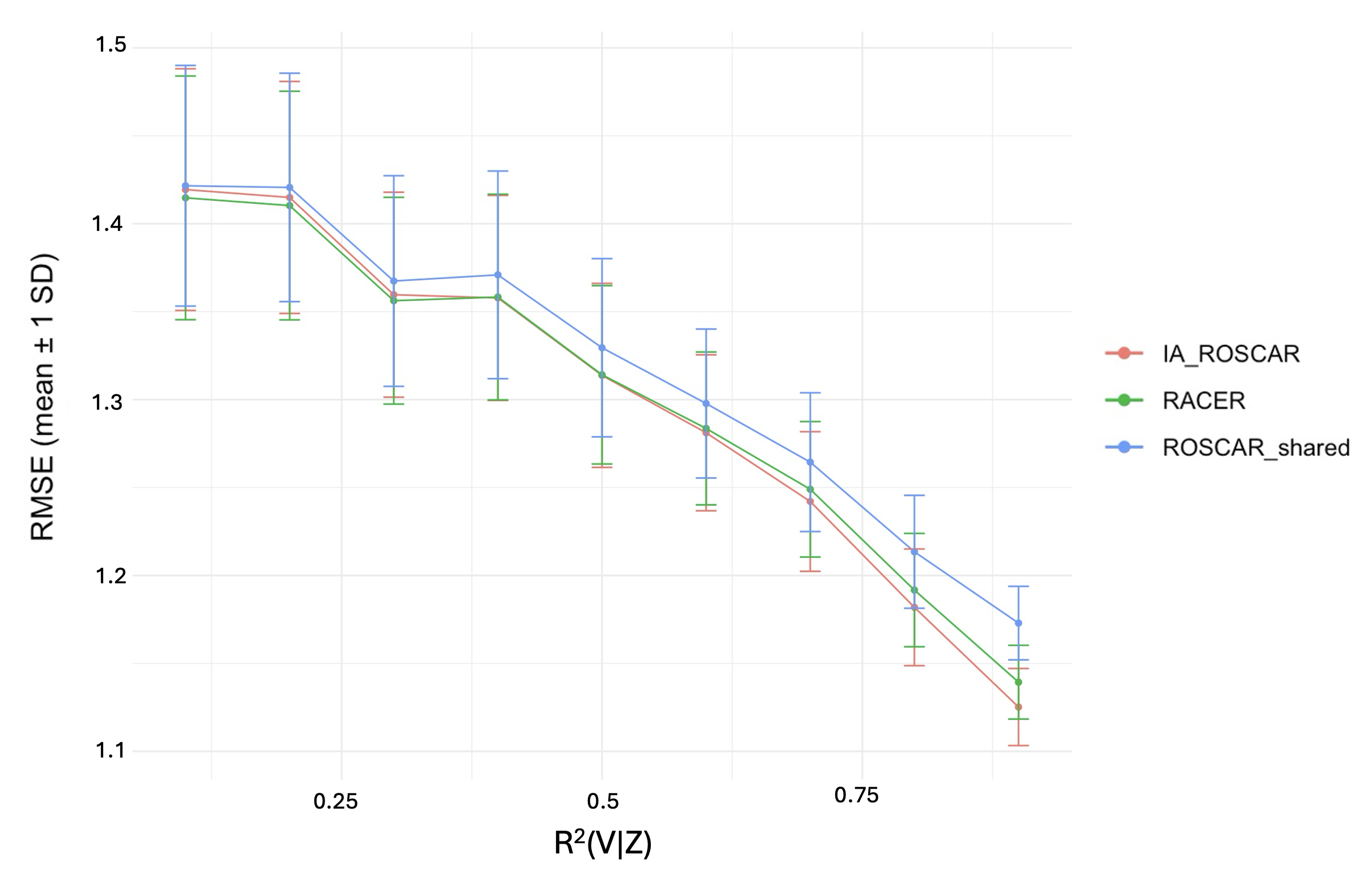}
    \caption{\textit{RMSE (mean $\pm$ 1 SD) for CATE estimation in the RCT population as a function of
  $R^2(\bV\mid \bZ)$, which controls how well the OS-only covariates $\bV$ can be predicted from $\bZ$.
  MR-OSCAR (red) is slightly worse than RACER (green) when imputability is low, ties around
  moderate values, and dominates once $R^2(\bV\mid \bZ)$ is sufficiently large; SR-OSCAR (blue)
  is uniformly worst.}}
    \label{fig:R2}
\end{figure}
Figure~\ref{fig:R2} displays RMSE versus $R^2(\bV\mid \bZ)$.
Three qualitative features emerge:
(i) {monotone improvement with imputability.}  All methods enjoy decreasing RMSE as $R^2(\bV\mid \bZ)$
increases, reflecting the fact that the problem becomes easier when $\bV$ is more predictable from $\bZ$;
(ii) {threshold behavior for MR-OSCAR.}  At low levels of imputability ($R^2\approx 0.10$-$0.30$),
MR-OSCAR is slightly {worse} than RACER: the noise introduced by imputing a poorly predictable
$\bV$ offsets the benefit of borrowing from the OS.  In the mid-range ($R^2\approx 0.40$-$0.60$), the two
methods are essentially indistinguishable.  Beyond a clear threshold ($R^2\gtrsim 0.60$),
MR-OSCAR overtakes RACER and the gap widens as $R^2$ increases; this aligns with our theory in which
the MR-OSCAR risk improves as the imputation error $r_{\mathrm{im}}^2$ shrinks;
(iii) {dominance over $\bZ$-only borrowing.}  SR-OSCAR is uniformly worse across the entire
range, because it ignores the $\bV$ signal entirely; it never crosses the other curves.
Error bars shrink modestly with $R^2$, indicating greater stability when $\widehat{\bV}$ is reliable.
\subsection*{Covariate Balance in the Greenlight Plus RCT}
At first glance it may seem surprising that MR-OSCAR, which treats
insurance as unobserved in the RCT and instead uses an imputed
$\widehat V$, can outperform an ``oracle'' R-OSCAR that conditions on
the true insurance indicator $V$.
The covariate-balance diagnostics help clarify this phenomenon. To assess whether any gains from our borrowing strategies could be
attributed to residual imbalances in baseline covariates, we examined
covariate balance between treatment arms in the Greenlight Plus RCT.
For each baseline covariate $W$, we computed the absolute
standardized mean difference (SMD) between $A=1$ and $A=0$,
defined as
\[
\operatorname{SMD}(W)
=
\frac{\bar W_{1} - \bar W_{0}}
     {\sqrt{\{s_{1}^{2} + s_{0}^{2}\}/2}},
\]
where $\bar W_{a}$ and $s_{a}^{2}$ denote the sample mean and variance
of $W$ in arm $A=a$; for binary covariates we used the analogous
standardization based on the pooled binomial variance.  Figure~\ref{fig:loveplot}
displays the resulting ``love plot'' of absolute SMDs.

\begin{figure}
    \centering
    \includegraphics[width=0.85\textwidth]{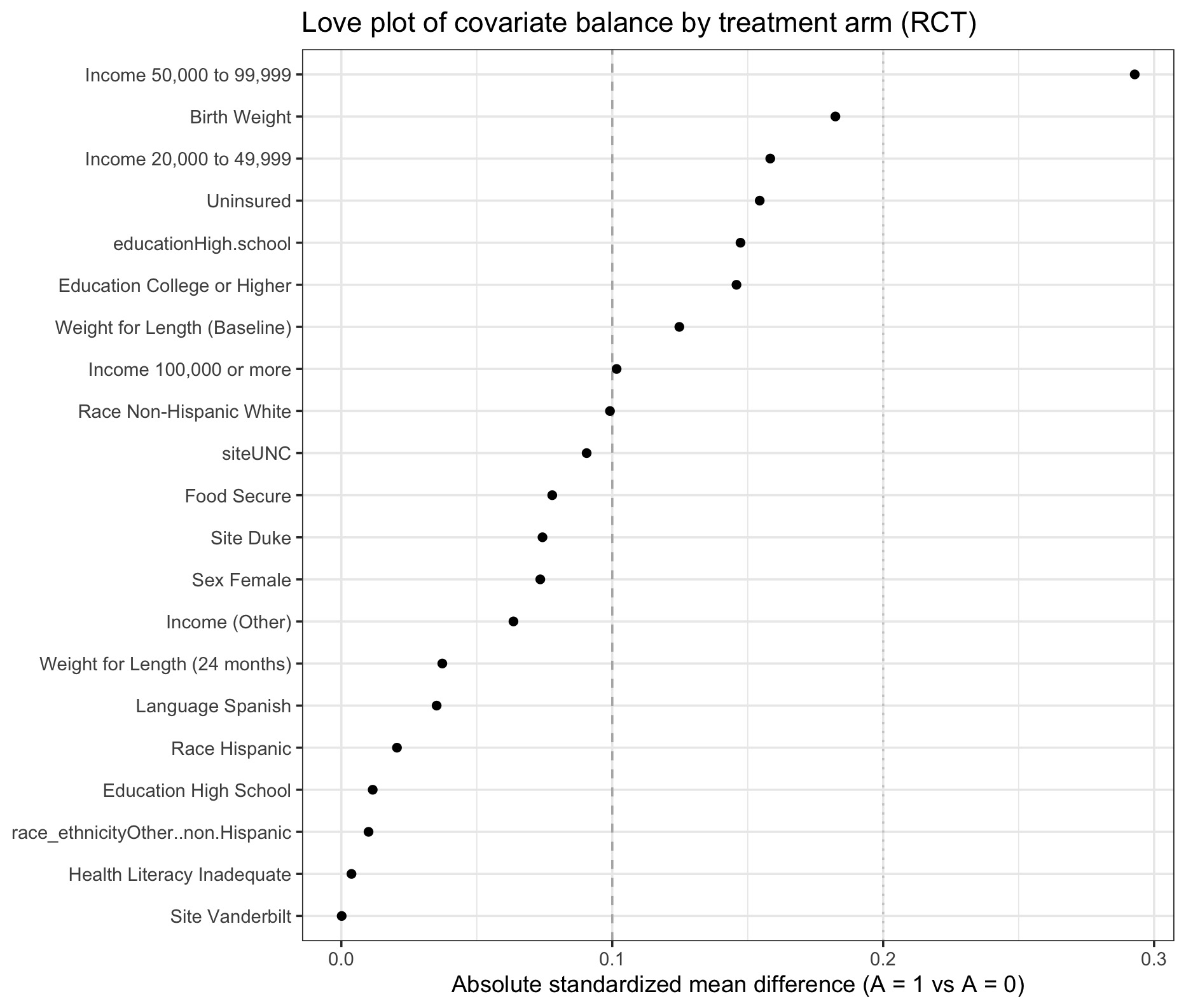}
    \caption{Loveplot of Covariate Balance}
    \label{fig:loveplot}
\end{figure}

Overall, covariate balance was excellent.  The vast majority of
baseline covariates had absolute SMDs well below $0.10$, a common
benchmark for negligible imbalance.  The only clearly pronounced
difference was for the income category ``\$50{,}000-\$99{,}999'',
which exhibited an absolute SMD of approximately $0.30$; most remaining
covariates were below $0.15$.  In particular, the collapsed insurance
indicator (Uninsured vs.\ Insured) showed a modest absolute SMD slightly above $0.15$. This indicates that
randomization achieved a moderate balance on insurance and, and that any efficiency gains from MR‐OSCAR
may be driven by correcting a sizable treatment‐arm imbalance in
insurance. Rather, as discussed below, the imputation step in MR‐OSCAR
acts primarily to stabilize a relatively sparse and site‐heterogeneous
insurance signal when incorporating EHR information into the CATE
estimation. Moreover, insurance is a relatively sparse and heterogeneous covariate:
only a small fraction of children are uninsured, and the uninsured
counts are unevenly distributed across sites (e.g., very few uninsured
children at Duke and UNC).
When R-OSCAR includes the full insurance factor $V$ (and its associated
basis functions) in the RCT outcome and calibration regressions, the
post-LASSO OLS steps must estimate coefficients for several small cells,
leading to unstable estimates and inflated sampling variance.

MR-OSCAR handles insurance differently.
It first fits an imputation model $g(Z) \approx \E[V \mid Z]$ in the
large EHR control sample and then propagates only the smoothed
prediction $\widehat V = g(Z)$ into the RCT outcome and calibration
models.
This construction pools information across sites and shrinks away
idiosyncratic noise and coding irregularities in the observed insurance
variable, effectively replacing a sparse binary factor by a stable,
continuous risk score.
The bias incurred by using $\widehat V$ instead of $V$ is negligible,
whereas the variance reduction from smoothing can be substantial.
This variance-bias tradeoff explains why, in the GPS application,
MR-OSCAR attains shorter CATE intervals than oracle R-OSCAR despite
not observing $V$ directly in the RCT.

\subsection*{Error bounds for SR-OSCAR on the shared covariate space}
\label{subsec:theory-sroscar}

We first consider the shared-only estimator SR-OSCAR.
Let \(\hat\tau_{\text{SR}}\) be the CATE estimator constructed in
Section~2.1.1, and suppose the last-stage calibrator
\(\hat\delta^{\mathrm{sh}}\) in \eqref{eq:shared-stage2-calib} is fit in a
class \(\cD\) with complexity measured by \(\Rad_{n^r}(\cD)\). Define the arm-specific marginalized RCT CMOs
\[
  \tilde\mu^r_a(\bm x^r)
    := \E\bigl[
          \mu^r_a(\bm x^r,\bV)
        \mid \bX^r=\bm x^r, S = r
       \bigr],
\]
and let \(m_{\bZ,a}(\bX^r)\) denote the augmentation component used by
SR-OSCAR for arm \(a\) (the calibrated OS prediction entering the CMO).
We measure the {augmentation error} by
\[
  \Delta_2^2(m_{\bZ,a},\tilde\mu^r_a\mid\bX^r)
    := \E\Bigl[
          \bigl\{
            m_{\bZ,a}(\bX^r)
            - \tilde\mu^r_a(\bX^r)
          \bigr\}^2 \mid S = r
        \Bigr].
\]

\begin{theorem}[Risk bound for SR-OSCAR]
\label{thm:sroscar}
Suppose Assumptions~\ref{as:ident} and \ref{as:shift}-\ref{as:rates}
hold, and the nuisance estimators are obtained with cross-fitting across
folds.
Then there exists a constant \(C>0\) such that, with probability at
least \(1-\gamma\) for all \(\gamma\in(0,1)\),
\begin{align} \label{eq:sroscar-bound}
\Delta_2^2(\hat\tau_{\text{SR}},\tau^r)
&\le
\Delta_2^2(\cD,\tau^r)
+
C\Biggl[
    B_{\mathrm{sh}}^2
    + \Rad_{n^r}^2(\cD)
\\ \nonumber
&\qquad
    + \sum_{a\in\{-1,1\}}
      \Bigl(
        \Rad_{n^o}^2(\cM^{o,\mathrm{sh}}_{a})
        +
        \Rad_{n^r}^2(\cD^{\mathrm{sh}}_{a})
      \Bigr)
\Biggr]
+ C\,\frac{\log(1/\gamma)}{n^r}.
\end{align}
\end{theorem}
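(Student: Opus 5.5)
The bound follows the same two-layer route as for MR-OSCAR. First we bound the risk of the final pseudo-outcome regression in terms of the augmentation error; then we decompose the augmentation error of SR-OSCAR into estimation terms plus the structural penalty $B_{\mathrm{sh}}^2$.

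\emph{First layer (final calibration).} Condition on the cross-fitted nuisances, i.e.\ on the folds used to build $\hat\mu^{o,\mathrm{sh}}_a$, $\hat\delta^{\mathrm{sh}}_a$ and hence $m_{\bZ,a}$. The pseudo-outcome $\tau_{m}(\bX^r,A,Y)$ with $m=\htmu^{\mathrm{sh}}=\sum_a\pi^r_{-a}m_{\bZ,a}$ satisfies, by Assumption~\ref{as:ident}, $\E[\tau_m\mid\bX^r,S=r]=\tau^r(\bX^r)$ for any fixed $m$. So the final regression \eqref{eq:shared-stage2-calib}, with offset $\httau^{\mathrm{sh}}$, is an ERM over the shifted class $\httau^{\mathrm{sh}}+\cD$ with unbiased but noisy responses. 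Two ingredients then apply:
\begin{itemize}
\item the localized Rademacher argument of \citet{bartlett2006local}, as used for Theorem~\ref{thm:racer} in \citet{asiaee2023leveraging};
\item a Talagrand-type concentration step, which contributes the $\log(1/\gamma)/n^r$ term.
\end{itemize}
The noise variance of $\tau_m$ exceeds its minimum (attained at $m=\tilde\mu^r$) by at most $\rho^{-2}\sum_a\Delta_2^2(m_{\bZ,a},\tilde\mu^r_a\mid\bX^r)$, using positivity. Combining these gives
\[
\Delta_2^2(\hat\tau_{\text{SR}},\tau^r)\lesssim\Delta_2^2(\cD,\tau^r)+\Bigl(1+\sum_a\Delta_2^2(m_{\bZ,a},\tilde\mu^r_a\mid\bX^r)\Bigr)\Rad_{n^r}(\cD)+\frac{\log(1/\gamma)}{n^r}.
\]
This is the SR analogue of the augmentation layer; we reuse it verbatim from the RACER proof with $\hat\mu^r_a$ replaced by $m_{\bZ,a}$.

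\emph{Second layer (augmentation error).} For each arm $a$, let $h^\star_a=m^\star_a+d^\star_a\in\cH^{\mathrm{sh}}_a$ attain (or nearly attain) the infimum in \eqref{eq:BZ-def}. Take $m^\star_a$ to be the $L_2$ projection target of the OS regression \eqref{eq:z-only-outcome} in $\cM^{o,\mathrm{sh}}_a$, and $d^\star_a$ the corresponding best discrepancy in $\cD^{\mathrm{sh}}_a$. The triangle inequality then gives
\[
\|m_{\bZ,a}-\tilde\mu^r_a\|^2\le 3\|\hat\mu^{o,\mathrm{sh}}_a-m^\star_a\|^2+3\|\hat\delta^{\mathrm{sh}}_a-\delta^{\mathrm{sh},\star}_a\|^2+3\|h^\star_a-\tilde\mu^r_a\|^2 .
\]
We bound the three pieces as follows.
\begin{itemize}
\item The first piece is $\lesssim\Rad^2_{n^o}(\cM^{o,\mathrm{sh}}_a)$ by Assumption~\ref{as:rates}. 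Moving it from the OS to the RCT $\bZ$-distribution costs a constant under bounded density ratio; this is absorbed into $C$.
\item The third piece, summed over $a$, is exactly $B^2_{\mathrm{sh}}$.
\item The second piece needs care, because the discrepancy is fit to RCT residuals $Y-\hat\mu^{o,\mathrm{sh}}_a(\bZ)$ rather than $Y-m^\star_a(\bZ)$. We compare $\hat\delta^{\mathrm{sh}}_a$ with the ERM target $\delta^{\mathrm{sh},\star}_a=\arg\min_{d}\|\tilde\mu^r_a-\hat\mu^{o,\mathrm{sh}}_a-d\|$. A basic-inequality argument gives $\|\hat\delta^{\mathrm{sh}}_a-\delta^{\mathrm{sh},\star}_a\|^2\lesssim\Rad^2_{n^r}(\cD^{\mathrm{sh}}_a)$, using sample splitting so that $\hat\mu^{o,\mathrm{sh}}_a$ is fixed. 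Then $\|\hat\mu^{o,\mathrm{sh}}_a+\delta^{\mathrm{sh},\star}_a-\tilde\mu^r_a\|\le\|\hat\mu^{o,\mathrm{sh}}_a+d^\star_a-\tilde\mu^r_a\|$, which is bounded by the first piece plus $\|h^\star_a-\tilde\mu^r_a\|$.
\end{itemize}
Assumption~\ref{as:shift} is used only to guarantee that $\cD^{\mathrm{sh}}_a$ has controlled complexity, so the rate assumption applies.

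\emph{Combining the layers and the main obstacle.} Substituting the second layer into the first yields a product $(1+\text{aug})\Rad_{n^r}(\cD)$. We convert it to the additive form \eqref{eq:sroscar-bound} via $x\,\Rad\le x^2/2+\Rad^2/2$, together with $\Rad_{n^r}(\cD)\lesssim1$ to drop cross terms. For the leading $\Rad_{n^r}(\cD)$ we use the localized (fixed-point) version, which is of order $\Rad^2_{n^r}(\cD)$; this matches the convention of Theorem~\ref{thm:racer}. A union bound over the cross-fitting folds adjusts $\gamma$ by a constant.

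The hardest step is the first layer with an estimated augmentation. The conditional noise of $\tau_m$ is heavy-tailed only through $1/\pi$, which is bounded by $\rho^{-1}$. The real difficulty is making the variance-inflation-times-complexity bound hold with high probability rather than in expectation. My plan is to obtain this from Bernstein's inequality applied conditionally on the nuisance folds, assuming bounded outcomes.
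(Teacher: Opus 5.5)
Your architecture is the paper's: the augmentation layer \eqref{eq:sroscar-aug-layer}, then the decomposition of Proposition~\ref{prop:sroscar-aug}, then combining the two. The gap is in your second layer.

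\textbf{Second layer: incompatible oracle choices.} You make two choices that cannot both hold in general:
\begin{itemize}
\item you pin $m^\star_a$ to the OS projection target in $\cM^{o,\mathrm{sh}}_a$, which is needed so that $\|\hat\mu^{o,\mathrm{sh}}_a-m^\star_a\|^2$ is a genuine $\Rad^2_{n^o}$ rate;
\item you assert that $h^\star_a=m^\star_a+d^\star_a$ attains the infimum in \eqref{eq:BZ-def}, which is needed so that the third piece is exactly $B_{\mathrm{sh}}^2$.
\end{itemize}
Once $m^\star_a$ is fixed, your argument can only deliver $\inf_{d\in\cD^{\mathrm{sh}}_a}\|m^\star_a+d-\tilde\mu^r_a\|^2$. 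This can strictly exceed the arm-$a$ part of $B_{\mathrm{sh}}^2$, because $B_{\mathrm{sh}}^2$ also optimizes over $m$. This is not an edge case. $\cD^{\mathrm{sh}}_a$ is meant to be a small, penalized correction class. Under outcome shift the OS projection can be far from $\tilde\mu^r_a$ while another element of $\cM^{o,\mathrm{sh}}_a$ is close.

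The chain $\|m_{\bZ,a}-\tilde\mu^r_a\|\le\|\hat\delta^{\mathrm{sh}}_a-\delta^{\mathrm{sh},\star}_a\|+\|\hat\mu^{o,\mathrm{sh}}_a-m^\star_a\|+\|m^\star_a+d^\star_a-\tilde\mu^r_a\|$ in your third bullet is sound for convex $\cD^{\mathrm{sh}}_a$. It should replace your displayed three-term inequality, which is not an exact decomposition when $d^\star_a\neq\delta^{\mathrm{sh},\star}_a$. But to reach $B_{\mathrm{sh}}^2$ you need one of the following:
\begin{itemize}
\item an extra condition such as $m^\star_a+\cD^{\mathrm{sh}}_a\supseteq\cH^{\mathrm{sh}}_a$ (for example, $\cD^{\mathrm{sh}}_a$ a linear space containing $\cM^{o,\mathrm{sh}}_a$, viewed as functions of $\bX^r$);
\item a restated bound with this conditional residual in place of $B_{\mathrm{sh}}^2$;
\item the paper's route.
\end{itemize}
The paper's proof of Proposition~\ref{prop:sroscar-aug} takes $h^\star_a$ to be the $B_{\mathrm{sh}}^2$ near-minimizer. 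It then reads Assumption~\ref{as:rates} as directly giving convergence of $\hat\mu^{o,\mathrm{sh}}_a$ and $\hat\delta^{\mathrm{sh}}_a$ to its two components. Under that reading your OS-projection detour is unnecessary, but this is exactly where the assumption carries the weight. The bounded density ratio between $P(\bZ\mid S=o)$ and $P(\bZ\mid S=r)$ that you invoke is also a hypothesis not in the theorem.

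\textbf{First layer: the offset.} You correctly treat the final step as ERM over $\httau^{\mathrm{sh}}+\cD$. Its approximation term is then $\inf_{d\in\cD}\|\httau^{\mathrm{sh}}+d-\tau^r\|^2$, not $\Delta_2^2(\cD,\tau^r)$. The two coincide if, for example, $\cD$ is a linear space containing $\httau^{\mathrm{sh}}$. Reusing the RACER proof verbatim does not cover this, since RACER has no offset; the paper simply treats the final step as ERM over $\cD$.

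\textbf{First layer: comparison with the paper.} Apart from the offset, your first layer is a genuinely different and somewhat stronger route than the one the paper writes out in the proof of Theorem~\ref{thm:mroscar}, which the SR case mirrors.
\begin{itemize}
\item The paper uses an additive perturbation, $\Delta_2^2(\hat\tau,\tau^r)\le 2\Delta_2^2(\hat\tau^{\mathrm{or}},\tau^r)+C\,\E[(\hat\phi-\phi^\star)^2\mid S=r]$. This gives the additive form at once but puts a constant in front of the approximation error.
\item You use the fact that $\tau_m$ is conditionally unbiased for every fixed $m$, so the nuisances only inflate the noise variance. Your bound of $\rho^{-2}\sum_a\Delta_2^2(m_{\bZ,a},\tilde\mu^r_a\mid\bX^r)$ on that inflation is correct.
\item Via $\mathcal L(f)-\mathcal L(\tau^r)=\|f-\tau^r\|^2$, your route can keep coefficient one on the approximation term. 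The price is a high-probability, variance-sensitive localized argument, which you rightly identify as the hard step.
\item You then discard the multiplicative gain through $\Rad_{n^r}(\cD)\lesssim 1$, which is fine for matching the stated bound.
\end{itemize}
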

The first term in \eqref{eq:sroscar-bound} is the approximation error
of the CATE calibration class \(\cD\), which is shared across RACER,
SR-OSCAR, and MR-OSCAR.
The second group of terms is purely statistical: it depends on the
complexities of the RCT calibration class \(\cD\), the OS outcome
classes \(\cM^{o,\mathrm{sh}}_{a}\), and the arm-wise RCT discrepancy classes
\(\cD^{\mathrm{sh}}_{a}\).
The mismatch penalty \(B_{\mathrm{sh}}^2\) in \eqref{eq:BZ-def} captures the
irreducible error induced by restricting to the shared space \(\bZ\):
even with infinite data and perfect optimization, SR-OSCAR cannot
improve beyond this term if important effect modifiers reside in
\(\bV\) that cannot be projected onto \(\bZ\).
Finally, the \(\log(1/\gamma)/n^r\) term is a standard residual due to
concentration.

\paragraph{Augmentation-error decomposition and localized rates.}

The proof of Theorem~\ref{thm:sroscar} proceeds in two conceptually
separate steps, mirroring the analysis in
\citet{asiaee2023leveraging}.
First, one shows that the CATE risk of SR-OSCAR can be bounded in
terms of the augmentation errors:
\begin{equation}
  \Delta_2^2(\hat\tau_{\text{SR}},\tau^r)
  \;\lesssim\;
  \Delta_2^2(\cD,\tau^r)
  +
  \Biggl(
    1 + \sum_{a\in\{-1,1\}}
            \Delta_2^2(m_{\bZ,a},\tilde\mu^r_a\mid\bX^r)
  \Biggr)\Rad_{n^r}(\cD),
  \label{eq:sroscar-aug-layer}
\end{equation}
up to negligible residual terms.
Second, one bounds the augmentation errors themselves in terms of the
OS/RCT nuisance complexities and the mismatch penalty \(B_{\mathrm{sh}}^2\).

\begin{proposition}[Augmentation error for shared-only borrowing]
\label{prop:sroscar-aug}
Under the conditions of Theorem~\ref{thm:sroscar}, there exist
constants \(C<\infty\), \(\eta_o>0\), and \(\eta_r>0\) such that, for
each arm \(a\),
\begin{equation}
  \Delta_2^2(m_{\bZ,a},\tilde\mu^r_a\mid\bX^r)
  \;\lesssim\;
  \frac{c(\cM^{o,\mathrm{sh}}_{a})}{(n^o)^{\eta_o}}
  +
  \frac{c(\cD^{\mathrm{sh}}_{a})}{(n^r)^{\eta_r}}
  +
  B_{\mathrm{sh}}^2,
  \label{eq:sroscar-aug-bound}
\end{equation}
where \(c(\cM^{o,\mathrm{sh}}_{a})\) and \(c(\cD^{\mathrm{sh}}_{a})\) are localized
complexities of the OS outcome and RCT discrepancy classes, respectively.
\end{proposition}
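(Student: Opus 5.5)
We prove Proposition~\ref{prop:sroscar-aug} with a standard oracle-inequality argument for a two-stage estimator. For arm \(a\), the SR-OSCAR augmentation is \(m_{\bZ,a}(\bm x^r)=\hat\mu^{o,\mathrm{sh}}_a(\bm z)+\hat\delta^{\mathrm{sh}}_a(\bm x^r)\). Let \(h^\star_a=m^\star_a+d^\star_a\in\cH^{\mathrm{sh}}_a\) attain (or nearly attain) the infimum in \eqref{eq:BZ-def}, so that \(\E[(h^\star_a-\tilde\mu^r_a)^2\mid S=r]\) is at most the \(a\)-th summand of \(B_{\mathrm{sh}}^2\). The elementary inequality \((x+y)^2\le 2x^2+2y^2\) gives
\[
\Delta_2^2(m_{\bZ,a},\tilde\mu^r_a\mid\bX^r)\le 2\,\|m_{\bZ,a}-h^\star_a\|_{2,r}^2+2B_{\mathrm{sh}}^2 .
\]
It therefore suffices to bound the estimation error \(\|m_{\bZ,a}-h^\star_a\|_{2,r}^2\) by the two localized complexity terms.

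\textbf{Step 1: OS stage.} Let \(\bar m_a\in\cM^{o,\mathrm{sh}}_a\) be the population target of the OS regression \eqref{eq:z-only-outcome}, i.e.\ the best \(L_2(P^o)\) approximation of \(\bm z\mapsto\E[Y\mid\bZ=\bm z,A=a,S=o]\). Assumption~\ref{as:rates} with \eqref{eq:localized-generic} gives
\[
\|\hat\mu^{o,\mathrm{sh}}_a-\bar m_a\|_{2,o}^2\lesssim \frac{c(\cM^{o,\mathrm{sh}}_a)}{(n^o)^{\eta_o}} .
\]
Converting this from the OS to the RCT norm on \(\bZ\) requires a bounded density ratio or overlap condition between \(P(\bZ\mid S=r)\) and \(P(\bZ\mid S=o)\). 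This condition is implicit in the paper's remark after the definition of \(r_{\mathrm{im}}^2\), and we absorb the resulting constant into \(\lesssim\).

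\textbf{Step 2: calibration stage.} Because of cross-fitting, we condition on \(\hat\mu^{o,\mathrm{sh}}_a\) and treat \eqref{eq:z-only-discrepancy} as an ERM in \(\cD^{\mathrm{sh}}_a\) with offset. Its population target is \(\bar d_a=\arg\min_{d\in\cD^{\mathrm{sh}}_a}\|\tilde\mu^r_a-\hat\mu^{o,\mathrm{sh}}_a-d\|_{2,r}\). Since \(\E[Y\mid\bX^r,A=a,S=r]=\tilde\mu^r_a\) by randomization, the residual noise is conditionally mean zero. The localized bound then gives \(\|\hat\delta^{\mathrm{sh}}_a-\bar d_a\|_{2,r}^2\lesssim c(\cD^{\mathrm{sh}}_a)/(n^r)^{\eta_r}\).

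\textbf{Step 3: misspecified target.} We must compare \(\hat\mu^{o,\mathrm{sh}}_a+\bar d_a\) with \(h^\star_a\). By definition of \(\bar d_a\), the function \(d'=\bar m_a-\hat\mu^{o,\mathrm{sh}}_a+d^\star_a\) would be a feasible competitor and would yield exactly \(h^\star_a\). This works only if \(\cD^{\mathrm{sh}}_a\) absorbs functions of \(\bZ\) from \(\cM^{o,\mathrm{sh}}_a\), for example when both classes are (sparse) linear. In that case
\[
\|\hat\mu^{o,\mathrm{sh}}_a+\bar d_a-\tilde\mu^r_a\|_{2,r}\le\|h^\star_a-\tilde\mu^r_a\|_{2,r}+\|\hat\mu^{o,\mathrm{sh}}_a-\bar m_a\|_{2,r} .
\]
If \(\cD^{\mathrm{sh}}_a\) is not closed under such shifts, we instead choose \(h^\star_a\) with \(m^\star_a=\bar m_a\). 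Then \(d'=d^\star_a+(\bar m_a-\hat\mu^{o,\mathrm{sh}}_a)\) differs from \(d^\star_a\in\cD^{\mathrm{sh}}_a\) only by the Step~1 error, and Assumption~\ref{as:shift} gives a controlled discrepancy class.

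Combining the three steps with the triangle inequality yields \eqref{eq:sroscar-aug-bound}. We expect Step~3 to be the main obstacle: rigorously justifying that the calibration target inherits the approximation quality of the best element of \(\cH^{\mathrm{sh}}_a\) despite a data-dependent offset, together with the OS-to-RCT change of measure. We plan to handle both through the structural assumptions stated above and absorb the resulting constants into \(\lesssim\).
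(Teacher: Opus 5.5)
Your opening reduction matches the paper's. Fix $h^\star_a=\mu^{o,\mathrm{sh},\star}_a+\delta^{\mathrm{sh},\star}_a\in\cH^{\mathrm{sh}}_a$ (nearly) attaining the infimum in \eqref{eq:BZ-def}, then split with $(x+y)^2\le 2x^2+2y^2$. The paper then finishes at once. A second split bounds $\|m_{\bZ,a}-h^\star_a\|_{2,r}^2$ by $2\|\hat\mu^{o,\mathrm{sh}}_a-\mu^{o,\mathrm{sh},\star}_a\|_{2,r}^2+2\|\hat\delta^{\mathrm{sh}}_a-\delta^{\mathrm{sh},\star}_a\|_{2,r}^2$. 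Assumption~\ref{as:rates} is then applied to each term, taking the oracle components themselves as targets and measuring error in the RCT norm. The OS population target $\bar m_a$, the offset-dependent target $\bar d_a$ and any OS-to-RCT change of measure never appear; the rate assumption absorbs them. Read that way, Assumption~\ref{as:rates} makes your Steps 1--3 unnecessary.

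As written, however, your Steps 1--3 do not close under the hypotheses of Theorem~\ref{thm:sroscar}. Step~1 needs a density-ratio condition that is not assumed. Step~2 applies \eqref{eq:localized-generic}, which is stated under correct specification, to the misspecified target $\bar d_a$; you would need, for example, convexity of $\cD^{\mathrm{sh}}_a$. Step~3 has two concrete problems:
\begin{enumerate}
\item \emph{Closed case.} Your competitor $d'=\bar m_a-\hat\mu^{o,\mathrm{sh}}_a+d^\star_a$ reproduces $\bar m_a+d^\star_a$, which equals $h^\star_a$ only if $m^\star_a=\bar m_a$. Your displayed inequality is therefore off by the population gap $\|m^\star_a-\bar m_a\|_{2,r}$, not by the Step~1 estimation error. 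The correct competitor is $d'=m^\star_a-\hat\mu^{o,\mathrm{sh}}_a+d^\star_a$. It gives $\|\hat\mu^{o,\mathrm{sh}}_a+\bar d_a-\tilde\mu^r_a\|_{2,r}\le\|h^\star_a-\tilde\mu^r_a\|_{2,r}$ directly, so Step~1 and the change of measure are not needed at all. The price is that $c(\cD^{\mathrm{sh}}_a)$ becomes the complexity of a class that absorbs differences of elements of $\cM^{o,\mathrm{sh}}_a$.
\item \emph{Non-closed fallback.} Re-choosing $h^\star_a$ with $m^\star_a=\bar m_a$ discards the optimality that ties $h^\star_a$ to $B_{\mathrm{sh}}^2$. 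The bias becomes $\inf_{d\in\cD^{\mathrm{sh}}_a}\|\bar m_a+d-\tilde\mu^r_a\|_{2,r}^2$. This can strictly exceed the $a$-th summand of $B_{\mathrm{sh}}^2$, because $\bar m_a$ is fitted to the OS regression under $P(\cdot\mid S=o)$, not to $\tilde\mu^r_a$ under the RCT law. Assumption~\ref{as:shift} does not rescue this: it concerns $\mu^r_a-\mu^o_a$ on the full space $\R^p$ and says nothing about which functions of $\bX^r$ lie in $\cD^{\mathrm{sh}}_a$.
\end{enumerate}
So you should either invoke Assumption~\ref{as:rates} directly on the components of $h^\star_a$, as the paper does, or state closure and convexity of $\cD^{\mathrm{sh}}_a$ as explicit extra hypotheses. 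The fallback branch as written does not deliver the stated bound with $B_{\mathrm{sh}}^2$.
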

Combining \eqref{eq:sroscar-aug-layer} with
\eqref{eq:sroscar-aug-bound} and the generic localized-complexity bound
\eqref{eq:localized-generic} yields a localized version of
\eqref{eq:sroscar-bound}:
\begin{equation}
  \Delta_2^2(\hat\tau_{\text{SR}},\tau^r)
  \;\lesssim\;
  \Delta_2^2(\cD,\tau^r)
  +
  B_{\mathrm{sh}}^2
  +
  \frac{c(\cD)}{(n^r)^{\eta_r}}
  +
  \sum_{a\in\{-1,1\}}
  \biggl\{
    \frac{c(\cM^{o,\mathrm{sh}}_{a})}{(n^o)^{\eta_o}}
    +
    \frac{c(\cD^{\mathrm{sh}}_{a})}{(n^r)^{\eta_r}}
  \biggr\}.
  \label{eq:sroscar-localized}
\end{equation}
This form closely parallels the RACER bound
\eqref{eq:racer-localized} but makes explicit the additional mismatch
penalty \(B_{\mathrm{sh}}^2\) and the role of the OS outcome and calibration
complexities. In particular, when the outcome depends weakly on OS-only covariates
(so that \(B_{\mathrm{sh}}^2\) is small) and the OS outcome models are
well-behaved (small \(c(\cM^{o,\mathrm{sh}}_{a})\)), SR-OSCAR can strictly
reduce the CATE risk relative to RACER by reducing the estimation error
without incurring appreciable mismatch penalty.
Conversely, when \(B_{\mathrm{sh}}^2\) is large, the bound makes explicit that
restriction to the shared space can dominate the potential gain from
borrowing.

\subsection*{Proofs of Theorems}

\begin{proof}[Proof of~\Cref{prop:mroscar-aug}]
Fix an arm \(a\). Let \(g^\star\in\cG\) be an oracle imputation map achieving
\(r_{\mathrm{im}}^2=\E[\|\bV-g^\star(\bZ)\|^2\mid S=r]\), and write \(\widehat\bV:=g^\star(\bZ)\).
The MR-OSCAR arm-specific calibrated predictor evaluated at the imputed covariates is
\[
  m_{\mathrm{aug},a}(\bX^r)
    := \hat\mu^{o,\mathrm{im}}_{a}(\bZ,\widehat\bV)
       + \hat\delta^{\mathrm{im}}_{a}(\bX^r,\widehat\bV).
\]
Let \(\mu^{o,\mathrm{im},\star}_{a}\in\cM^{o,\mathrm{im}}_{a}\) and
\(\delta^{\mathrm{im},\star}_{a}\in\cD^{\mathrm{im}}_{a}\) denote oracle targets for the two ERM steps (so that, under correct specification,
\(\mu^r_a(\bm x)=\mu^{o,\mathrm{im},\star}_{a}(\bm x^o)+\delta^{\mathrm{im},\star}_{a}(\bm x)\)).
Add and subtract the oracle predictor at \(\widehat\bV\):
\begin{align*}
  m_{\mathrm{aug},a}(\bX^r)-\tilde\mu^r_a(\bX^r)
  &=
  \{\hat\mu^{o,\mathrm{im}}_{a}-\mu^{o,\mathrm{im},\star}_{a}\}(\bZ,\widehat\bV)
  + \{\hat\delta^{\mathrm{im}}_{a}-\delta^{\mathrm{im},\star}_{a}\}(\bX^r,\widehat\bV) \\
  &\quad
  + \{\mu^r_a(\bX^r,\widehat\bV)-\tilde\mu^r_a(\bX^r)\}.
\end{align*}
By \((x+y+z)^2\le 3(x^2+y^2+z^2)\) and Assumption~\ref{as:rates}, the first two terms contribute at most
\(\lesssim c(\cM^{o,\mathrm{im}}_{a})/(n^o)^{\eta_o}\) and \(\lesssim c(\cD^{\mathrm{im}}_{a})/(n^r)^{\eta_r}\) to
\(\Delta_2^2(m_{\mathrm{aug},a},\tilde\mu^r_a\mid \bX^r)\).
For the last term, use Jensen’s inequality and Lipschitzness of \(\mu^r_a\) in the \(\bV\) block (Assumption~\ref{as:lipschitz}):
\[
  \bigl|\mu^r_a(\bX^r,\widehat\bV)-\tilde\mu^r_a(\bX^r)\bigr|
  =
  \Bigl|\E\bigl\{\mu^r_a(\bX^r,\widehat\bV)-\mu^r_a(\bX^r,\bV)\mid \bX^r,S=r\bigr\}\Bigr|
  \le L\,\E[\|\widehat\bV-\bV\|\mid \bX^r,S=r],
\]
which implies \(\E[\{\mu^r_a(\bX^r,\widehat\bV)-\tilde\mu^r_a(\bX^r)\}^2\mid S=r]\le L^2\,\E[\|\widehat\bV-\bV\|^2\mid S=r]=L^2 r_{\mathrm{im}}^2\).
Combining bounds yields \eqref{eq:mroscar-aug-bound}.
\end{proof}

\begin{proof}[Proof of~\Cref{prop:sroscar-aug}]
Fix an arm \(a\). In SR-OSCAR the arm-specific calibrated predictor is
\(
  m_{\bZ,a}(\bX^r)=\hat\mu^{o,\mathrm{sh}}_{a}(\bZ)+\hat\delta^{\mathrm{sh}}_{a}(\bX^r).
\)
Let \(h_a^\star\in\cH^{\mathrm{sh}}_{a}\) be an oracle element attaining (or nearly attaining) the infimum in the definition of \(B_{\mathrm{sh}}^2\), and write
\(h_a^\star(\bX^r)=\mu^{o,\mathrm{sh},\star}_{a}(\bZ)+\delta^{\mathrm{sh},\star}_{a}(\bX^r)\) with
\(\mu^{o,\mathrm{sh},\star}_{a}\in\cM^{o,\mathrm{sh}}_{a}\) and \(\delta^{\mathrm{sh},\star}_{a}\in\cD^{\mathrm{sh}}_{a}\).
Then
\begin{align*}
  \Delta_2^2(m_{\bZ,a},\tilde\mu^r_a\mid \bX^r)
  &\le
  2\,\E\bigl[(m_{\bZ,a}(\bX^r)-h_a^\star(\bX^r))^2\mid S=r\bigr]
  +2\,\E\bigl[(h_a^\star(\bX^r)-\tilde\mu^r_a(\bX^r))^2\mid S=r\bigr].
\end{align*}
The second term is bounded by the (single-arm) approximation residual defining \(B_{\mathrm{sh}}^2\), and hence by \(B_{\mathrm{sh}}^2\) itself.
For the first term, \((x+y)^2\le 2x^2+2y^2\) gives
\[
  \E[(m_{\bZ,a}-h_a^\star)^2\mid S=r]
  \le
  2\,\E[(\hat\mu^{o,\mathrm{sh}}_{a}(\bZ)-\mu^{o,\mathrm{sh},\star}_{a}(\bZ))^2\mid S=r]
  +2\,\E[(\hat\delta^{\mathrm{sh}}_{a}(\bX^r)-\delta^{\mathrm{sh},\star}_{a}(\bX^r))^2\mid S=r].
\]
By Assumption~\ref{as:rates} (equivalently \eqref{eq:localized-generic}), these two estimation errors are
\(\lesssim c(\cM^{o,\mathrm{sh}}_{a})/(n^o)^{\eta_o}\) and \(\lesssim c(\cD^{\mathrm{sh}}_{a})/(n^r)^{\eta_r}\).
Combining the pieces yields \eqref{eq:sroscar-aug-bound}.
\end{proof}

\begin{proof}[Proof of Theorem~\ref{thm:mroscar}]
We define \(\hat\phi_i\) as the doubly-robust pseudo-outcome for each RCT unit \(i\) as described in \eqref{eq:pseudo-outcome}. Recall that the MR-OSCAR estimator $\hat\tau_{\text{MR}}$ is obtained as
the empirical risk minimizer over $\cD$ of a squared-loss regression of
cross-fitted pseudo-outcomes $\hat\phi_i$ on \(\bX_i^r\) in the RCT.
Throughout the proof we condition on the splits used for cross-fitting.

\paragraph{Step 1: Oracle pseudo-outcomes and generic CATE bound.}

Let $\phi_i^\star$ denote the \emph{oracle} pseudo-outcome that would be
constructed if all nuisance components (outcome regressions,
discrepancies, imputation map) were known.  By
Assumption~\ref{as:ident}, the oracle CATE $\tau^r$ minimizes the
population squared error
$L(f) := \E\bigl\{ (\phi^\star - f(\bX^r))^2 \mid S = r \bigr\}$ over
$f\in L_2(P^r)$, and the excess CATE risk
$\Delta_2^2(\hat\tau_{\text{MR}},\tau^r)$ coincides (up to constants) with the
excess risk of regressing $\phi^\star$ on \(\bX^r\); see
\citet{asiaee2023leveraging} for a detailed derivation.

Let $\hat\tau^{\mathrm{or}}$ be the empirical risk minimizer in $\cD$
when regressing the oracle pseudo-outcomes $(\phi_i^\star)$ on the
cross-fitted RCT sample.  Standard ERM arguments with squared loss and
Rademacher complexity (\citealp{bartlett2006local}) then yield, for some constant $C>0$ and
all $\gamma\in(0,1)$,
\begin{equation}
  \Delta_2^2(\hat\tau^{\mathrm{or}},\tau^r)
  \;\le\;
  \Delta_2^2(\cD,\tau^r)
  +
  C\Bigl[
      \Rad_{n^r}^2(\cD)
      + \frac{\log(1/\gamma)}{n^r}
    \Bigr],
  \label{eq:mr-oracle-bound}
\end{equation}
with probability at least $1-\gamma/2$.
Here $\Delta_2^2(\cD,\tau^r)$ is the approximation error of $\cD$ and
$\Rad_{n^r}(\cD)$ is the empirical Rademacher complexity of $\cD$ under
the RCT distribution.

\paragraph{Step 2: Decomposing the effect of estimated pseudo-outcomes.}

The actual estimator $\hat\tau_{\text{MR}}$ minimizes the empirical risk
based on \emph{estimated} pseudo-outcomes $\hat\phi_i$, obtained by
plugging cross-fitted nuisance estimators into the oracle construction.
Let
\[
  \ell(f,\phi)
    := (\phi - f(\bX^r))^2
\]
denote the squared loss.  By a standard perturbation argument for
plug-in pseudo-outcome regressions, we have
\begin{align}
  \Delta_2^2(\hat\tau_{\text{MR}},\tau^r)
  &\le
  2\,\Delta_2^2(\hat\tau^{\mathrm{or}},\tau^r)
  +
  C\,\E\Bigl[
    \bigl(\hat\phi - \phi^\star\bigr)^2 \mid S = r
  \Bigr],
  \label{eq:mr-split-oracle}
\end{align}
for some universal constant $C>0$.
Intuitively, the first term is the excess risk due to restricting to the
class $\cD$ (already controlled in \eqref{eq:mr-oracle-bound}),
whereas the second term quantifies the extra noise introduced by using
estimated, rather than oracle, pseudo-outcomes.

The error between $\hat\phi$ and $\phi^\star$ can be expressed in terms
of the arm-specific augmentation functions that MR-OSCAR estimates.
Writing $m_{\mathrm{aug},a}(\bX^r)$ for the implicit augmentation
function for arm $a$ (as defined in the statement of
Proposition~\ref{prop:mroscar-aug}) and using the algebra of the
pseudo-outcome representation (exactly as in Equation~(7.23) of
\citealp{asiaee2023leveraging}), one obtains
\begin{equation}
  \E\Bigl[
    \bigl(\hat\phi - \phi^\star\bigr)^2 \mid S = r
  \Bigr]
  \;\lesssim\;
  \sum_{a\in\{-1,1\}}
    \Delta_2^2\bigl(
      m_{\mathrm{aug},a},
      \tilde\mu^r_a \mid \bX^r
    \bigr),
  \label{eq:mr-phi-vs-oracle}
\end{equation}
where $\tilde\mu^r_a$ is the arm-specific target appearing in the
augmentation layer.
Combining \eqref{eq:mr-split-oracle},
\eqref{eq:mr-oracle-bound}, and \eqref{eq:mr-phi-vs-oracle}, and
absorbing constants, yields
\begin{align}
  \Delta_2^2(\hat\tau_{\text{MR}},\tau^r)
  &\le
  C\Biggl[
    \Delta_2^2(\cD,\tau^r)
    + \Rad_{n^r}^2(\cD)
    + \frac{\log(1/\gamma)}{n^r}
    + \sum_{a\in\{-1,1\}}
        \Delta_2^2\bigl(
          m_{\mathrm{aug},a},
          \tilde\mu^r_a \mid \bX^r
        \bigr)
  \Biggr],
  \label{eq:mr-pre-prop}
\end{align}
with probability at least $1-\gamma$ (after a union bound over the
oracle and plug-in steps).

\paragraph{Step 3: Bounding the augmentation errors via
Proposition~\ref{prop:mroscar-aug}.}

Under the conditions of Theorem~\ref{thm:mroscar},
Proposition~\ref{prop:mroscar-aug} guarantees that, for each arm $a$,
\[
  \Delta_2^2\bigl(
    m_{\mathrm{aug},a},
    \tilde\mu^r_a \mid \bX^r
  \bigr)
  \;\lesssim\;
  \frac{c(\cM^{o,\mathrm{im}}_{a})}{(n^o)^{\eta_o}}
  +
  \frac{c(\cD^{\mathrm{im}}_{a})}{(n^r)^{\eta_r}}
  +
  L^2 r_{\mathrm{im}}^2,
\]
where $c(\cdot)$ and $\eta_o,\eta_r>0$ are localized complexity
constants associated with the corresponding function classes, and
$r_{\mathrm{im}}^2$ is the oracle imputation risk defined in
Section~\ref{subsec:theory-mroscar}.
Assumption~\ref{as:rates} states that the localized complexities can be
dominated (up to constants) by the squared Rademacher complexities of
the same classes, so that
\[
  \frac{c(\cM^{o,\mathrm{im}}_{a})}{(n^o)^{\eta_o}}
  \;\lesssim\;
  \Rad_{n^o}^2(\cM^{o,\mathrm{im}}_{a}),
  \qquad
  \frac{c(\cD^{\mathrm{im}}_{a})}{(n^r)^{\eta_r}}
  \;\lesssim\;
  \Rad_{n^r}^2(\cD^{\mathrm{im}}_{a}).
\]
Summing the bound in Proposition~\ref{prop:mroscar-aug} over
$a\in\{-1,1\}$ and substituting into \eqref{eq:mr-pre-prop} yields
\begin{align}
  \Delta_2^2(\hat\tau_{\text{MR}},\tau^r)
  \;\le\;
  C\Biggl[
    \Delta_2^2(\cD,\tau^r)
    + \Rad_{n^r}^2(\cD)
    + \frac{\log(1/\gamma)}{n^r}
    + L^2 r_{\mathrm{im}}^2
    + \sum_{a\in\{-1,1\}}
        \Bigl\{
          \Rad_{n^o}^2(\cM^{o,\mathrm{im}}_{a})
          +
          \Rad_{n^r}^2(\cD^{\mathrm{im}}_{a})
        \Bigr\}
  \Biggr],
  \label{eq:mr-after-prop}
\end{align}
again up to universal constants.

\paragraph{Step 4: Error from estimating the imputation map.}

So far $r_{\mathrm{im}}^2$ has been defined relative to the oracle imputation
map $g^\star\in\cG$.
In practice, MR-OSCAR uses $\hat g$, obtained by ERM (or penalized ERM)
in the same class $\cG$ on the OS sample with cross-fitting.
Assumption~\ref{as:rates} implies
\[
  \E\bigl[\|\hat g(\bZ) - g^\star(\bZ)\|^2 \mid S = o\bigr]
  \;\lesssim\;
  \Rad_{n^o}^2(\cG).
\]
By the Lipschitz condition in Assumption~\ref{as:lipschitz}, replacing
$g^\star$ by $\hat g$ in the construction of the pseudo-outcomes and
augmentation functions perturbs the latter by at most a constant
multiple of $\|\hat g(\bZ)-g^\star(\bZ)\|$.
Therefore the imputation-induced part of the augmentation error picks up
an additional term of order $\Rad_{n^o}^2(\cG)$, which can be absorbed
additively into the right-hand side of \eqref{eq:mr-after-prop}.

Collecting terms and re-absorbing universal constants into a single $C$
gives exactly the bound in \eqref{eq:mroscar-bound}:
\[
  \Delta_2^2(\hat\tau_{\text{MR}},\tau^r)
  \;\le\;
  \Delta_2^2(\cD,\tau^r)
  +
  C\Biggl[
      L^2 r_{\mathrm{im}}^2
      + \Rad_{n^r}^2(\cD)
      + \sum_{a\in\{-1,1\}}
          \Bigl(
            \Rad_{n^o}^2(\cM^{o,\mathrm{im}}_{a})
            +
            \Rad_{n^r}^2(\cD^{\mathrm{im}}_{a})
          \Bigr)
      + \Rad_{n^o}^2(\cG)
    \Biggr]
  + C\,\frac{\log(1/\gamma)}{n^r},
\]
with probability at least $1-\gamma$.
This completes the proof.
\end{proof}

\begin{proof}[Proof of~\Cref{thm:MR-linear}]
We have covariates $\bZ\in\mathbb R^{p_z}$, target to impute $\bV\in\mathbb R^{p_v}$. Linear-Gaussian relations in each source $s\in\{o,r\}$:
$$
  \bV^s \;=\; \bm\Lambda_s \bZ^s + \varepsilon^s,\qquad
  \mathbb E[\varepsilon^s\mid \bZ^s]=0,\quad
  \mathrm{Cov}(\varepsilon^s)=\bm\Sigma_{\bV\mid \bZ}^{\,s},\quad
  \mathrm{Cov}(\bZ^s)=\Sigma_{\bZ\bZ}^{\,s}.
  $$
We estimate $\bm\Lambda_o$ row-wise by LASSO on OS data and then impute in RCT as
  $\widehat \bV(\bZ^r) = \widehat{\bm\Lambda}_o\,\bZ^r$. The RCT mean-squared imputation error is
  $
  r_{\mathbf \bV}^2 \;:=\; \mathbb E_r\!\big[\;\|\widehat \bV(\bZ^r) - \bV^r\|_2^2\;\big].
  $ Let $\bm\Lambda_o^\star$ be the best row-sparse approximation to $\bm\Lambda_o$ (defined precisely below), and define the covariance transfer factor
  $$
  \kappa_{r\!\leftarrow\!o} \;:=\; \big\|\,\bm\Sigma_{\bZ\bZ}^{\,r}(\Sigma_{\bZ\bZ}^{\,o})^{-1}\,\big\|_{\mathrm{op}}
  \;=\; \big\|\,(\bm\Sigma_{\bZ\bZ}^{\,o})^{-1/2}\,\bm\Sigma_{\bZ\bZ}^{\,r}\,(\bm\Sigma_{\bZ\bZ}^{\,o})^{-1/2}\,\big\|_{\mathrm{op}}.
  $$
  Write $\bV^r=\bm\Lambda_r \bZ^r+\varepsilon^r$ and $\widehat {\bV}(\bZ^r)=\widehat{\bm\Lambda}_o \bZ^r$. Then,
$
\widehat {\bV}(\bZ^r)-\bV^r
\;=\;
(\widehat{\bm\Lambda}_o-\bm{\Lambda}_r)\bZ^r - \varepsilon^r.
$ Hence
$$
\begin{aligned}
r_{\mathrm{im}}^2
&= \mathbb E\!\left[\;\|(\widehat{\bm\Lambda}_o-\bm\Lambda_r)\bZ^r - \varepsilon^r\|_2^2 \mid S = r\;\right] \\
&= \mathbb E\!\left[\;\|(\widehat{\bm\Lambda}_o-\bm\Lambda_r)Z^r\|_2^2 \mid S = r\;\right]
	   \;-\;2\,\mathbb E\!\left[\; \langle (\widehat{\bm\Lambda}_o-\bm\Lambda_r)\bZ^r,\ \varepsilon^r\rangle \mid S = r\;\right]
	   \;+\;\mathbb E\!\left[\;\|\varepsilon^r\|_2^2 \mid S = r\;\right].
\end{aligned}
$$
Since $\mathbb E[\varepsilon^r\mid \bZ^r, S = r]=0$ and $\varepsilon^r\perp \bZ^r$, the cross term is zero:
$$
\mathbb E\!\left[\; \langle (\widehat{\bm\Lambda}_o-\bm\Lambda_r)\bZ^r,\ \varepsilon^r\rangle \mid S = r\;\right]
= \mathbb E_r\!\left[\, \mathbb E_r\big[\langle (\widehat{\bm\Lambda}_o-\bm\Lambda_r)\bZ^r,\ \varepsilon^r\rangle\mid \bZ^r\big] \,\right]=0.
$$
Also $\mathbb E[\|\varepsilon^r\|_2^2 \mid S = r]=\mathrm{tr}(\bm\Sigma_{\bV\mid \bZ}^{\,r})$. Therefore
$$
{\;
r_{\mathrm{im}}^2
\;=\;
\mathrm{tr}\!\big((\widehat{\bm\Lambda}_o-\bm\Lambda_r)\,\bm\Sigma_{\bZ\bZ}^{\,r}\,(\widehat{\bm\Lambda}_o-\bm\Lambda_r)^\mathrm{T}\big)
\;+\;
\mathrm{tr}(\bm\Sigma_{\bV\mid \bZ}^{\,r}).
\;}
$$
Next, we fix any row-sparse $\bm\Lambda_o^\star$ (we will choose it as the best $s$-sparse approximation to $\bm\Lambda_o$ in OS prediction norm). Define
$
\bm\Delta \;:=\; \widehat{\bm\Lambda}_o - \bm\Lambda_o^\star,\;
\bm B \;:=\; \bm\Lambda_o^\star - \bm\Lambda_r.
$ Then $\widehat{\bm\Lambda}_o - \bm\Lambda_r = \bm\Delta + \bm B$. Plugging it into the quadratic form, we get
$$
\begin{aligned}
\mathrm{tr}\!\Big((\widehat{\bm\Lambda}_o-\bm\Lambda_r)\,\bm\Sigma_{\bZ\bZ}^{\,r}\,(\widehat{\bm\Lambda}_o-\bm\Lambda_r)^\mathrm{T}\Big)
&= \mathrm{tr}\!\Big((\bm\Delta+\bm B)\,\bm \Sigma_{\bZ\bZ}^{\,r}\,(\bm\Delta+\bm B)^\mathrm{T}\Big) \\
&= \mathrm{tr}\!\big(\bm\Delta \bm\Sigma_r \bm\Delta^\mathrm{T}\big)
 + \mathrm{tr}\!\big(\bm B \bm\Sigma_r \bm B^\mathrm{T}\big)
 + 2\,\mathrm{tr}\!\big(\bm\Delta\bm\Sigma_r \bm B^\mathrm{T}\big),
\end{aligned}
$$
where $\bm\Sigma_r:=\bm\Sigma_{\bZ\bZ}^{\,r}$. Now applying Cauchy-Schwarz for the Frobenius inner product with the PSD weight $\bm\Sigma_r$, we get for any conformable matrices $A_1,A_2$,
$$
\mathrm{tr}(A_1\bm\Sigma_r A_2^\mathrm{T})
= \langle A_1\bm\Sigma_r^{1/2},\,A_2\bm\Sigma_r^{1/2}\rangle_F
\;\le\; \|A_1\bm\Sigma_r^{1/2}\|_F\,\|A_2\bm\Sigma_r^{1/2}\|_F
= \big\{\mathrm{tr}(A_1\bm\Sigma_r A_1^\mathrm{T})\,\mathrm{tr}(A_2\bm\Sigma_r A_2^\mathrm{T})\big\}^{1/2}.
$$
Therefore
$
2\,\mathrm{tr}(\bm\Delta\bm\Sigma_r \bm B^\mathrm{T})
\;\le\; 2\,\sqrt{\mathrm{tr}(\bm\Delta \bm\Sigma_r\Delta^\mathrm{T})\,\mathrm{tr}(\bm B \bm\Sigma_r \bm B^\mathrm{T})}
\;\le\;
\mathrm{tr}(\bm\Delta \bm\Sigma_r \bm\Delta^\mathrm{T})+\mathrm{tr}(\bm B \bm\Sigma_r \bm B^\mathrm{T}),
$ using $2ab\le a^2+b^2$. Hence
$$
\mathrm{tr}\!\Big((\widehat{\bm\Lambda}_o-\bm\Lambda_r)\bm\Sigma_r(\widehat{\bm\Lambda}_o-\bm\Lambda_r)^\mathrm{T}\Big)
\;\le\;
2\,\mathrm{tr}(\bm B\bm\Sigma_r \bm B^\mathrm{T})
\;+\;
2\,\mathrm{tr}(\bm\Delta \bm\Sigma_r \bm\Delta^\mathrm{T}).
$$
Combining, we have
$$
{\;
r_{\mathrm{im}}^2
\;\le\;
2\,\underbrace{\mathrm{tr}\!\big(\bm B \bm\Sigma_r \bm B^\mathrm{T}\big)}_{T_{\text{shift}}}
\;+\;
2\,\underbrace{\mathrm{tr}\!\big(\bm\Delta\bm\Sigma_r\bm\Delta^\mathrm{T}\big)}_{T_{\text{est}}}
\;+\;
\mathrm{tr}(\bm\Sigma_{\bV\mid \bZ}^{\,r}),
\;}
$$ where $\;T_{\text{shift}}=\mathrm{tr}\!\big((\bm\Lambda_o^\star-\bm\Lambda_r)\bm\Sigma_{\bZ\bZ}^{\,r}(\bm\Lambda_o^\star-\bm\Lambda_r)^\mathrm{T}\big)$,
and $T_{\text{est}}$ is the OS estimation contribution measured under the RCT covariance. Now, for any vector $x\neq 0$,
$$
\frac{x^\mathrm{T} \bm\Sigma_{\bZ\bZ}^{\,r} x}{x^\mathrm{T} \bm\Sigma_{\bZ\bZ}^{\,o} x}
\;=\;
\frac{y^\mathrm{T}\big((\bm\Sigma_{\bZ\bZ}^{\,o})^{-1/2}\bm\Sigma_{\bZ\bZ}^{\,r}(\bm\Sigma_{\bZ\bZ}^{\,o})^{-1/2}\big)y}{y^\mathrm{T} y}
\;\le\;
\big\|(\bm\Sigma_{\bZ\bZ}^{\,o})^{-1/2}\bm\Sigma_{ZZ}^{\,r}(\bm\Sigma_{\bZ\bZ}^{\,o})^{-1/2}\big\|_{\mathrm{op}}
\;=\;\kappa_{r\!\leftarrow\!o},
$$
with $y=(\bm\Sigma_{\bZ\bZ}^{\,o})^{1/2}x$. Thus $\bm\Sigma_{\bZ\bZ}^{\,r}\preceq \kappa_{r\!\leftarrow\!o}\,\bm\Sigma_{\bZ\bZ}^{\,o}$ (PSD order), which implies,
$$
\mathrm{tr}(\bm\Delta\bm\Sigma_{\bZ\bZ}^{\,r}\bm\Delta^\mathrm{T}) \;\le\; \kappa_{r\!\leftarrow\!o}\,\mathrm{tr}(\bm\Delta\Sigma_{ZZ}^{\,o}\bm\Delta^\mathrm{T}).
$$
Suppose, we estimate each row of $\Lambda_o$ by LASSO on OS. For the $j$-th response coordinate $V_j$,
$$
v^o_{\,j} = \bZ^o\lambda_{o,j} + \varepsilon^o_j,\qquad
\widehat\lambda_{o,j}\in\arg\min_{\beta\in\mathbb R^{p_z}}
\Big\{ \tfrac{1}{2\,n^o}\|v^o_{\,j}-\bZ^o\beta\|_2^2 + \lambda_j\|\beta\|_1\Big\}.
$$
Let $\lambda_{o,j}^\star$ be the best $s_j$-sparse approximation to $\lambda_{o,j}$ in the OS prediction norm:
$$
\lambda_{o,j}^\star\in\arg\min_{\|\beta\|_0\le s_j} (\beta-\lambda_{o,j})^\mathrm{T} \Sigma_{ZZ}^{\,o}(\beta-\lambda_{o,j}).
$$
Set $\Delta_j := \widehat\lambda_{o,j}-\lambda_{o,j}^\star$. Under the usual restricted eigenvalue (RE) condition holds on the support $S_j=\mathrm{supp}(\lambda_{o,j}^\star)$ with constant $\phi_0>0$, and for
$\lambda_j \asymp \sigma_{o,j}\sqrt{(2\log p_z)/n^o}$, where $\sigma_{o,j}^2:=(\Sigma_{\bV\mid \bZ}^{\,o})_{jj}$,
the standard LASSO theory (KKT + RE + Gaussian/sub-Gaussian tails) yields, with probability at least $1-c_1p_z^{-c_2}$,
$$
\Delta_j^\mathrm{T} \bm\Sigma_{\bZ\bZ}^{\,o}\,\Delta_j
 \;\le\;
C_0\,\frac{\lambda_j^2\,s_j}{\phi_0^2}
\;\le\;
C_1\,\frac{s_j\log p_z}{n^o}\,\sigma_{o,j}^2,
$$
where we used $\lambda_j^2 \asymp \sigma_{o,j}^2 (\log p_z)/n^o$ in the last step. Now stacking all rows such that $\Delta$ has rows $\Delta_j^\mathrm{T}$, we have
$$
\mathrm{tr}\!\big(\bm\Delta\bm\Sigma_{\bZ\bZ}^{\,o}\bm\Delta^\mathrm{T}\big)
= \sum_{j=1}^{p_v} \Delta_j^\mathrm{T} \bm\Sigma_{\bZ\bZ}^{\,o}\Delta_j
\;\le\;
C_1\,\frac{\log p_z}{n^o}\,\sum_{j=1}^{p_v} s_j\,\sigma_{o,j}^2.
$$
Renaming $C:=2C_1$ and combining all pieces, we finally have $$r_{\mathrm{im}}^2
\ \le\
2\,\underbrace{\mathrm{tr}\!\bigl((\bm\Lambda_o^\star-\bm\Lambda_r)\,\bm\Sigma_{\bZ\bZ}^{\,r}\,(\bm\Lambda_o^\star-\bm\Lambda_r)^\top\bigr)}_{\text{OS}\to\text{RCT mean-relation shift + sparsity approximation}}
\;+\;
\underbrace{C\,\kappa_{r\!\leftarrow\!o}\,\frac{\log p_z}{n^o}\,\sum_{j=1}^{p_v}s_j\,\sigma_{o,j}^2}_{\text{OS estimation error (sparse LASSO)}}
\;+\;
\underbrace{\mathrm{tr}(\bm\Sigma_{\bV\mid \bZ}^{\,r})}_{\text{irreducible RCT noise}}$$
\end{proof}

\end{document}